
\NeedsTeXFormat{LaTeX2e}

\documentclass{mscs}
\usepackage[doublespacing]{setspace}

\usepackage{amsmath}
\usepackage{lineno,hyperref}
\usepackage{cedilleverbatim}
\usepackage{proof}
\usepackage{amssymb}
\usepackage{MnSymbol}
\usepackage{stmaryrd}
\usepackage{tikz-cd}
\usepackage{mathtools}
\modulolinenumbers[5]

\newcommand{\paragraphb}[1]{\paragraph{\textbf{#1}}}

\newtheorem{theorem}{Theorem}
\newtheorem{proposition}[theorem]{Proposition}

\newtheorem{definition}[theorem]{Definition}

\mathchardef\mhyph="2D 
\newcommand{\abs}[4]{{#1}\, #2\!: \! #3.\, #4}
\newcommand{\absu}[3]{{#1}\, #2.\, #3}

\newcommand{\ann}[2]{\mathit{#1}\! : \! #2}
\newcommand{\tpcheck}[0]{\overset{\leftarrow}{\in}}
\newcommand{\tpsynth}[0]{\overset{\rightarrow}{\in}}
\newcommand{\leadstoc}[0]{\ensuremath{\leadsto_{\mathbf{n}}}}
\newcommand{\leadstocs}[0]{\ensuremath{\leadsto_{\mathbf{n}}^*}}
\newcommand{\nleadstoc}[0]{\ensuremath{\nleadsto_{\mathbf{n}}}}
\newcommand{\tpsynthleads}[0]{\ensuremath{\overset{\leadstocs}{\in}}}











\usepackage{natbib}
\bibliographystyle{apalike}

\title[Mathematical Structures in Computer Science]
{Monotone Recursive Types and Recursive Data Representations in Cedille}
\author[C. Jenkins and A. Stump]{
  Christopher Jenkins and Aaron Stump \\
  Computer Science, 14 MacLean Hall, The University of Iowa, Iowa City, Iowa, USA \\
  email: \texttt{\string{firstname\string}-\string{lastname\string}@uiowa.edu}
}
\shorttitle{Recursive Types and Data Representations in CDLE}

\pubyear{2019}
\pagerange{\pageref{firstpage}--\pageref{lastpage}}
\volume{00}
\doi{}

\begin{document}

\maketitle




\begin{abstract}
  Guided by Tarksi's fixpoint theorem in order theory, we show how to derive
  monotone recursive types with constant-time \textit{roll} and \textit{unroll}
  operations within Cedille, an impredicative, constructive, and logically
  consistent pure typed lambda calculus.
  This derivation takes place within the preorder on Cedille types induced by
  \emph{type inclusions}, a notion which is expressible within the theory
  itself.
  As applications, we use monotone recursive types to generically derive two
  recursive representations of data in lambda calculus, the Parigot and
  Scott encoding.
  For both encodings, we prove induction and examine the computational and
  extensional properties of their destructor, iterator, and primitive recursor
  in Cedille.
  For our Scott encoding in particular, we translate into Cedille a construction
  due to \citet{lepigre+19} that equips Scott naturals with primitive recursion,
  then extend this construction to derive a generic induction principle.
  This allows us to give efficient and provably unique (up to function
  extensionality) solutions for the iteration and primitive recursion schemes
  for Scott-encoded data.
\end{abstract}




\section{Introduction}
\label{sec:introduction}
In type theory and programming languages, recursive types \(\absu{\mu}{X}{T}\)
are types where the variable $X$ bound by $\mu$ in $T$ stands for the entire
type expression again.
The relationship between a recursive type and its one-step unrolling
\([\absu{\mu}{X}{T}/X]T\) (the substitution of \(\absu{\mu}{X}{T}\) for \(X\) in
\(T\)) is the basis for the important distinction of \emph{iso-} and
\emph{equi-recursive} types~\citep{Crary+99}~\citep[see also][Section~20.2]{pierce02}.
With iso-recursive types, the two types are related by constant-time functions
\(\mathit{unroll} : \absu{\mu}{X}{T} \to [\absu{\mu}{X}{T}/X]T\) and
\(\mathit{roll} : [\absu{\mu}{X}{T}/X]T \to \absu{\mu}{X}{T}\) which are mutual
inverses (composition of these two in either order produces a function that is
extensionally the identity function).
With equi-recursive types, the recursive type and its one-step unrolling are
considered definitionally equal, and \textit{unroll} and
\textit{roll} are not needed to pass between the two.

Without restrictions, adding recursive types as primitives
to an otherwise terminating theory allows the typing of diverging terms.
For example, let \(T\) be any type and let $B$ abbreviate $\absu{\mu}{X}{(X \to
  T)}$.
Then, we see that $B$ is equivalent to $B \to T$, allowing us to assign type $B
\to T$ to $\absu{\lambda}{x}{x\ x}$.
From that same type equivalence, we see that we may also
assign type $B$ to this term, allowing us to assign the type \(T\) to the
diverging term \(\Omega = (\absu{\lambda}{x}{x\ x})\ \absu{λ}{x}{x\ x}\).
This example shows that not only termination, but also soundness of the theory
when interpreted as a logic under the Curry-Howard
isomorphism~\citep{Sorensen06}, is lost when introducing unrestricted recursive
types (\(T\) here is arbitrary, so this would imply all types are inhabited).

The usual restriction on recursive types is to
require that to form (alternatively, to introduce or to eliminate) $\mu\ X.\ T$,
the variable $X$ must occur only positively in $T$, where the function-type
operator $\to$ preserves polarity in its codomain part and switches polarity in its
domain part.
For example, $X$ occurs only positively in $(X \to Y)\to Y$, while $Y$ occurs
both positively and negatively.
Since positivity is a syntactic condition, it is not compositional: if $X$
occurs positively in $T_1$ and in $T_2$, where \(T_2\) contains also the free
variable $Y$, this does not mean it will occur positively in $[T_1/Y]T_2$.
For example, take $T_1$ to be $X$ and $T_2$ to be $Y \to X$.

In search of a compositional restriction for ensuring termination in the
presence of recursive types, \citet{matthes98,matthes02} investigated monotone
iso-recursive types in a theory that requires evidence of
monotonicity equivalent to the following property of a type scheme $F$ (where
the center dot indicates application to a type):
\[
  \absu{∀}{X}{\absu{∀}{Y}{(X \to Y) \to F ·X \to F ·Y}}
\]

In Matthes's work, monotone recursive types are an addition to an underlying
type theory, and the resulting system must be analyzed anew for such properties
as subject reduction, confluence, and normalization. In the present paper, we
take a different approach by deriving monotone recursive types \emph{within an
existing type theory}, the Calculus of Dependent Lambda Eliminations
(CDLE)~\citep{stump17,stump18c}.
Given any type scheme \(F\) satisfying a form of monotonicity, we show how to
define a type \(\mathit{Rec} \cdot F\) together with constant-time rolling and unrolling
functions that witness the isomorphism between \(\mathit{Rec} \cdot F\) and \(F
\cdot (\mathit{Rec} \cdot F)\).
The definitions are carried out in Cedille, an implementation of
CDLE.\footnote{All code and proofs appearing in listings can be found in full at
  \url{https://github.com/cedille/cedille-developments/tree/master/recursive-representation-of-data}}
The main benefit to this approach is that the existing meta-theoretic results
for CDLE --- confluence, logical consistency, and normalization for a class
of types that includes ones defined here --- apply, since they hold globally and
hence perforce for the particular derivation of monotone recursive types.

\paragraphb{Recursive representations of data in typed lambda calculi.} One important
application of recursive types is their use in forming inductive datatypes,
especially within a pure typed lambda calculus where data must be encoded using
lambda expressions.
The most well-known method of lambda encoding is the \emph{Church
  encoding}, or \emph{iterative representation}, of data, which produces terms
typable in System F unextended by recursive types.
The main deficiency of Church-encoded data is
that data destructors, such as the predecessor function for naturals, can take
no better than linear time to compute~\citep{parigot89}.
Recursive types can be used to type lambda encodings with efficient data
destructors
\citep{SU99_Type-Fixpoints-Iteration-Recursion,SF16_Efficiency-of-Lambda-Encodings-in-Total-Type-Theory}. 

As practical applications of Cedille's derived recursive types, we generically
derive two \emph{recursive representations} of data described by 
\citet{parigot89,parigot1992}: the \emph{Scott encoding} and the \emph{Parigot
  encoding}.
While both encodings support efficient data destructors, the Parigot encoding
readily supports the primitive recursion scheme but suffers from exponential
representation size, whereas the Scott encoding has a linear representation
size but only readily supports the case-distinction scheme.
For both encodings, we derive the primitive recursion scheme and induction
principle.
That this can be done for the Scott encoding in CDLE is itself a
remarkable result that builds on the derivations by \citet{lepigre+19} and
\cite{parigot88} of a strongly normalizing recursor for Scott naturals in resp.\
a Curry style type theory with sophisticated subtyping and a logical framework.
To the best of our knowledge, the generic derivation for Scott-encoded data
is the second-ever demonstration of a lambda encoding with induction principles,
efficient data destructors, and linear-space representation (see
\cite{firsov18b} for the first).

\paragraphb{Overview of this paper.} We begin the remainder of this paper with
an introduction to CDLE (Section~\ref{sec:cdle}), before proceeding to the
derivation of monotone recursive types (Section~\ref{sec:rectypes}).
Our applications of recursive types in deriving lambda encodings with induction
follows a common structure.
This structure is outlined in Section~\ref{sec:datatypes-and-recursion-schemes},
where we elaborate on the connection between structured recursion schemes and
lambda encodings of datatypes using impredicative polymorphism, and we explain
the computational and extensional criteria we use to characterize
implementations of the iteration scheme for the Scott and Parigot encodings.
Section~\ref{sec:scott} covers the Scott encoding, introducing the
case-distinction scheme and giving first a concrete derivation of natural
numbers supporting proof by cases, then the fully generic derivation.
Section~\ref{sec:parigot} covers the Parigot encoding, introducing the primitive
recursion scheme and giving first a concrete example for Parigot-encoded
naturals with an induction principle, then the fully generic derivation.
Section~\ref{sec:lr} revisits the Scott encoding, showing concretely how to
derive primitive recursion for Scott naturals, then generalizing to the
derivation of the standard induction principle for generic Scott-encoded data.
Finally, Section~\ref{sec:related} discusses related work and
Section~\ref{sec:conclusion} concludes and discusses future work.

\section{Background: the Calculus of Dependent Lambda Eliminations}
\label{sec:cdle}

In this section, we review the Calculus of Dependent Lambda Eliminations (CDLE)
and its implementation in the Cedille programming
language~\citep{stump17,stump18c}.
CDLE is a logically consistent constructive type theory that contains as a
subsystem the impredicative and extrinsically typed Calculus of Constructions
(CC).
It is designed to serve as a tiny kernel theory for interactive theorem provers,
minimizing the trusted computing base.
CDLE can be described concisely in 20 typing rules and is implemented by
\emph{Cedille Core}~\citep{stump18b}, a type checker consisting of \(\sim\)1K
lines of Haskell.
For the sake of exposition, we present the less dense version of CDLE
implemented by Cedille, described by \citet{stump18c} (\citet{stump17} describes
an earlier version); here, we have slightly simplified the typing rule for the
\(\rho\) term construct.
 
To achieve compactness, CDLE is a pure typed lambda calculus.
In particular, it has no primitive constructs for inductive datatypes.
Instead, datatypes can be represented using lambda encodings.
\citet{geuvers01} showed that induction principles are not derivable for these in
second-order dependent type theory.
\citet{Stu18_From-Realizibility-to-Induction} showed how CDLE overcomes this
fundamental difficulty by extending CC with three new typing constructs: the
implicit products of~\citet{miquel01}; the dependent intersections
of~\citet{kopylov03}; and equality over untyped terms.
The formation rules for these new constructs are first shown in
Figure~\ref{fig:knd}, and their introduction and elimination rules are explained
in Section~\ref{sec:term-constructs}.

\subsection{Type and kind constructs}
There are two sorts of classifiers in CDLE.
Kinds \(\kappa\) classify type constructors, and types (type constructors of
kind \(\star\)) classify terms.
Figure~\ref{fig:knd} gives the inference rules for the judgment \(\Gamma \vdash
\kappa\) that kind \(\kappa\) is well-formed under context \(\Gamma\),
and for judgment \(\Gamma \vdash T \tpsynth \kappa\) that type constructor \(T\)
is well-formed and has kind \(\kappa\) under \(\Gamma\).
For brevity, we take these figures as implicitly specifying the grammar for
types and kinds.
In the inference rules, capture-avoiding substitution is written \([t/x]\) for
terms and \([T/X]\) for types, and convertibility of classifiers is notated with
\(\cong\).
The non-congruence rules for conversion for type constructors are given in
Figure~\ref{fig:conv} (the convertibility rules for kinds do not appear here as
they consist entirely of congruence rules).
In those rules, call-by-name reduction, written \(\leadstoc\) and
\(\leadstocs\) for its reflexive transitive closure, is used 
to reduce types to weak head normal form before checking convertibility
with the auxiliary relation \(\cong^{\text{t}}\), in which corresponding term
subexpressions are checked for \(\beta\eta\)-equivalence (written
\(=_{\beta\eta}\)), modulo erasure (see Figure~\ref{fig:eraser}).

\begin{figure}[!htbp]
  \[
    \begin{array}{c}
      \fbox{\(\Gamma \vdash \kappa\)}
      \\ \\
      \begin{array}{ccc}
        \infer{\Gamma \vdash \star}{\ }
        & \infer{\Gamma\vdash\abs{\Pi}{x}{T}{\kappa}}{\Gamma \vdash T \tpsynth
          \star & \Gamma,x:T\vdash\kappa}
        & \infer{\Gamma\vdash\abs{\Pi}{X}{\kappa'}{\kappa}}{\Gamma \vdash \kappa' & \Gamma,X:\kappa'\vdash\kappa}
      \end{array}
      \\ \\
      \fbox{\(\Gamma \vdash T \tpsynth \kappa\)}
      \\ \\
  \begin{array}{cc}
    \infer{\Gamma \vdash X \tpsynth \kappa}{(X : \kappa) \in \Gamma} &
    \infer{\Gamma\vdash \abs{\forall}{X}{\kappa}{T} \tpsynth \star}{\Gamma \vdash \kappa & \Gamma,X:\kappa\vdash T \tpsynth \star} \\ \\
    \infer{\Gamma\vdash\abs{\forall}{x}{T}{T'} \tpsynth \star}{\Gamma \vdash T \tpsynth \star & \Gamma,x:T\vdash T' \tpsynth \star} &
    \infer{\Gamma\vdash\abs{\Pi}{x}{T}{T'} \tpsynth \star}{\Gamma \vdash T \tpsynth \star & \Gamma,x:T\vdash T' \tpsynth \star} \\ \\
    \infer{\Gamma\vdash\abs{\lambda}{x}{T}{T'} \tpsynth \abs{\Pi}{x}{T}{\kappa}}{\Gamma \vdash T \tpsynth \star & \Gamma,x:T\vdash T'\tpsynth\kappa} &
    \infer{\Gamma\vdash\abs{\lambda}{X}{\kappa}{T} \tpsynth \abs{\Pi}{X}{\kappa}{\kappa'}}{\Gamma \vdash \kappa & \Gamma,X:\kappa\vdash T\tpsynth\kappa'} \\ \\
    \infer{\Gamma\vdash T\ t \tpsynth [t/x]\kappa}{\Gamma\vdash T \tpsynth \abs{\Pi}{x}{T'}{\kappa} & \Gamma\vdash t \tpcheck T'} &
    \infer{\Gamma\vdash T_1 \cdot T_2 \tpsynth [T_2/X]\kappa_1}
          {
           \begin{array}{c}
             \Gamma\vdash T_1 \tpsynth \abs{\Pi}{X}{\kappa_2}{\kappa_1}
             \\ \Gamma\vdash T_2 \tpsynth \kappa_2'
             \quad \kappa_2 \cong \kappa_2'
           \end{array}
          } \\ \\
    \infer{\Gamma\vdash\abs{\iota}{x}{T}{T'} \tpsynth \star}{\Gamma \vdash T \tpsynth \star & \Gamma,x:T\vdash T' \tpsynth \star} &
    \infer{\Gamma\vdash \{ t \simeq t' \} \tpsynth \star}{\textit{FV}(t\ t')\subseteq\textit{dom}(\Gamma)}
  \end{array}
    \end{array}
  \]
  \caption{Kind formation and kinding of types in CDLE}
  \label{fig:knd}
\end{figure}

\begin{figure}
  \[
    \begin{array}{c}
      \fbox{\(T_1 \cong T_2\)}
      \quad \fbox{\(T_1 \cong^{\text{t}} T_2\)}
      \\ \\
      \infer{
       T_1 \cong T_2
      }{
       T_1 \leadstocs T_1' \nleadstoc
       \quad T_2 \leadstocs T_2' \nleadstoc
       \quad T_1' \cong^{\text{t}} T_2'
      }
      \\ \\
      \begin{array}{ccc}
        \infer{
         X \cong^{\text{t}} X
        }{}
        & 
        \infer{
         T_1\ t_1 \cong^{\text{t}} T_2\ t_2
        }{
         T_1 \cong^{\text{t}} T_2 \quad |t_1| =_{\beta\eta} |t_2|
        }
        &
          \infer{
           \{ t_1 \simeq t_2 \} \cong^{\text{t}} \{ t_1'\ \simeq t_2' \}
          }{
           |t_1| =_{\beta\eta} |t_1'| \quad |t_2| =_{\beta\eta} |t_2'|
          }
      \end{array}
    \end{array}
  \]
  \caption{Non-congruence rules for classifier convertibility}
  \label{fig:conv}
\end{figure}

The kinds of CDLE are the same as those of CC: \(\star\) classifies types,
\(\abs{\Pi}{X}{\kappa_1}{\kappa_2}\) classifies type-level functions that
abstract over type constructors, and \(\abs{\Pi}{x}{T}{\kappa}\) classifies type-level
functions that abstract over terms.
The type constructs that CDLE inherits from CC are type variables,
(impredicative) type constructor quantification \(\abs{\forall}{X}{\kappa}{T}\), dependent
function (or \emph{product}) types \(\abs{\Pi}{x}{T}{T'}\), abstractions over terms
\(\abs{\lambda}{x}{T}{T'}\) and over type constructors
\(\abs{\lambda}{X}{\kappa}{T}\), and applications of type constructors to
terms \(T\ t\) and to other type constructors \(T_1 \cdot T_2\).

The additional type constructs are: types for dependent functions with erased
arguments (or \emph{implicit product types}) \(\abs{\forall}{x}{T}{T'}\);
dependent intersection types \(\abs{\iota}{x}{T}{T'}\); and equality types \(\{t
\simeq t'\}\).
Kinding for the first two of these follows the same format as kinding of
dependent function types, e.g., \(\abs{\forall}{x}{T}{T'}\) has kind \(\star\)
if \(T\) has kind \(\star\) and if \(T'\) has kind \(\star\) under a typing
context extended by the assumption \(\ann{x}{T}\).
For equality types, the only requirement for the type \(\{t \simeq t'\}\) to be
well-formed is that the free variables of both \(t\) and \(t'\) (written
\(\textit{FV}(t\ t')\)) are declared in the typing context.
Thus, the equality is \emph{untyped}, as neither \(t\) nor \(t'\) need be
typable.

\subsection{Term constructs}
\label{sec:term-constructs}
\begin{figure}
  \[
    \begin{array}{c}
      \fbox{\(\Gamma \vdash t \tpsynth T\)} \quad \fbox{\(\Gamma \vdash t
      \tpcheck T\)}
      \\ \\
      \Gamma \vdash t \tpsynthleads T\ =\ \exists T'.\ (\Gamma
      \vdash t \tpsynth T') \land (T' \leadstocs T)
      \\ \\ \\ 
  \begin{array}{cc}
    \infer{\Gamma\vdash x\tpsynth T}{(x : T)\in\Gamma} &
    \infer{\Gamma\vdash t\tpcheck T}{\Gamma\vdash t\tpsynth T' & T' \cong T} \\ \\
    \infer{\Gamma\vdash \absu{\lambda}{x}{t} \tpcheck T}{T \leadstocs \abs{\Pi}{x}{T_1}{T_2} & \Gamma,x:T_1\vdash t\tpcheck T_2} &
    \infer{\Gamma\vdash t\ t' \tpsynth [t'/x]T}{\Gamma\vdash t \tpsynthleads \abs{\Pi}{x}{T'}{T} & \Gamma\vdash t' \tpcheck T'} \\ \\

    \infer{\Gamma\vdash \absu{\Lambda}{X}{t} \tpcheck T'}
          {T' \leadstocs \abs{\forall}{X}{\kappa}{T} & \Gamma,X:\kappa\vdash t \tpcheck T} &
    \infer{\Gamma\vdash t \cdot T' \tpsynth [T'/X]T}
          {
           \begin{array}{c}
             \Gamma\vdash t \tpsynthleads \abs{\forall}{X}{\kappa}{T}
             \\ \Gamma\vdash T' \tpsynth \kappa'
             \quad \kappa'\cong\kappa
           \end{array}
          } \\ \\
    \infer{\Gamma \vdash \chi\ T\ \mhyph\ t\ \tpsynth T}
          {\Gamma \vdash T \tpsynth \star & \Gamma \vdash t \tpcheck T}
    &
    \infer{
     \Gamma \vdash [x ◂ T_1 = t_1]\ \mhyph\ t_2 \tpcheck T_2
    }{
    \begin{array}{c}
      \Gamma \vdash T_1 \tpsynth \star
      \\ \Gamma \vdash t_1 \tpcheck T_1
      \quad \Gamma,\ann{x}{T_1} \vdash t_2 \tpcheck T_2
    \end{array}
    }
  \end{array}
    \end{array}
  \]
\caption{Standard term constructs}
\label{fig:tp}
\end{figure}

Figure~\ref{fig:tp} gives the type inference rules for the standard term
constructs of CDLE.
These type inference rules, as well as those listed in
Figures~\ref{fig:cdle-implicit-product}, \ref{fig:cdle-dependent-intersection},
and \ref{fig:cdle-equality}, are \emph{bidirectional}~\citep[c.f.][]{pierce+00}:
judgment \(\Gamma \vdash t \tpsynth T\) indicates term \(t\) \emph{synthesizes}
type \(T\) under typing context \(\Gamma\) and judgment \(\Gamma \vdash t
\tpcheck T\) indicates \(t\) can be \emph{checked} against type \(T\).
These rules are to be read bottom-up as an algorithm for type inference, with
\(\Gamma\) and \(t\) considered inputs in both judgments and the type \(T\) an
output in the synthesis judgment and input in the checking judgment.
As is common for a bidirectional system, Cedille has a mechanism allowing
the user to ascribe a type annotation to a term: \(\chi\ T\ \mhyph\ t\)
synthesizes type \(T\) if \(T\) is a well-formed type of kind \(\star\) and
\(t\) can be checked against this type.
During type inference, types may be call-by-name reduced to weak head normal
form in order to reveal type constructors.
For brevity, we use the shorthand \(\Gamma \vdash t \tpsynthleads T\) (defined
formally near the top of Figure~\ref{fig:tp}) in some premises to indicate that
\(t\) synthesizes some type \(T'\) that reduces to \(T\).

We assume the reader is familiar with the type constructs of CDLE inherited from
CC.
Abstraction over types in terms is written \(\absu{\Lambda}{X}{t}\), and
application of terms to types (polymorphic type instantiation) is written \(t
\cdot T\).
In code listings, type arguments are sometimes omitted when Cedille can infer
these from the types of term arguments.
Local term definitions are given with
\[[x ◂ T_1 = t_1]\ \mhyph\ t_2\]
to be read ``let \(x\) of type \(T_1\) be \(t_1\) in \(t_2\),'' and global
definitions are given with \(x ◂ T = t .\) (ended with a period), where
\(t\) is checked against type \(T\).

\begin{figure}
  \[
    \begin{array}{lllllll}
      |x| & = & x &\ &
                       |\absu{\lambda}{x}{t}| & = & \absu{\lambda}{x}{|t|} \\
      |t\ t'| & = & |t|\ |t'| &\ &
                                   |t\cdot T| & = & |t| \\
      |\absu{\Lambda}{x}{t}| & = & |t| &\ &
                                            |t\ \mhyph t'| & = & |t| \\
      |[t , t']| & = & |t| &\ &
                                |t.1| & = & |t| \\
      |t.2| & = & |t| &\ &
                           |\beta\{t\}| & = & |t| \\
      |\rho\ t\ @x.T'\ \mhyph\ t'| & = & |t'| &\ &
      |\varphi\ t\ \mhyph\ t'\ \{t''\}| & = & |t''|
      \\ |\chi\ T\ \mhyph\ t| & = & |t| &\ &
      |\varsigma\ t| & = & |t|
      \\ |[x ◂ T_1 = t_1]\ \mhyph\ t_2| & = & (\absu{\lambda}{x}{|t_2|})\ |t_1|
      &\ & |\delta\ \mhyph\ t| & = & \absu{\lambda}{x}{x}
    \end{array}
  \]
  \caption{Erasure for annotated terms}
  \label{fig:eraser}
\end{figure}

In describing the new type constructs of CDLE, we make reference to the erasures of the
corresponding annotations for terms.
The full definition of the erasure function \(|-|\), which extracts an untyped
lambda calculus term from a term with type annotations, is given in
Figure~\ref{fig:eraser}.
For the term constructs of CC, type abstractions \(\absu{\Lambda}{X}{t}\) erase
to \(|t|\) and type applications \(t \cdot T\) erase to \(|t|\).
As a Curry-style theory, the convertibility relation of Cedille is
\(\beta\eta\)-conversion for untyped lambda calculus terms --- there is no
notion of reduction or conversion for the type-annotated language of terms.

\begin{figure}[h]
  \centering
  \[
    \begin{array}{cc}
      \infer{\Gamma\vdash \absu{\Lambda}{x}{t} \tpcheck T}
      {
        T \leadstocs \abs{\forall}{x}{T_1}{T_2}
        \quad \Gamma,x:T_1\vdash t \tpcheck T_2
        \quad x\not\in\textit{FV}(|t|)
      }
      &
        \infer{
         \Gamma\vdash t\ \mhyph t' \tpsynth [t'/x]T_2
        }{
        \Gamma\vdash t \tpsynthleads \abs{\forall}{x}{T_1}{T_2}
        \quad \Gamma\vdash t' \tpcheck T_1
        }
    \end{array}
  \]
  \caption{Implicit products}
  \label{fig:cdle-implicit-product}
\end{figure}

\paragraph{\textbf{The implicit product type} \(\abs{\forall}{x}{T_1}{T_2}\)} of
\citet{miquel01} (Figure~\ref{fig:cdle-implicit-product}) is the type for
functions which accept an erased (computationally irrelevant) input of type
\(T_1\) and produce a result of type \(T_2\).
Implicit products are introduced with \(\absu{\Lambda}{x}{t}\), and the type
inference rule is the same as for ordinary function abstractions except for the side
condition that \(x\) does not occur free in the erasure of the body \(t\).
Thus, the argument can play no computational role in the function and exists
solely for the purposes of typing.
The erasure of the introduction form is \(|t|\).
For application, if \(t\) has type \(\abs{\forall}{x}{T_1}{T_2}\) and \(t'\) has
type \(T_1\), then \(t\ \mhyph t'\) has type \([t'/x]T_2\) and erases to
\(|t|\).
When \(x\) is not free in \(\abs{\forall}{x}{T_1}{T_2}\), we may write \(T_1
\Rightarrow T_2\), similarly to writing \(T_1 \to T_2\) for \(\abs{\Pi}{x}{T_1}{T_2}\).

Note that the notion of computational irrelevance here is not that of a
different sort of classifier for types (e.g. \(\mathit{Prop}\) in Coq,
c.f.~\citealp{coq}) that separates terms in the language into those which can be
used for computation and those which cannot.
Instead, it is similar to \emph{quantitative type
  theory}~\citep{Atk18_Quantitative-Type-Theory}: relevance and irrelevance are
properties of binders, indicating how a function may \emph{use} an argument.

\paragraph{\textbf{The dependent intersection type}
  \(\abs{\iota}{x}{T_1}{T_2}\)}
\begin{figure}[h]
  \centering
  \[
    \begin{array}{c}
      \infer{\Gamma\vdash [ t_1 , t_2 ] \tpcheck T}
      {
        T \leadstocs \abs{\iota}{x}{T_1}{T_2}
        \quad \Gamma \vdash t_1 \tpcheck T_1
        \quad \Gamma \vdash t_2 \tpcheck [t_1/x]T_2
        \quad |t_1| =_{\beta\eta} |t_2|
      }
      \\ \\
      \begin{array}{cc}
        \infer{
         \Gamma\vdash t.1 \tpsynth T_1
        }{
         \Gamma\vdash t \tpsynthleads \abs{\iota}{x}{T_1}{T_2}
        }
        &
          \infer{
           \Gamma\vdash t.2 \tpsynth [t.1/x] T_2
          }{
           \Gamma\vdash t \tpsynthleads \abs{\iota}{x}{T_1}{T_2}
          }
      \end{array}
    \end{array}
  \]
  \caption{Dependent intersections}
  \label{fig:cdle-dependent-intersection}
\end{figure}
of \citet{kopylov03}
(Figure~\ref{fig:cdle-dependent-intersection}) is the type for terms \(t\) which
can be assigned both type \(T_1\) and type \([t/x]T_2\).
It is a dependent generalization of intersection types \citep[c.f.][Part
3]{BDS13_Lambda-Calculus-with-Types} in Curry-style theories, which allow one to
express the fact that an untyped lambda calculus term can be assigned two
different types.
In Cedille's annotated language, the introduction form for dependent
intersections is written \([t_1,t_2]\), and can be checked
against type \(\abs{\iota}{x}{T_1}{T_2}\) if \(t_1\) can be checked against type
\(T_1\), \(t_2\) can be checked against \([t_1/x]T_2\), and \(t_1\) and \(t_2\)
are \(\beta\eta\)-equivalent modulo erasure.
For the elimination forms, if \(t\) synthesizes type \(\abs{\iota}{x}{T_1}{T_2}\)
then \(t.1\) (which erases to \(|t|\)) synthesizes type \(T_1\), and \(t.2\)
(erasing to the same) synthesizes type \([t.1/x]T_2\).

Dependent intersections can be thought of as a dependent pair type
where the two components are equal.
Thus, we may ``forget'' the second component: \([t_1,t_2]\) erases to \(|t_1|\).
Put another way, dependent intersections are a restricted form of
computationally transparent subset types where the proof 
that some term $t$ inhabits the subset must be definitionally equal to $t$.
A consequence of this restriction is that the proof may be recovered, in the
form of \(t\) itself, for use in computation.

\begin{figure}[h]
  \centering
  \[
    \begin{array}{c}
        \infer{\Gamma\vdash \beta\{t'\} \tpcheck T}
        {
        T \leadstocs \{t_1 \simeq t_2\}
        \quad \textit{FV}(t')\subseteq \textit{dom}(\Gamma)
        \quad |t_1| =_{\beta\eta} |t_2|
        }
      \\ \\
      \infer{
       \Gamma \vdash \rho\ t\ @x.T'\ \mhyph\ t' \tpcheck T
      }{
        \Gamma \vdash t \tpsynthleads \{t_1 \simeq t_2\}
        \quad \Gamma \vdash [t_2/x] T' \tpsynth \star
        \quad \Gamma \vdash t' \tpcheck [t_2/x]T'
        \quad [t_1/x]T' \cong T
      }
      \\ \\
      \infer{
      \Gamma\vdash \varphi\ t\ \mhyph\ t'\ \{t''\} \tpcheck T
      }{
      \Gamma\vdash t\tpcheck \{t' \simeq t''\}
      \quad \Gamma\vdash t' \tpcheck T
      \quad \textit{FV}(t'') \subseteq \textit{dom}(\Gamma)
      }
      \\ \\
      \begin{array}{cc}
        \infer{
        \Gamma \vdash \delta\ \mhyph\ t \tpcheck T
        }{
        \Gamma \vdash t \tpsynth T'
        \quad T' \cong \{\absu{\lambda}{x}{\absu{\lambda}{y}{x}} \simeq
        \absu{\lambda}{x}{\absu{\lambda}{y}{y}}\} 
        }
        &
          \infer{
          \Gamma \vdash \varsigma\ t \tpsynth \{t_2 \simeq t_1\}
          }{
          \Gamma \vdash t \tpsynthleads \{t_1 \simeq t_2\}
          }
      \end{array}
    \end{array}
  \]
  \caption{Equality type}
  \label{fig:cdle-equality}
\end{figure}

\paragraph{\textbf{The equality type} \(\{t_1 \simeq t_2\}\)} is the type of
proofs that \(t_1\) is propositionally equal to \(t_2\).
The introduction form \(\beta\{t'\}\) proves reflexive equations between
\(\beta\eta\)-equivalence classes of terms: it can be checked against the type
\(\{t_1 \simeq t_2\}\) if \(|t_1| =_{\beta\eta} |t_2|\) and if the
subexpression \(t'\) has no undeclared free variables.
We discuss the significance of the fact that \(t'\) is unrelated to the terms
being equated, dubbed the \emph{Kleene trick}, below.
In code listings, if \(t'\) is omitted from the introduction form, it defaults
to \(\absu{\lambda}{x}{x}\).

The elimination form \(\rho\ t\ @x.T'\ \mhyph\ t'\) for the equality type
\(\{t_1 \simeq t_2\}\) replaces occurrences of \(t_1\) in the checked type with
\(t_2\) before checking \(t'\).
The user indicates the occurrences of \(t_1\) to replace with \(x\) in the
annotation \(@x.T'\), which binds \(x\) in \(T'\).
The rule requires that \([t_2/x]T'\) has kind \(\star\), then checks \(t'\)
against this type, and finally confirms that \([t_1/x]T'\) is convertible with
the expected type \(T\).
The entire expression erases to \(|t'|\).

\begin{example*}
Assume \(m\) and \(n\) have type \(\mathit{Nat}\), \(\mathit{suc}\) and
\(\mathit{pred}\) have type \(\mathit{Nat} \to \mathit{Nat}\), and furthermore
that \(|\mathit{pred}\ (\mathit{suc}\ t)| =_{\beta\eta} |t|\) for all \(t\).
If \(e\) has type \(\{\mathit{suc}\ m \simeq \mathit{suc}\ n\}\), then \(\rho\
e\ @x.\{\mathit{pred}\ x \simeq \mathit{pred}\ (\mathit{suc}\ n)\}\ \mhyph\ \beta\)
can be checked with type \(\{m \simeq n\}\) as follows: we check that
\(\{\mathit{pred}\ (\mathit{suc}\ n)
\simeq \mathit{pred}\ (\mathit{suc}\ n)\}\), obtained from substituting \(x\)
in the annotation with the right-hand side of the equation of the type
of \(e\), has 
kind \(\star\); we check \(\beta\) against this type; and we check that
substituting \(x\) with \(\mathit{suc}\ m\) is convertible with the expected type \(\{m \simeq
n\}\).
By assumption \(|\mathit{pred}\ (\mathit{suc}\ m)| =_{\beta\eta} |m|\)
and \(|\mathit{pred}\ (\mathit{suc}\ n)| =_{\beta\eta} |n|\), so \(\{m \simeq
n\} \cong \{\mathit{pred}\ (\mathit{suc}\ m) \simeq \mathit{pred}\
(\mathit{suc}\ n)\}\).  
\end{example*}

Equality types in CDLE come with two additional axioms: a strong form of the
direct computation rule of NuPRL~\citep[see][Section~2.2]{allen+06} given by
\(\varphi\), and proof by contradiction given by \(\delta\).
The inference rule for an expression of the form \(\varphi\ t\ \mhyph\ t'\
\{t''\}\) says that the entire expression can be checked against type \(T\) if
\(t'\) can be, if there are no undeclared free variables in \(t''\) (so, \(t''\)
is a well-scoped but otherwise untyped term), and if \(t\) proves that \(t'\)
and \(t''\) are equal.
The crucial feature of \(\varphi\) is its erasure: the expression erases
to \(|t''|\), effectively enabling us to cast \(t''\) to the type of \(t'\).

An expression of the form \(\delta\ \mhyph\ t\) may be checked against any type
if \(t\) synthesizes a type convertible with a particular false equation,
\(\{\absu{\lambda}{x}{\absu{\lambda}{y}{x}} \simeq
\absu{\lambda}{x}{\absu{\lambda}{y}{y}}\}\).
To broaden the class of false equations to which one may apply \(\delta\), the
Cedille tool implements the \emph{B\"ohm-out} semi-decision procedure
\citep{BDPR79_Bohm-Algorithm} for discriminating between
\(\beta\eta\)-inequivalent terms.
We use \(\delta\) only once in this paper as part of a final comparison between
the Scott and Parigot encoding (see Section~\ref{sec:lr-gen}).

Finally, Cedille provides a symmetry axiom \(\varsigma\) for
equality types, with \(|t|\) the erasure of \(\varsigma\ t\).
This axiom is purely a convenience; without \(\varsigma\), symmetry for
equality types can be proven with \(\rho\).

\subsection{The Kleene trick}
\label{sec:kleene-trick}
As mentioned earlier, the introduction form
\(\beta\{t'\}\) for the equality type contains a subexpression \(t'\) that is
unrelated to the equated terms.
By allowing \(t'\) to be any closed (in context) term, we are able to define a
type of all untyped lambda calculus terms.

\begin{definition}[Top]
  \label{def:top}
  Let \(Top\) be the type \(\{\absu{\lambda}{x}{x} \simeq \absu{\lambda}{x}{x}\}\).
\end{definition}

\noindent We dub this the \emph{Kleene trick}, as one may find the idea in Kleene's later
definitions of numeric realizability in which any number is allowed as a
realizer for a true atomic formula~\citep{kleene65}.

Combined with dependent intersections, the Kleene trick also allows us to derive
computationally transparent equational subset types.
For example, let \(\mathit{Nat}\) again be the type of naturals with
\(\mathit{zero}\) the zero value,
\(\mathit{Bool}\) the type of Booleans with \(\mathit{tt}\) the truth
value, and \(\mathit{isEven} : \mathit{Nat} \to \mathit{Bool}\) a function
returning \(\mathit{tt}\) if and only if its argument is even.
Then, the type \(\mathit{Even}\) of even naturals can be defined as
\(\abs{\iota}{x}{\mathit{Nat}}{\{\mathit{isEven}\ x \simeq \mathit{tt}\}}\).
Since \(|\mathit{isEven}\ zero| =_{\beta\eta} |\mathit{tt}|\), we can
check \([\mathit{zero}, \beta\{\mathit{zero}\}]\) against type
\(\mathit{Even}\), and the expression erases to \(|\mathit{zero}|\).
More generally, if \(n\) is a \(\mathit{Nat}\) and \(t\) is a proof that
\(\{\mathit{isEven}\ n \simeq tt\}\), then \([ n, \rho\ t\ @x.\{x \simeq
\mathit{tt}\}\ -\ \beta\{n\}]\) can be checked against type \(\mathit{Even}\): 
the erasure of the first and second components are equal, and within the second
component \(\rho\) rewrites the expected type \(\{\mathit{isEven}\ n \simeq
\mathit{tt}\}\) to \(\{\mathit{tt} \simeq \mathit{tt}\}\) then checks
\(\beta\{n\}\) against this.

\subsection{Meta-theory}
\label{sec:meta}
It may concern the reader that, with the Kleene trick, it is
possible to type non-terminating terms, leading to a failure of normalization in
general in CDLE.
For example, the looping term \(\Omega\),
\(\beta\{(\absu{\lambda}{x}{x\ x})\ \absu{\lambda}{x}{x\ x}\}\), can be checked against
type \(\mathit{Top}\).
More subtly, the \(\varphi\) axiom allows non-termination in
inconsistent contexts.
Assume there is a typing context \(\Gamma\) and term \(t\) such that \(\Gamma
\vdash t \tpsynth \abs{\forall}{X}{\star}{X}\), and let \(\omega\) be the term
\[ \varphi\ (t
  \cdot \{\absu{\lambda}{x}{x} \simeq \absu{\lambda}{x}{x\ x}\})\ \mhyph\
  (\absu{\Lambda}{X}{\absu{\lambda}{x}{x}})\ \{\absu{\lambda}{x}{x\ x}\}\]
Under \(\Gamma\), \(\omega\) can be checked against the type
\(\abs{\forall}{X}{\star}{X \to X}\), and by the erasure rules \(\omega\)
erases to \(\absu{\lambda}{x}{x\ x}\).
We can then type the looping term \(\Omega\):
\[\Gamma \vdash \omega \cdot (\abs{\forall}{X}{\star}{X \to X})\ \omega
  \tpsynth \abs{\forall}{X}{\star}{X \to X}\]

Unlike the situation for unrestricted recursive types discussed in
Section~\ref{sec:introduction}, the existence of non-normalizing terms does not
threaten the logical consistency of CDLE.
For example, extensional Martin-L\"of
type theory is consistent but, due to a similar difficulty with inconsistent
contexts, is non-normalizing~\citep{dybjer16}.

\begin{proposition}[\citealp{stump18c}]
  \label{thm:consis}
  \ \\
  There is no term $t$ such that $\vdash t \tpsynth \abs{\forall}{X}{\star}{X}$.
\end{proposition}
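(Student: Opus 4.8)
The plan is to construct an untyped realizability model for CDLE and argue consistency semantically rather than syntactically, since --- as the preceding discussion shows --- CDLE is not normalizing, so the usual canonical-forms argument available for System F or CC (``a closed normal inhabitant of $\abs{\forall}{X}{\star}{X}$ would be a $\Lambda$-abstraction whose body has a variable type, which is impossible'') does not apply. Concretely, I would interpret each type as a \emph{semantic type}: a set of untyped lambda terms closed under $\beta\eta$-equivalence and under head expansion along the reductions that erasure can introduce. I deliberately do \emph{not} impose the nonemptiness/neutral-term conditions of Girard-style reducibility candidates, because the goal is logical consistency, not normalization; this relaxation is exactly what lets the empty set count as a semantic type.

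The interpretation $\llbracket T\rrbracket_\rho$, relative to an environment $\rho$ assigning semantic objects to type variables and terms to term variables, is defined by recursion on $T$ and guided by the erasure rules of Figure~\ref{fig:eraser}. The key clauses are: $\abs{\forall}{X}{\kappa}{T}$ is $\bigcap_{\mathcal{A}}\llbracket T\rrbracket_{\rho[X\mapsto\mathcal{A}]}$, the intersection over all semantic objects $\mathcal{A}$ of kind $\kappa$, matching the fact that $t \cdot T'$ erases to $|t|$; the implicit product $\abs{\forall}{x}{T_1}{T_2}$ is $\bigcap_{a\in\llbracket T_1\rrbracket_\rho}\llbracket T_2\rrbracket_{\rho[x\mapsto a]}$, since $t\ \mhyph t'$ also erases to $|t|$ and so one untyped term must realize the codomain uniformly; the dependent intersection $\abs{\iota}{x}{T_1}{T_2}$ is the honest set of $u$ with $u\in\llbracket T_1\rrbracket_\rho$ and $u\in\llbracket T_2\rrbracket_{\rho[x\mapsto u]}$, reflecting that $[t_1,t_2]$ erases to $|t_1|$ with $|t_1|=_{\beta\eta}|t_2|$; and the equality type $\{t_1 \simeq t_2\}$ is the set of \emph{all} terms when $|t_1|=_{\beta\eta}|t_2|$ and the empty set otherwise, the ``full'' case being forced by the Kleene trick, which lets $\beta\{t'\}$ erase to an arbitrary $t'$.

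The central lemma is soundness: for every derivation $\Gamma\vdash t\tpsynth T$ or $\Gamma\vdash t\tpcheck T$ and every $\rho$ validating $\Gamma$, the erasure $|t|$ lies in $\llbracket T\rrbracket_\rho$. This is proved by induction on the typing derivation, one case per rule of Figures~\ref{fig:tp}, \ref{fig:cdle-implicit-product}, \ref{fig:cdle-dependent-intersection}, and \ref{fig:cdle-equality}. Most cases are routine given the clauses above; the delicate constructs are the equational ones. For $\rho\ t\ @x.T'\ \mhyph\ t'$ and $\varphi\ t\ \mhyph\ t'\ \{t''\}$, soundness uses closure under $\beta\eta$-equivalence together with the fact that an inhabited equality type forces its two sides to be $\beta\eta$-equal, so the realizer may be transported across. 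For $\delta\ \mhyph\ t$ the premise supplies a realizer of the false equation $\{\absu{\lambda}{x}{\absu{\lambda}{y}{x}} \simeq \absu{\lambda}{x}{\absu{\lambda}{y}{y}}\}$, whose interpretation is empty because the two sides are distinct normal forms; the induction hypothesis then yields an element of $\emptyset$, a meta-level contradiction, so the case holds vacuously --- which is precisely why $\delta$ is sound.

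With soundness in hand, suppose toward a contradiction that $\vdash t\tpsynth\abs{\forall}{X}{\star}{X}$ for some closed $t$. Soundness gives $|t|\in\llbracket\abs{\forall}{X}{\star}{X}\rrbracket=\bigcap_{\mathcal{A}}\mathcal{A}$, and instantiating the intersection at $\mathcal{A}=\emptyset$ shows this set is empty, contradicting $|t|\in\emptyset$. The main obstacle lies entirely in the model design that makes this last step legitimate: the semantic-type structure must satisfy all closure conditions needed for soundness --- especially for the non-normalizing, erasure-heavy constructs $\varphi$, $\delta$, and the Kleene-trick equality --- \emph{while still admitting the empty set} as a semantic type, so that the false type is interpreted as empty. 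Verifying that no rule's soundness case secretly relies on candidates being inhabited, in particular the $\lambda$- and $\Pi$-cases, which must go through vacuously when the domain is empty, is where the real care is required.
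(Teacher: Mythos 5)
The paper does not prove this proposition itself --- it is imported by citation from \citet{stump18c} --- and your realizability-model argument is essentially the proof given in that cited work, matching the semantics the paper itself sketches in Section~\ref{sec:tarskitrans}: types denote sets of $\beta\eta$-equivalence classes of closed untyped terms, impredicative quantification denotes intersection over all semantic types (including $\emptyset$), so $\abs{\forall}{X}{\star}{X}$ denotes the empty set and soundness of erasure-directed typing yields the contradiction. Your identification of the delicate cases ($\varphi$, $\delta$, the Kleene-trick equality, and admitting $\emptyset$ as a candidate) is exactly where the care is needed, so the proposal is correct and takes the same route.
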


Neither does non-termination from the Kleene trick or \(\varphi\) with
inconsistent contexts preclude the possibility of a qualified termination
guarantee.
In Cedille, closed terms of a function type are call-by-name normalizing.

\begin{proposition}[\citealp{stump18c}]
  \label{thm:cedille-termination}
  Suppose that \(\vdash t \tpsynth T\), and that there exists \(t'\)
  such that \(\vdash t' \tpsynth T \to \abs{\Pi}{x}{T_1}{T_2}\) and \(|t'| = \absu{\lambda}{x}{x}\).
  Then \(|t|\) is call-by-name normalizing.
\end{proposition}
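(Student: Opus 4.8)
The plan is to reduce the statement to a normalization guarantee for closed terms of genuine function type, and then to establish that guarantee with a reducibility (logical-relations) model. First I would use the coercion term $t'$. From $\vdash t \tpsynth T$ it follows that $t$ also checks against $T$, and since $\vdash t' \tpsynth T \to \abs{\Pi}{x}{T_1}{T_2}$ the application rule of Figure~\ref{fig:tp} yields a closed term $\vdash t'\ t \tpsynth \abs{\Pi}{x}{T_1}{T_2}$ of function type. By the erasure rules of Figure~\ref{fig:eraser}, $|t'\ t| = |t'|\ |t| = (\absu{\lambda}{x}{x})\ |t|$, whose only call-by-name redex is the outermost one; contracting it produces $|t|$ in a single step. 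Thus $|t'\ t|$ and $|t|$ have identical call-by-name reduction behaviour, and it suffices to prove that the erasure of any closed term synthesizing a function type is call-by-name normalizing.

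For that core claim I would build a type-indexed family of sets of well-scoped untyped terms --- weak-head reducibility candidates, required to be closed under weak-head expansion and under $\beta\eta$-equivalence --- interpreting each CDLE classifier, with impredicative quantification $\abs{\forall}{X}{\kappa}{T}$ (as an intersection over all candidates) and higher-kinded constructors handled by the standard candidate machinery for the Calculus of Constructions. The decisive feature is an asymmetry between the constructs. Function types are interpreted \emph{strictly}: a term belongs to the candidate for $\abs{\Pi}{x}{T_1}{T_2}$ only if it call-by-name normalizes and, applied to any argument in the candidate for $T_1$, yields a result in the corresponding instance of the candidate for $T_2$; membership therefore already entails call-by-name normalization. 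Implicit products $\abs{\forall}{x}{T_1}{T_2}$ are treated likewise but with the erasure ignoring the argument, and dependent intersections $\abs{\iota}{x}{T_1}{T_2}$ as the intersection of their component candidates. Equality types, by contrast, are interpreted \emph{laxly but equation-sensitively}: relative to a valuation $\sigma$, the candidate for $\{t_1 \simeq t_2\}$ is the full set of well-scoped terms when $|t_1|$ and $|t_2|$ are $\beta\eta$-equal after applying $\sigma$, and is empty otherwise. This is exactly what the inhabitants of equality types demand: the Kleene trick inhabits $\mathit{Top} = \{\absu{\lambda}{x}{x} \simeq \absu{\lambda}{x}{x}\}$ by a true reflexive equation, so its candidate is the full set and may contain even $\Omega$ without perturbing the strict function-type candidates.

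I would then prove the fundamental lemma by induction on typing derivations: for every well-typed term and every valuation sending variables into the candidates for their declared types, the erasure of the term lies in the candidate for its type. Specializing to the empty context and empty valuation places $|t'\ t|$ in the candidate for $\abs{\Pi}{x}{T_1}{T_2}$, which by construction forces call-by-name normalization, and the reduction equivalence from the first step transfers this to $|t|$. The main obstacle is the soundness of the equality-manipulating constructs, which is precisely where the asymmetry pays off. For $\varphi\ e\ \mhyph\ s\ \{s'\}$, which erases to $|s'|$, typability gives $e : \{s \simeq s'\}$ and $s : T$; under $\sigma$ the induction hypothesis puts $\sigma(|e|)$ in the candidate for $\{s \simeq s'\}$, which is nonempty only when $\sigma(|s|) =_{\beta\eta} \sigma(|s'|)$, so $\beta\eta$-closure of the candidate for $T$ carries the already-established $\sigma(|s|)$ over to $\sigma(|s'|)$. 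For $\delta\ \mhyph\ e$, typable against an arbitrary $T$ from a proof of a fixed false equation, the equation is false under every valuation, so its candidate is empty and the case is vacuous. These two cases also explain why the non-normalizing $\omega$ of Section~\ref{sec:meta} does not contradict the result: $\omega$ is well-typed only under a context assuming an inconsistent hypothesis, and such a context admits no satisfying valuation, so the fundamental lemma says nothing about it --- in harmony with logical consistency (Proposition~\ref{thm:consis}).
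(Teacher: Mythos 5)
You should know at the outset that the paper contains no proof of this proposition: it is imported verbatim from \citet{stump18c}, where it is established by a realizability semantics interpreting types as sets of (\(\beta\eta\)-equivalence classes of) closed untyped terms. So the only comparison available is against that cited development, and in outline your reconstruction matches it. The opening reduction is exactly how the proposition is meant to be used --- \(t'\ t\) synthesizes \(\abs{\Pi}{x}{T_1}{T_2}\), \(|t'\ t| = (\absu{\lambda}{x}{x})\ |t|\) is call-by-name normalizing iff \(|t|\) is --- and the asymmetry you identify (strict interpretation of \(\Pi\)-types, truth-valued interpretation of equality types, vacuous treatment of inconsistent contexts) is the right shape; it correctly explains why the Kleene trick and the \(\omega\) example of Section~\ref{sec:meta} do not refute the result.

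The gap is that the candidate conditions you stipulate are jointly untenable, and the tension sits precisely at the point your sketch labels ``standard candidate machinery.'' You require every candidate to be closed under \(\beta\eta\)-equivalence --- and you genuinely need this, since the \(\varphi\) case, the \(\rho\) case, and the introduction rule for dependent intersections all equate terms only up to \(=_{\beta\eta}\) modulo erasure --- while simultaneously requiring every member of a \(\Pi\)-type candidate to be call-by-name normalizing. But \(\eta\) identifies call-by-name-normal terms with divergent ones: \(\absu{\lambda}{x}{\Omega\ x} =_{\beta\eta} \Omega\), the former already a weak head normal form, the latter not normalizing under any strategy. This is not a corner case, because the Kleene trick makes such terms typable at genuine function types in the empty context: \(\absu{\lambda}{x}{\beta\{\Omega\ x\}}\) checks against \(\abs{\Pi}{x}{\mathit{Top}}{\mathit{Top}}\) and erases to \(\absu{\lambda}{x}{\Omega\ x}\), so your fundamental lemma must place this term in a strict \(\Pi\)-candidate, and \(\beta\eta\)-closure would then drag \(\Omega\) in after it, contradicting strictness. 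Pushed further, the same \(\eta\)-equation would let \(\varphi\) retype \(\Omega\) itself at that function type via a \(\beta\)-proof of \(\{\absu{\lambda}{x}{\Omega\ x} \simeq \Omega\}\), which shows that the argument is acutely sensitive to exactly which equivalence the typing rules and the candidates are closed under, and which representative of an equivalence class the normalization claim is about. Sorting this out --- restricting or postponing \(\eta\), or phrasing the model over classes and then arguing separately that the particular representative \(|t|\) produced by a derivation weak-head normalizes --- is the actual content of the cited proof, and it is missing from your sketch.
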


Lack of normalization in general does, however, mean that type inference in
Cedille is formally undecidable, as there are several inference rules in which
full \(\beta\eta\)-equivalence of terms is checked.
In practice, this is not a significant impediment: even in implementations of
strongly normalizing dependent type theories, it is possible for type inference
to trigger conversion checking between terms involving astronomically slow functions,
effectively causing the implementation to hang.
For the recursive representations of inductive types we derive in this paper, we
show that closed lambda encodings do indeed satisfy the criterion required to
guarantee call-by-name normalization.

\section{Deriving Recursive Types in Cedille}
\label{sec:rectypes}

Having surveyed Cedille, we now turn to the derivation of recursive types within
it.
This is accomplished by implementing a version of Tarski's fixpoint theorem for
monotone functions over a complete lattice.
We first recall the simple corollary of Tarski's more general result \citep[c.f.][]{lassez82}. 

\begin{definition}[\(f\)-closed]
  Let \(f\) be a monotone function on a preorder \((S,\sqsubseteq)\).
  An element \(x \in S\) is said to be \emph{\(f\)-closed} when \(f(x)
  \sqsubseteq x\).
\end{definition}

\begin{theorem}[\citealp{tarski55}]
\label{thm:tarski}
  Suppose $f$ is a monotone function on complete lattice $(S,\mathbin{\sqsubseteq},\sqcap)$.
  Let $R$ be the set of \(f\)-closed elements of \(S\) and $r = \sqcap R$.  Then $f(r) = r$.
\end{theorem}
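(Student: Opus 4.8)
The plan is to establish the fixpoint by proving the two inclusions $f(r) \sqsubseteq r$ and $r \sqsubseteq f(r)$ separately and then combining them. Throughout I will use only the two defining properties of the meet $r = \sqcap R$ in a complete lattice: that $r$ is a lower bound of $R$ (so $r \sqsubseteq x$ for every $x \in R$), and that $r$ is the \emph{greatest} such lower bound (so any $y$ satisfying $y \sqsubseteq x$ for all $x \in R$ satisfies $y \sqsubseteq r$). These, together with the monotonicity of $f$ and the transitivity of $\sqsubseteq$, are all the argument requires.

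First I would show that $r$ is itself $f$-closed, i.e. $f(r) \sqsubseteq r$. Fix an arbitrary $x \in R$. Since $r$ is a lower bound of $R$ we have $r \sqsubseteq x$, and monotonicity of $f$ gives $f(r) \sqsubseteq f(x)$. Because $x \in R$ means $f(x) \sqsubseteq x$, transitivity yields $f(r) \sqsubseteq x$. As $x \in R$ was arbitrary, $f(r)$ is a lower bound of $R$; hence by the greatest-lower-bound property of $r = \sqcap R$ we conclude $f(r) \sqsubseteq r$. In particular this shows $r \in R$, i.e. $r$ is the least $f$-closed element.

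Second, I would leverage the inclusion just obtained to prove the reverse one, $r \sqsubseteq f(r)$. Applying $f$ to $f(r) \sqsubseteq r$ and using monotonicity gives $f(f(r)) \sqsubseteq f(r)$, which says precisely that $f(r)$ is $f$-closed, i.e. $f(r) \in R$. Since $r = \sqcap R$ is a lower bound of $R$, it follows that $r \sqsubseteq f(r)$. Combining the two inclusions $f(r) \sqsubseteq r$ and $r \sqsubseteq f(r)$ with antisymmetry of the lattice order (or, in the preorder setting of type inclusions that the paper actually targets, with the equivalence induced by $\sqsubseteq$) yields $f(r) = r$.

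The argument is almost mechanical once the two roles of the meet are separated, so I do not anticipate a genuine obstacle. The one point demanding care is keeping the lower-bound property (``$r \sqsubseteq x$ for all $x$'') distinct from the greatest-lower-bound property (``$f(r)$ a lower bound implies $f(r) \sqsubseteq r$''), since each inclusion of the final equality appeals to a different one. A secondary subtlety, relevant to the intended application, is that the ambient order on Cedille types is a \emph{preorder} rather than a partial order, so the concluding ``$=$'' must be read as the equivalence $f(r) \sqsubseteq r \,\wedge\, r \sqsubseteq f(r)$ rather than strict equality; the proof delivers exactly these two inclusions and therefore survives the weakening.
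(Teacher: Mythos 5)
Your proof is correct and follows essentially the same route as the paper: the paper proves the two inclusions $f(r) \sqsubseteq r$ (by showing $f(r)$ is a lower bound of $R$ via monotonicity and transitivity, then invoking the greatest-lower-bound property) and $r \sqsubseteq f(r)$ (by observing $f(r) \in R$), exactly as you do, in its proof of the strengthened preorder version (Theorem~\ref{thm:tarskip}). Your closing remark about antisymmetry versus the preorder reading is also precisely the distinction the paper draws between Theorem~\ref{thm:tarski} and Theorem~\ref{thm:tarskip}.
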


The version we implement is a strengthening of this corollary, in the sense that
it has weaker assumptions than Theorem~\ref{thm:tarski}: rather than require \(S\) be a complete
lattice, we only need that \(S\) is a preorder and \(R\) has a greatest lower bound.

\subsection{Tarski's Theorem}
\label{sec:tarski}
\begin{theorem}
  \label{thm:tarskip}
  Suppose $f$ is a monotone function on a preorder
  $(S,\mathbin{\sqsubseteq})$, and that the set $R$ of all \(f\)-closed elements
  has a greatest lower bound $r$.
  Then $f(r) \sqsubseteq r$ and $r \sqsubseteq f(r)$.
\end{theorem}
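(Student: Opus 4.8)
The plan is to prove the two inequalities $f(r) \sqsubseteq r$ and $r \sqsubseteq f(r)$ separately, using only that $r$ is the greatest lower bound of the set $R$ of $f$-closed elements, together with monotonicity of $f$. This is the classical Knaster--Tarski argument for the least fixpoint, adapted to the weaker hypothesis (a preorder with one guaranteed infimum) rather than a complete lattice.

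First I would establish $r \sqsubseteq f(r)$. The key observation is that $r$ is a lower bound of $R$, so $r \sqsubseteq x$ for every $f$-closed $x$. By monotonicity, $f(r) \sqsubseteq f(x)$, and since $x$ is $f$-closed we have $f(x) \sqsubseteq x$; hence $f(r) \sqsubseteq x$ by transitivity. Thus $f(r)$ is itself a lower bound of $R$. Because $r$ is the \emph{greatest} lower bound of $R$, it follows that $f(r) \sqsubseteq r$. (Note that this argument actually yields the first inequality, $f(r) \sqsubseteq r$, so I would present it in that order.)

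Next I would derive $r \sqsubseteq f(r)$ from the inequality just proved. Applying $f$ to both sides of $f(r) \sqsubseteq r$ and using monotonicity gives $f(f(r)) \sqsubseteq f(r)$, which says precisely that $f(r)$ is $f$-closed, i.e.\ $f(r) \in R$. Since $r$ is a lower bound of $R$, we get $r \sqsubseteq f(r)$ immediately. Combining the two inequalities establishes the theorem; in a genuine partial order they would give $f(r) = r$, but in a mere preorder we can only assert the two directions of the inequality, which is exactly what the statement claims.

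The argument is short and the only subtle point --- what I expect to be the main obstacle when transporting it into Cedille later --- is that we work in a preorder rather than a partial order, so antisymmetry is unavailable and the conclusion must be phrased as a pair of inequalities rather than an equality. Everything else is a direct application of the defining universal properties of the greatest lower bound (it is both a lower bound and the greatest such) together with monotonicity and transitivity of $\sqsubseteq$; no completeness of the lattice and no appeal to the full set of $f$-closed elements beyond $r$'s status relative to them is needed.
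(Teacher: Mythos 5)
Your proof is correct and follows exactly the same argument as the paper: show $f(r)$ is a lower bound of $R$ (via $r \sqsubseteq x$, monotonicity, $f$-closedness, and transitivity) to get $f(r) \sqsubseteq r$ from the greatest-lower-bound property, then apply $f$ once more to see $f(r) \in R$ and conclude $r \sqsubseteq f(r)$ from $r$ being a lower bound. No differences worth noting.
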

\begin{proof}
  \begin{enumerate}
    
  \item First prove \(f(r) \sqsubseteq r\).
    Since \(r\) is the greatest lower bound of \(R\), it suffices to prove
    \(f(r) \sqsubseteq x\) for every \(x \in R\).
    So, let \(x\) be an arbitrary element of \(R\).
    Since \(r\) is a lower bound of \(R\), \(r \sqsubseteq x\).
    By monotonicity, we therefore obtain \(f(r) \sqsubseteq f(x)\), and since \(x
    \in R\) we have that \(f(x) \sqsubseteq x\)
    By transitivity, we conclude that \(f(r) \sqsubseteq x\).
    
  \item Now prove \(r \sqsubseteq f(r)\).
    Using 1 above and monotonicity of \(f\), we have that \(f(f(r)) \sqsubseteq
    f(r)\).
    This means that \(f(r) \in R\), and since \(r\) is a lower bound of \(R\),
    we have \(r \sqsubseteq f(r)\).
  \end{enumerate}
\end{proof}

Notice in this proof \emph{prima facie} impredicativity: we pick a fixpoint
$r$ of $f$ by reference to a collection $R$ which (by 1) contains $r$.
We will see that this impredicativity carries over to Cedille.
We will instantiate the underlying set \(S\) of the preorder in
Theorem~\ref{thm:tarskip} to the set of Cedille types --- this is why we need to
relax the assumption of Theorem~\ref{thm:tarski} that \(S\) is a complete
lattice.
However, we must still answer several questions:
\begin{itemize}
\item how should the ordering $\sqsubseteq$ be implemented;
\item how do we express the idea of a monotone function; and
\item how do we obtain the greatest lower bound of \(R\)?
\end{itemize}

One possibility that is available in System F is to choose
functions \(A \to B\) as the ordering \(A \sqsubseteq B\), positive type
schemes \(T\) (having a free variable \(X\), and such that \(A \to B\) implies
\([A/X]T \to [B/X]T\)) as monotonic functions, and use universal quantification
to define the greatest lower bound as \(\absu{\forall}{X}{(T \to X) \to
  X}\).
This approach, described by \citet{Wad90_Recursive-Types-for-Free}, is
essentially a generalization of order theory to category theory, and the terms
inhabiting recursive types so derived are Church encodings.
However, the resulting recursive types lack the property that
\textit{roll} and \textit{unroll} are constant-time operations.

In Cedille, another possibility is available: we can interpret the ordering relation
as \emph{type inclusions}, in the sense that \(T_1\) is included into \(T_2\) if
and only if every term \(t\) of type \(T_1\) is definitionally equal to some
term of type \(T_2\).
To show how type inclusions can be expressed as a type within Cedille
(\(\mathit{Cast}\), Section~\ref{sec:casts}), we first demonstrate how to
internalize the property that some untyped term \(t\) can be viewed has having
type \(T\) (\(\mathit{View}\), Section~\ref{sec:views}): type inclusions are
thus a special case of internalized typing where we view
\(\absu{\lambda}{x}{x}\) has having type \(T_1 \to T_2\).

\subsection{Views}
\label{sec:views}

\begin{figure}[h]
  \centering
  \[
    \begin{array}{c}
      \begin{array}{cc}
        \infer{
        \Gamma \vdash \mathit{View} \cdot T\ t \tpsynth \star
        }{
        \Gamma \vdash T \tpsynth \star
        \quad \Gamma \vdash t \tpcheck \mathit{Top}
        }
        &
          \infer{
          \Gamma \vdash \mathit{elimView} \cdot T\ t\ \mhyph v \tpsynth T
          }{
          \Gamma \vdash T \tpsynth \star
          \quad \Gamma \vdash t \tpcheck \mathit{Top}
          \quad \Gamma \vdash v \tpcheck \mathit{View} \cdot T\ t
          }
      \end{array}
      \\ \\
      \infer{
        \Gamma \vdash \mathit{intrView} \cdot T\ t_1\ \mhyph t_2\ \mhyph t \tpsynth
        \mathit{View} \cdot T\ t_1
      }{
        \Gamma \vdash T \tpsynth \star
        \quad \Gamma \vdash t_1 \tpcheck \mathit{Top}
        \quad \Gamma \vdash t_2 \tpcheck T
        \quad \Gamma \vdash t \tpcheck \{t_2 \simeq t_1\}
      }
      \\ \\
      \infer{
       \Gamma \vdash \mathit{eqView} \cdot T\ \mhyph t\ \mhyph v \tpsynth
      \{t \simeq v\}
      }{
      \Gamma \vdash T \tpsynth \star
      \quad \Gamma \vdash t \tpcheck \mathit{Top}
      \quad \Gamma \vdash v \tpcheck \mathit{View} \cdot T\ t
      }
      \\ \\
      \begin{array}{lll}
        |\mathit{elimView} \cdot T\ t\ \mhyph v|
        & =
        & (\absu{\lambda}{x}{x})\ |t|
        \\ |\mathit{intrView} \cdot T\ t_1\ \mhyph t_2\ \mhyph t|
        & =
        & (\absu{\lambda}{x}{x})\ |t_1|
        \\ |\mathit{eqView} \cdot T\ \mhyph t\ \mhyph v|
        & =
        & \absu{\lambda}{x}{x}
      \end{array}
    \end{array}
  \]
  \caption{Internalized typing, axiomatically}
  \label{fig:view-ax}
\end{figure}

Figure~\ref{fig:view-ax} summarizes the derivation of the \(\mathit{View}\) type
family in Cedille, and Figure~\ref{fig:view} gives its implementation.
Type \(\mathit{View} \cdot T\ t\) is the subset of type \(T\) consisting of terms
provably equal to the untyped (more precisely
\(\mathit{Top}\)-typed, see Definition~\ref{def:top}) term \(t\).
It is defined using dependent intersection: \(\mathit{View} \cdot T\ t =
\abs{\iota}{x}{T}{\{x \simeq t\}}\).

\paragraphb{Axiomatic summary.}
The introduction form \(\mathit{intrView}\) takes an untyped term \(t_1\) and
two computationally irrelevant arguments: a term \(t_2\) of type \(T\) and a
proof \(t\) that \(t_2\) is equal to \(t_1\).
The expression \(\mathit{intrView}\ t_1\ \mhyph t_2\ \mhyph t\) erases to
\((\absu{\lambda}{x}{x})\ |t_1|\).
The elimination form \(\mathit{elimView}\) takes an untyped term \(t\) and an
erased argument \(v\) proving that \(t\) may be viewed as having type \(T\),
and produces a term of type \(T\).
The crucial property of \(\mathit{elimView}\) is its erasure:
\(\mathit{elimView}\ t\ \mhyph v\) also erases to \((\absu{\lambda}{x}{x})\
|t|\), and so the expression is definitionally equal to \(t\) itself.
Finally, \(\mathit{eqView}\) provides a reasoning principle for views.
It states that every proof \(v\) that \(t\) may be viewed as having type
\(T\) is equal to \(t\).

\begin{figure}[h]
  \centering
  \small
\begin{verbatim}
module view .

import utils/top .

View ◂ Π T: ★. Top ➔ ★ = λ T: ★. λ t: Top. ι x: T. { x ≃ t } .

intrView ◂ ∀ T: ★. Π t1: Top. ∀ t2: T. { t2 ≃ t1 } ➾ View ·T t1
= Λ T. λ t1. Λ t2. Λ t. [ φ t - t2 { t1 } , β{ t1 } ] .

elimView ◂ ∀ T: ★. Π t: Top. View ·T t ➾ T
= Λ T. λ t. Λ v. φ v.2 - v.1 { t } .

eqView ◂ ∀ T: ★. ∀ t: Top. ∀ v: View ·T t. { t ≃ v }
= Λ T. Λ t. Λ v. ρ v.2 @x.{ t ≃ x } - β .

selfView ◂ ∀ T: ★. Π t: T. View ·T β{ t }
= Λ T. λ t. intrView β{ t } -t -β .

extView
◂ ∀ S: ★. ∀ T: ★. Π t: Top. (Π x: S. View ·T β{ t x }) ➾ View ·(S ➔ T) t
= Λ S. Λ T. λ t. Λ v.
  intrView ·(S ➔ T) t -(λ x. elimView β{ t x } -(v x)) -β .
\end{verbatim}
  \caption{Internalized typing (\texttt{view.ced})}
  \label{fig:view}
\end{figure}

\paragraphb{Implementation.}
We now turn to the Cedille implementation of \(\mathit{View}\) and its
operations in Figure~\ref{fig:view}.
The definition of \(\mathit{intrView}\) uses the \(\varphi\) axiom
(Figure~\ref{fig:cdle-equality}) and the Kleene trick
(Section~\ref{sec:kleene-trick}) so that the resulting \(\mathit{View} \cdot T\
t_1\) erases to \(|t_1|\) (see Figure~\ref{fig:eraser} for erasure rules).
Because of the Kleene trick, the requirement that a term both has type \(T\) and
also proves itself equal to \(t_1\) does not restrict the terms and types over
which we may form a \(\mathit{View}\).
The elimination form uses \(\varphi\) to cast \(t\) to the type \(T\) of \(v.1\)
using the equality \(\{v.1 \simeq t\}\) given by \(v.2\).
The reasoning principle \(\mathit{eqView}\) uses \(\rho\) to rewrite the
expected type \(\{t \simeq v \}\) with the equation \(\{v.1 \simeq t\}\) (\(v\)
and \(v.1\) are convertible terms).

The last two definitions of Figure~\ref{fig:view}, \(\mathit{selfView}\) and
\(\mathit{extView}\), are auxiliary.
Since they can be derived solely from the introduction and elimination forms,
they are not included in the axiomatization given in Figure~\ref{fig:view-ax}.
Definition \(\mathit{selfView}\) reflects the typing judgment that
\(t\) has type \(T\) into the proposition that \(\beta\{t\}\) can be viewed as having
type \(T\).
Definition \(\mathit{extView}\) provides an extrinsic typing principle for
functions: if we can show of an untyped term \(t\) that for all inputs \(x\) of
type \(S\) we have that \(t\ x\) can be viewed as having type \(T\), then in fact \(t\) can
be viewed as having type \(S \to T\).

\subsection{Casts}
\label{sec:casts}

Type inclusions, or \emph{casts}, are represented by functions from \(S\) to
\(T\) that are provably equal to \(\absu{\lambda}{x}{x}\) \citetext{see
  \citealp{breitner+16}, and also \citealp{firsov18b} for the related notion of
  Curry-style ``identity functions''}.
With types playing the role of elements of
the preorder, existence of a cast from type $S$ to type $T$ will play the role of
the ordering $S \sqsubseteq T$ in the proof of Theorem~\ref{thm:tarskip}.
We summarize the derivation of casts in Cedille axiomatically in
Figure~\ref{fig:casts-ax}, and show their implementation in
Figure~\ref{fig:casts}.

\begin{figure}
  \centering
  \[
    \begin{array}{c}
      \infer{
      \Gamma \vdash \mathit{Cast} \cdot S \cdot T \tpsynth \star
      }{
      \Gamma \vdash S \tpsynth \star
      \quad \Gamma \vdash T \tpsynth \star
      }
      \\ \\
      \infer{
      \Gamma \vdash \mathit{intrCast} \cdot S \cdot T\ \mhyph t\ \mhyph t'
      \tpsynth \mathit{Cast} \cdot S \cdot T
      }{
      \Gamma \vdash S \tpsynth \star
      \quad \Gamma \vdash T \tpsynth \star
      \quad \Gamma \vdash t \tpcheck S \to T
      \quad \Gamma \vdash t' \tpcheck \abs{\Pi}{x}{S}{\{t\ x \simeq x\}}
      }
      \\ \\
      \infer{
       \Gamma \vdash \mathit{elimCast} \cdot S \cdot T\ \mhyph c\ \tpsynth S \to T
      }{
      \Gamma \vdash S \tpsynth \star
      \quad \Gamma \vdash T \tpsynth \star
      \quad \Gamma \vdash c \tpcheck \mathit{Cast} \cdot S \cdot T
      }
      \\ \\
      \infer{
       \Gamma \vdash \mathit{eqCast} \cdot S \cdot T\ \mhyph c \tpsynth
      \{\absu{\lambda}{x}{x} \simeq c\}
      }{
      \Gamma \vdash S \tpsynth \star
      \quad \Gamma \vdash T \tpsynth \star
      \quad \Gamma \vdash c \tpcheck \mathit{Cast} \cdot S \cdot T
      }
      \\ \\
      \begin{array}{lll}
        |\mathit{intrCast} \cdot S \cdot T\ \mhyph t\ \mhyph t'|
        & =
        & (\absu{\lambda}{x}{(\absu{\lambda}{x}{x})\ x})\ \absu{\lambda}{x}{x}
        \\ |\mathit{elimCast} \cdot S \cdot T\ \mhyph c|
        & =
        & (\absu{\lambda}{x}{x})\ \absu{\lambda}{x}{x}
        \\ |\mathit{eqCast}|
        & =
        & \absu{\lambda}{x}{x}
      \end{array}
    \end{array}
  \]
  \caption{Casts, axiomatically}
  \label{fig:casts-ax}
\end{figure}

\paragraphb{Axiomatic summary.}
The introduction form, \(\mathit{intrCast}\), takes two erased arguments: a
function \(t : S \to T\) and a proof that \(t\) is \emph{extensionally} the
identity function for all terms of type \(S\).
In intrinsic type theories, there would not be much more to say: identity
functions cannot map from $S$ to $T$ unless $S$ and $T$ are convertible 
types.
But in an extrinsic type theory like CDLE, there are many nontrivial
casts, especially in the presence of the \(\varphi\) axiom.
Indeed, by enabling the definition of \(\mathit{extView}\), we will see that
\(\varphi\) plays a crucial role in the implementation of \(\mathit{intrCast}\).

The elimination form, \(\mathit{elimCast}\), takes as an erased argument a proof
\(c\) of the inclusion of type \(S\) into type \(T\).
Its crucial property is its erasure: in the Cedille implementation,
\(\mathit{elimCast}\ \mhyph c\) erases to a term convertible with
\(\absu{\lambda}{x}{x}\).
The last construct listed in Figure~\ref{fig:casts-ax} is \(\mathit{eqCast}\),
the reasoning principle.
It states that every witness of a type inclusion is provably equal to
\(\absu{\lambda}{x}{x}\).

\begin{figure}
\small
\begin{verbatim}
module cast.

import view .

Cast ◂ ★ ➔ ★ ➔ ★ = λ S: ★. λ T: ★. View ·(S ➔ T) β{ λ x. x } .

intrCast ◂ ∀ S: ★. ∀ T: ★. ∀ t: S ➔ T. (Π x: S. { t x ≃ x }) ➾ Cast ·S ·T
= Λ S. Λ T. Λ t. Λ t'.
  extView ·S ·T β{ λ x. x } -(λ x. intrView β{ x } -(t x) -(t' x)) .

elimCast ◂ ∀ S: ★. ∀ T: ★. Cast ·S ·T ➾ S ➔ T
= Λ S. Λ T. Λ c. elimView β{ λ x. x } -c .

eqCast ◂ ∀ S: ★. ∀ T: ★. ∀ c: Cast ·S ·T. { λ x. x ≃ c }
= Λ S. Λ T. Λ c. eqView -β{ λ x. x } -c .
\end{verbatim}
\caption{Casts (\texttt{cast.ced})}
\label{fig:casts}
\end{figure}

\paragraphb{Implementation.}
We now turn to the implementation of \(\mathit{Cast}\) in
Figure~\ref{fig:casts}.
The type \(\mathit{Cast} \cdot S \cdot T\) itself is defined to be the type of
proofs that \(\absu{\lambda}{x}{x}\) may be viewed as having type \(S \to T\).
The elimination form \(\mathit{elimCast}\) and reasoning principle
\(\mathit{eqCast}\) are direct consequences of \(\mathit{elimView}\) and
\(\mathit{eqView}\), so we focus on the introduction form \(\mathit{intrCast}\).
Without the use of \(\mathit{extView}\), we might expect the definition of
\(\mathit{intrCast}\) to be of the form
\[\mathit{intrView} \cdot (S \to T)\ \beta\{\absu{\lambda}{x}{x}\}\ \mhyph t\ 
  \mhyph \bullet\]
\noindent where \(\bullet\) holds the place of a proof of \(\{t \simeq
\absu{\lambda}{x}{x}\}\).
However, Cedille's type theory is intensional, so from the assumption that \(t\)
behaves like the identity function on terms of type \(S\) we \emph{cannot} conclude
that \(t\) is equal to \(\absu{\lambda}{x}{x}\).
Since Cedille's operational semantics and definitional equality
is over untyped terms, our assumption regarding the behavior of \(t\) on terms
of type \(S\) gives us no guarantees about the behavior of \(t\) on terms of
other types, and thus no guarantee about the \emph{intensional} structure of \(t\).

The trick used in the definition of \(\mathit{intrCast}\) is rather to give an
\emph{extrinsic typing} of the identity function, using the typing and
property of \(t\).
In the body, we use \(\mathit{extView}\) with an erased proof that assumes an
arbitrary \(x: S\) and constructs a view of \(x\) having type \(T\) using the
typing of \(t\ x\) and the proof \(t'\ x\) that \(\{t\ x \simeq x\}\).
So rather than showing \(t\) is \(\absu{\lambda}{x}{x}\), we are showing that
\(t\) justifies giving \(\absu{\lambda}{x}{x}\) the type \(S \to T\).

\subsubsection{Casts form a preorder on types.}
\begin{figure}
  \small
  \centering
\begin{verbatim}
castRefl ◂ ∀ S: ★. Cast ·S ·S
= Λ S. intrCast -(λ x. x) -(λ _. β) .

castTrans ◂ ∀ S: ★. ∀ T: ★. ∀ U: ★. Cast ·S ·T ➾ Cast ·T ·U ➾ Cast ·S ·U
= Λ S. Λ T. Λ U. Λ c1. Λ c2.
intrCast -(λ x. elimCast -c2 (elimCast -c1 x)) -(λ x. β) .

castUnique ◂ ∀ S: ★. ∀ T: ★. ∀ c1: Cast ·S ·T. ∀ c2: Cast ·S ·T. { c1 ≃ c2 }
= Λ S. Λ T. Λ c1. Λ c2. ρ ς (eqCast -c1) @x.{ x ≃ c2 } - eqCast -c2 .
\end{verbatim}
  \caption{Casts form a preorder (\texttt{cast.ced})}
  \label{fig:cast-preorder}
\end{figure}

Recall that a preorder \((S,\sqsubseteq)\) consists of a set \(S\) and a
reflexive and transitive binary relation \(\sqsubseteq\) over \(S\).
In the proof-relevant setting of type theory, establishing that a relation on
types induces a preorder requires that we also show,
for all types \(T_1\) and \(T_2\), that proofs of \(T_1 \sqsubseteq T_2\) are unique
--- otherwise, we might only be working in a category.
We now show that \(\mathit{Cast}\) satisfies all three of these properties.

\begin{theorem}
  \label{thm:cast-preorder}
  \(\mathit{Cast}\) induces a preorder (or thin category) on Cedille types.
\end{theorem}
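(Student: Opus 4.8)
The plan is to verify the three properties that, in a proof-relevant setting, characterize a thin category: reflexivity, transitivity, and thinness (uniqueness of any two casts between the same pair of types). These are realized respectively by $\mathit{castRefl}$, $\mathit{castTrans}$, and $\mathit{castUnique}$ of Figure~\ref{fig:cast-preorder}, so the task reduces to confirming that each inhabits its stated type; in every case this rests on the erasure behavior of $\mathit{intrCast}$, $\mathit{elimCast}$, and $\mathit{eqCast}$ recorded in Figure~\ref{fig:casts-ax}. For reflexivity I would instantiate $\mathit{intrCast}$ at $S, S$ with the identity $\absu{\lambda}{x}{x}$ as the erased witness $t : S \to S$, together with the erased proof that $t$ is extensionally the identity on $S$; since $(\absu{\lambda}{x}{x})\, x$ $\beta$-reduces to $x$, this obligation is discharged by $\beta$.

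For transitivity, given erased casts $c_1 : \mathit{Cast} \cdot S \cdot T$ and $c_2 : \mathit{Cast} \cdot T \cdot U$, I would form the composite $\absu{\lambda}{x}{\mathit{elimCast}\ \mhyph c_2\ (\mathit{elimCast}\ \mhyph c_1\ x)}$ and feed it to $\mathit{intrCast}$. The key observation is that each $\mathit{elimCast}\ \mhyph c_i$ erases to a term convertible with $\absu{\lambda}{x}{x}$, so the composite is itself convertible with the identity and the extensionality obligation again closes by $\beta$, producing an element of $\mathit{Cast} \cdot S \cdot U$. For thinness, given $c_1, c_2 : \mathit{Cast} \cdot S \cdot T$, I would apply $\mathit{eqCast}$ twice to obtain $\{\absu{\lambda}{x}{x} \simeq c_1\}$ and $\{\absu{\lambda}{x}{x} \simeq c_2\}$, then combine them using symmetry $\varsigma$ and a rewrite $\rho$ to derive $\{c_1 \simeq c_2\}$.

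The step demanding the most care is transitivity, and within it the extensionality obligation. Because definitional equality in Cedille is intensional and over untyped terms, we know nothing about the term structure of $c_1$ and $c_2$ in isolation and so cannot reason about the casts as literal functions; what rescues the argument is the computational transparency guaranteed by $\mathit{elimCast}$, namely that eliminating a cast yields a term definitionally equal to the bare identity. It is this property --- not anything derivable from the abstract $\mathit{Cast}$ interface --- that collapses the composite to $\absu{\lambda}{x}{x}$ and lets the proof close by $\beta$. Thinness, by contrast, is purely equational and relies only on the reasoning principle $\mathit{eqCast}$, while reflexivity is immediate.
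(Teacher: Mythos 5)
Your proposal is correct and matches the paper's proof, which simply exhibits the three Cedille terms $\mathit{castRefl}$, $\mathit{castTrans}$, and $\mathit{castUnique}$ of Figure~\ref{fig:cast-preorder}; your reconstructions of all three terms, including the reliance on the erasure of $\mathit{elimCast}$ to close the transitivity obligation by $\beta$ and the $\varsigma$/$\rho$ combination for uniqueness, coincide with the definitions given there.
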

\begin{proof}
  Figure~\ref{fig:cast-preorder} gives the proofs in Cedille of reflexivity
  (\(\mathit{castRefl}\)), transitivity (\(\mathit{castTrans}\)), and uniqueness
  (\(\mathit{castUnique}\)) for \(\mathit{Cast}\).
\end{proof}

\subsubsection{Monotonicity.}
\label{sec:mono}
\begin{figure}[h]
  \centering
  \small
\begin{verbatim}
module mono .

import cast .

Mono ◂ (★ ➔ ★) ➔ ★
= λ F: ★ ➔ ★. ∀ X: ★. ∀ Y: ★. Cast ·X ·Y ➔ Cast ·(F ·X) ·(F ·Y) .
\end{verbatim}
  \caption{Monotonicity (\texttt{mono.ced})}
  \label{fig:mono}
\end{figure}

Monotonicity of a type scheme \(F : \star \to \star\) in this preorder is
defined as a lifting, for all types \(S\) and \(T\), of any cast from \(S\) to
\(T\) to a cast from \(F \cdot S\) to \(F \cdot T\).
This is \(\mathit{Mono}\) in Figure~\ref{fig:mono}.
In the subsequent derivations of Scott and Parigot encodings, we shall omit
the details of monotonicity proofs; once the general principle behind them is
understood, these proofs are mechanical and do not provide further insight into
the encoding.
Although these are somewhat burdensome to write by hand, their repetitive nature
makes us optimistic such proofs can be automated for the signatures of standard
datatypes.

\begin{figure}[h]
  \centering
  \small
\begin{verbatim}
NatF ◂ ★ ➔ ★ = λ N: ★. ∀ X: ★. X ➔ (N ➔ X) ➔ X.

monoNatF ◂ Mono ·NatF
= Λ X. Λ Y. λ c.
  intrCast
    -(λ n. Λ Z. λ z. λ s. n ·Z z (λ r. s (elimCast -c r)))
    -(λ n. β).
\end{verbatim}
  \caption{Monotonicity for \(\mathit{NatF}\)}
  \label{fig:mono-natf}
\end{figure}

We give an example in Figure~\ref{fig:mono-natf} to highlight the method.
Type scheme \(\mathit{NatF}\) is the impredicative encoding of the signature for
natural numbers.
To prove that it is monotonic, we assume arbitrary types \(X\) and \(Y\) such
that there is a cast \(c\) from the former to the latter, and must exhibit a
cast from \(\mathit{NatF} \cdot X\) to \(\mathit{NatF} \cdot Y\).
We do this using \(\mathit{intrCast}\) on a function that is
\emph{definitionally} equal (\(\beta\eta\)-convertible modulo erasure) to the
identity function.

After assuming \(n : \mathit{NatF} \cdot X\), we introduce a term of type
\(\mathit{NatF} \cdot Y\) by abstracting over a type \(Z\) and terms \(z : Z\)
and \(s : Y \to Z\).
Instantiating the type argument of \(n\) with \(Z\), the second term argument we
need to provide must have type \(X \to Z\).
For this, we precompose \(s\) with the assumed cast \(c\) from \(X\) to \(Y\).


\subsection{Translating the proof of Theorem~\ref{thm:tarskip} to Cedille}
\label{sec:tarskitrans}

Figure~\ref{fig:recType} shows the translation of the proof of
Theorem~\ref{thm:tarskip} to Cedille, deriving monotone recursive
types.
Cedille's module system allows us to parametrize the module shown in
Figure~\ref{fig:recType} by the type scheme \(F\), and all definitions
implicitly take \(F\) as a type argument.
For the axiomatic presentation in Figure~\ref{fig:recType-ax}, we give \(F\)
explicitly.

\begin{figure}
  \small
\begin{verbatim}
module recType (F : ★ ➔ ★).

import cast .
import mono .

Rec ◂ ★ = ∀ X: ★. Cast ·(F ·X) ·X ➾ X.

recLB ◂ ∀ X: ★. Cast ·(F ·X) ·X ➾ Cast ·Rec ·X
= Λ X. Λ c. intrCast -(λ x. x ·X -c) -(λ x. β) .

recGLB ◂ ∀ Y: ★. (∀ X: ★. Cast ·(F ·X) ·X ➾ Cast ·Y ·X) ➾ Cast ·Y ·Rec
= Λ Y. Λ u. intrCast -(λ y. Λ X. Λ c. elimCast -(u -c) y) -(λ x. β) .

recRoll ◂ Mono ·F ➾ Cast ·(F ·Rec) ·Rec
= Λ mono.
  recGLB ·(F ·Rec)
    -(Λ X. Λ c. castTrans ·(F ·Rec) ·(F ·X) ·X -(mono (recLB -c)) -c) .

recUnroll ◂ Mono ·F ➾ Cast ·Rec ·(F ·Rec)
= Λ mono. recLB ·(F ·Rec) -(mono (recRoll -mono)).
\end{verbatim}
  \caption{Monotone recursive types derived in Cedille (\texttt{recType.ced})}
  \label{fig:recType}
\end{figure}

As noted in Section~\ref{sec:tarski}, it is enough to require that the set of
\(f\)-closed elements (here, \(F\)-closed types) has a greatest lower bound.
In Cedille's meta-theory~\citep{stump18c}, types are interpreted as sets
of (\(\beta\eta\)-equivalence classes of) closed terms, and in particular
the meaning of an impredicative quantification \(\abs{\forall}{X}{\star}{T}\) is
the intersection of the meanings (under different assignments of meanings to the
variable \(X\)) of the body.
Such an intersection functions as the greatest lower bound, as we will
see.

The definition of \(\mathit{Rec}\) in Figure~\ref{fig:recType} expresses the
intersection of the set of all \(F\)-closed types.
This \(\mathit{Rec}\) corresponds to \(r\) in the proof of Theorem~\ref{thm:tarskip}.
Semantically, we are taking the intersection of all those sets \(X\) which
are \(F\)-closed.
So the greatest lower bound of the set of all $f$-closed
elements in the context of a partial order is translated to the intersection of
all \(F\)-closed types, where \(X\) being \(F\)-closed means there is a cast
from \(F \cdot X\) to \(X\).
We require just an erased argument of type \(\mathit{Cast} \cdot (F \cdot X)
\cdot X\).
By making the argument erased, we express the idea of taking the
intersection of sets of terms satisfying a property, and not a set of functions
that take a proof of the property as an argument.

\begin{theorem}
  \(\mathit{Rec}\) is the greatest lower bound of the set of \(F\)-closed types.
\end{theorem}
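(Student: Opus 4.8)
The plan is to verify the two defining properties of a greatest lower bound in the preorder that $\mathit{Cast}$ induces on types (Theorem~\ref{thm:cast-preorder}); these are exactly what the constructions $\mathit{recLB}$ and $\mathit{recGLB}$ of Figure~\ref{fig:recType} supply. Recall that a type $X$ is $F$-closed precisely when there is a witness of type $\mathit{Cast} \cdot (F \cdot X) \cdot X$, that the ordering $X \sqsubseteq Y$ is read as the existence of a $\mathit{Cast} \cdot X \cdot Y$, and that $\mathit{Rec}$ is the impredicative quantification $\forall X.\, \mathit{Cast} \cdot (F \cdot X) \cdot X \Rightarrow X$ (with $X$ ranging over types). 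Unfolding ``greatest lower bound'' yields two obligations: first (\emph{lower bound}), for every $F$-closed $X$ we must produce a $\mathit{Cast} \cdot \mathit{Rec} \cdot X$; and second (\emph{greatest}), for every type $Y$ carrying a family of casts $\mathit{Cast} \cdot Y \cdot X$, one for each $F$-closed $X$, we must produce a $\mathit{Cast} \cdot Y \cdot \mathit{Rec}$. Conceptually these mirror, in the concrete setting of Cedille types, the two facts established in the proof of Theorem~\ref{thm:tarskip}; indeed, the present theorem \emph{discharges} the standing hypothesis of that theorem that the $F$-closed types admit a greatest lower bound.

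For the lower-bound obligation I would exploit the fact that $\mathit{Rec}$ quantifies over exactly the $F$-closedness assumption. Given $x$ of type $\mathit{Rec}$ together with an $F$-closed $X$ and its witness $c$, instantiating the quantifier at $X$ and supplying the erased $c$ produces the term $x \cdot X\ \mhyph c$ of type $X$. By the erasure rules of Figure~\ref{fig:eraser}, both type instantiation and erased application vanish, so $\lambda x.\, x \cdot X\ \mhyph c$ erases to the bare identity; I can therefore package it through $\mathit{intrCast}$, discharging the extensionality side condition with a trivial $\beta$-equation. This is $\mathit{recLB}$.

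For the greatest obligation I would build a term of type $\mathit{Rec}$ out of an arbitrary $y$ of type $Y$: abstracting over an $F$-closed $X$ and its witness $c$, I apply the given family (call it $u$) at $c$ to obtain a $\mathit{Cast} \cdot Y \cdot X$, run $\mathit{elimCast}$ on it to recover a function $Y \to X$, and feed it $y$. Since $\mathit{elimCast}$ erases to a term convertible with the identity, the whole construction is once again extensionally the identity on $y$, so $\mathit{intrCast}$ (again with a $\beta$-witness) delivers the required $\mathit{Cast} \cdot Y \cdot \mathit{Rec}$. This is $\mathit{recGLB}$.

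The only genuinely delicate point --- and the sole place where Cedille's features are indispensable --- is that both directions must be realized by the identity function modulo erasure, for otherwise $\mathit{intrCast}$ could not apply. This is precisely where the \emph{erased} product in the definition of $\mathit{Rec}$ earns its keep: the $F$-closedness witnesses drive the typing of the quantifier's elimination and introduction, yet they are computationally irrelevant, so both the projection out of $\mathit{Rec}$ and the packaging of $y$ into $\mathit{Rec}$ compute as $\lambda x.\, x$. A small attendant check is that each implicit abstraction over a cast witness is well-formed, i.e.\ the witness does not occur in the erasure of its body; this holds automatically because the witness enters only through erased applications. With the identity-erasure secured, verifying the $\mathit{intrCast}$ side conditions is routine, and the substantive content --- that the impredicative intersection realizes the greatest lower bound --- resides entirely in this erasure behavior.
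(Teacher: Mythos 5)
Your proposal is correct and follows essentially the same route as the paper: the lower-bound half is exactly $\mathit{recLB}$ (instantiate the quantifier of $x : \mathit{Rec}$ at the $F$-closed $X$ with the erased witness, then package via $\mathit{intrCast}$), and the greatest half is exactly $\mathit{recGLB}$ (abstract over an $F$-closed $X$ with erased witness $c$, use $\mathit{elimCast}$ on $u\ \mhyph c$ to send $y$ to $X$). Your emphasis on the erasure of both constructions to the identity function, enabled by the erased quantification in $\mathit{Rec}$, is precisely the point the paper makes as well.
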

\begin{proof}
  In Figure~\ref{fig:recType}, definition \(\mathit{recLB}\) establishes that
  \(\mathit{Rec}\) is a lower bound of this set and \(\mathit{recGLB}\)
  establishes that it greater than or equal to any other lower bound, i.e., for
  any other lower bound \(Y\) there is a cast from \(Y\) to \(\mathit{Rec}\).
  For \(\mathit{recLB}\), assume we have an \(F\)-closed type \(X\) and some \(x
  : \mathit{Rec}\).
  It suffices to give a term of type \(X\) that is definitionally equal to \(x\).
  Instantiate the type argument of \(x\) to \(X\) and use the proof that \(X\)
  is \(F\)-closed as an erased argument.

  For \(\mathit{recGLB}\), assume we have some \(Y\) which is a lower bound
  of the set of all \(F\)-closed types, witnessed by \(u\), and a term \(y: Y\).
  It suffices to give a term of type \(\mathit{Rec}\) that is definitionally
  equal to \(y\).
  By the definition of \(\mathit{Rec}\), we assume an arbitrary type \(X\) that
  is \(F\)-closed, witnessed by \(c\), and must produce a term of type \(X\).
  Use the assumption \(u\) and \(c\) to cast \(y\) to the type \(X\), noting that
  abstraction over \(X\) and \(c\) is erased.
\end{proof}

In Figure~\ref{fig:recType}, \(\mathit{recRoll}\) implements part 1 of the proof
of Theorem~\ref{thm:tarskip}, and \(\mathit{recUnroll}\) implements part 2.
In \(\mathit{recRoll}\), we invoke the property that \(\mathit{Rec}\) contains
any other lower bound of the set of \(F\)-closed types in order to show
\(\mathit{Rec}\) contains \(F \cdot \mathit{Rec}\), and must show
that \(F \cdot \mathit{Rec}\) is included into any arbitrary \(F\)-closed type \(X\).
We do so using the fact that \(Rec\) is also a lower bound of this set
(\(\mathit{recLB}\)) and so is contained in \(X\), monotonicity of \(F\), and
transitivity of \(\mathit{Cast}\) with the assumption that \(F \cdot X\) is
contained in \(X\).
In \(\mathit{recUnroll}\), we use \(\mathit{recRoll}\) and monotonicity of \(F\)
to obtain that \(F \cdot Rec\) is \(F\)-closed, then use \(\mathit{recLB}\) to
conclude.
It is here we see the impredicativity noted earlier: in \(\mathit{recLB}\), we
instantiate the type argument of \(x : \mathit{Rec}\) to the
given type \(X\); in \(\mathit{recUnroll}\), the given type is \(F \cdot
\mathit{Rec}\).
This would not be possible in a predicative type theory.

\subsection{Operational semantics for \(\mathit{Rec}\)}
\label{sec:rec-opsem}

We conclude this section by giving the definitions of the constant-time
\(\mathit{roll}\) and \(\mathit{unroll}\) operators for recursive types in
Figure~\ref{fig:roll-unroll}.
The derivation of recursive types with these operators is summarized
axiomatically in Figure~\ref{fig:recType-ax}.

\begin{figure}
  \centering
  \small
\begin{verbatim}
roll ◂ Mono ·F ➾ F ·Rec ➔ Rec
= Λ m. elimCast -(recRoll -m) .

unroll ◂ Mono ·F ➾ Rec ➔ F ·Rec
= Λ m. elimCast -(recUnroll -m) .

_ ◂ { roll   ≃ λ x. x } = β.
_ ◂ { unroll ≃ λ x. x } = β.

recIso1 ◂ { λ x. roll (unroll x) ≃ λ x. x} = β.
recIso2 ◂ { λ x. unroll (roll x) ≃ λ x. x} = β.
\end{verbatim}
  \caption{Operators \(\mathit{roll}\) and \(\mathit{unroll}\)
    (\texttt{recType.ced})}
  \label{fig:roll-unroll}
\end{figure}

Operations \(\mathit{roll}\) and \(\mathit{unroll}\) are implemented simply by
using the elimination form for casts on resp.\ \(\mathit{recRoll}\) and
\(\mathit{recUnroll}\), assuming a proof \(m\) that \(F\) is monotonic.
By erasure, this means both operations erase to \((\absu{\lambda}{x}{x})\
\absu{\lambda}{x}{x}\), and thus they are both definitionally equal to
\(\absu{\lambda}{x}{x}\).
This is show in Figure~\ref{fig:roll-unroll} with two anonymous proofs
(indicated by \(\_\)) of equality types that hold by \(\beta\) alone.
This fact makes trivial the proof that these operators for recursive types
satisfy the desired laws.

\begin{theorem}
  For all \(F : \star \to \star\) and monotonicity witnesses \(m : \mathit{Mono}
  \cdot F\), function \(\mathit{roll} \cdot F\ \mhyph m : F \cdot (\mathit{Rec} \cdot F) \to
  \mathit{Rec} \cdot F\) has a two-sided inverse \(\mathit{unroll}\cdot F\ \mhyph m : \mathit{Rec} \cdot F
  \to F \cdot (\mathit{Rec} \cdot F)\).
\end{theorem}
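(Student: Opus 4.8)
The plan is to reduce the entire statement to the erasure behaviour of $\mathit{roll}$ and $\mathit{unroll}$, which has already been isolated in Figure~\ref{fig:roll-unroll}. Both operators are defined by applying $\mathit{elimCast}$ to the cast witnesses $\mathit{recRoll}\ \mhyph m$ and $\mathit{recUnroll}\ \mhyph m$ coming from the Cedille rendering of Theorem~\ref{thm:tarskip}, and by the erasure rule for $\mathit{elimCast}$ (Figure~\ref{fig:casts-ax}) each erases to $(\lambda x.\ x)\ \lambda x.\ x$, which $\beta$-reduces to the identity. Since the monotonicity argument $m$ is erased and so plays no computational role, this holds uniformly for every $F$ and every $m : \mathit{Mono} \cdot F$.

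From this, the two inverse laws follow by mere reduction. Because erasure is a homomorphism for application and $\lambda$-abstraction, the composite $\lambda x.\ (\mathit{roll}\cdot F\ \mhyph m)\ ((\mathit{unroll}\cdot F\ \mhyph m)\ x)$ erases to $\lambda x.\ A\ (A\ x)$ with $A = (\lambda x.\ x)\ \lambda x.\ x$, and since $A \to_\beta \lambda x.\ x$ this $\beta$-reduces to $\lambda x.\ x$; the other composite is handled symmetrically. Hence both equality types
\[
\{\lambda x.\ \mathit{roll}\ (\mathit{unroll}\ x) \simeq \lambda x.\ x\}
\qquad
\{\lambda x.\ \mathit{unroll}\ (\mathit{roll}\ x) \simeq \lambda x.\ x\}
\]
are inhabited by $\beta$ alone; these are exactly $\mathit{recIso1}$ and $\mathit{recIso2}$. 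The first exhibits $\mathit{unroll}$ as a right inverse of $\mathit{roll}$ and the second as a left inverse, so $\mathit{unroll}$ is the required two-sided inverse. In fact the compositions are \emph{definitionally} (not merely extensionally) the identity, which is strictly stronger than the iso-recursive requirement fixed in the introduction.

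I expect no genuine obstacle at this final step: all the difficulty has been discharged upstream, in forcing $\mathit{recRoll}$ and $\mathit{recUnroll}$ to be inhabitants of $\mathit{Cast}$ rather than arbitrary functions. It is precisely the choice to interpret the order $\sqsubseteq$ as type inclusion --- where $\mathit{eqCast}$ guarantees every cast witness is provably equal to $\lambda x.\ x$ --- that collapses the isomorphism laws to reflexivity and makes $\mathit{roll}$ and $\mathit{unroll}$ constant-time, in contrast to the Church-encoding route of \citet{Wad90_Recursive-Types-for-Free}, where the corresponding operators are not identities.
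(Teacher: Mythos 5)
Your proposal is correct and follows the same route as the paper: the paper's proof is simply ``by definitional equality; see \(\mathit{recIso1}\) and \(\mathit{recIso2}\),'' and your argument unfolds exactly why those two proofs go through by \(\beta\) alone --- both operators erase to \((\absu{\lambda}{x}{x})\ \absu{\lambda}{x}{x}\), so each composite reduces to the identity. The additional remarks about the role of \(\mathit{Cast}\) and the erased monotonicity witness are accurate context but not needed for the proof itself.
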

\begin{proof}
  By definitional equality; see \(\mathit{recIso1}\) and \(\mathit{recIso2}\) in Figure~\ref{fig:roll-unroll}.
\end{proof}

\begin{figure}
  \centering
  \[
    \begin{array}{c}
      \infer{
      \Gamma \vdash \mathit{Rec} \cdot F \tpsynth \star
      }{
      \Gamma \vdash F \tpsynth \star \to \star
      }
      \\ \\
      \begin{array}{cc}
        \infer{
        \Gamma \vdash \mathit{roll}\cdot F\ \mhyph t
        \tpsynth F \cdot (\mathit{Rec} \cdot F) \to \mathit{Rec} \cdot F
        }{
        \Gamma \vdash F \tpsynth \star \to \star
        \quad \Gamma \vdash t \tpcheck \mathit{Mono} \cdot F
        }
        &
          \infer{
          \Gamma \vdash \mathit{unroll} \cdot F\ \mhyph t \tpsynth \mathit{Rec}
          \cdot F \to F \cdot (\mathit{Rec} \cdot F)
          }{
          \Gamma \vdash F \tpsynth \star \to \star
          \quad \Gamma \vdash t \tpcheck \mathit{Mono} \cdot F
          }
      \end{array}
      \\ \\
      \begin{array}{lll}
        |\mathit{roll}|
        & =
        & (\absu{\lambda}{x}{x})\ \absu{\lambda}{x}{x}
        \\ |\mathit{unroll}|
        & =
        & (\absu{\lambda}{x}{x})\ \absu{\lambda}{x}{x}
      \end{array}
    \end{array}
  \]
  \caption{Monotone recursive types, axiomatically}
  \label{fig:recType-ax}
\end{figure}

We remark that, given the erasures of \(\mathit{roll}\) and \(\mathit{unroll}\),
the classification of \(\mathit{Rec}\) as either being iso-recursive or
equi-recursive is unclear.
On the one hand, \(\mathit{Rec} \cdot F\) and its one-step unrolling are not
definitionally equal types and require explicit operations to pass between the
two.
On the other hand, their \emph{denotations} as sets of \(\beta\eta\)-equivalence
classes of untyped lambda terms are equal, and in intensional type theories with
iso-recursive types it is not usual that the equation \(\mathit{roll}\
(\mathit{unroll}\ t) = t\) holds by the operational semantics (however, the
equation \(\mathit{unroll}\ (\mathit{roll}\ t) = t\) should, unless one is
satisfied with Church encodings).
Instead, we view \(\mathit{Rec}\) as a synthesis of these two formulations of
recursive types.

\section{Datatypes and recursion schemes}
\label{sec:datatypes-and-recursion-schemes}
Before we proceed with the application of derived recursive types to encodings of
datatypes with induction in Cedille, we first elaborate on the close connection
between datatypes, structured recursion schemes, and impredicative encodings.
An inductive datatype $D$ can be understood semantically as the least
fixpoint of a signature functor $F$.
Together with \(D\) comes a generic constructor, \(\mathit{inD} : F \cdot D \to
D\), which we can understand as building a new value of \(D\) from an
``\(F\)-collection'' of predecessors.
For example, the datatype \(\mathit{Nat}\) of natural numbers has the signature
\(\abs{\lambda}{X}{\star}{1 + X}\), where \(+\) is the binary coproduct type
constructor and \(1\) is the unitary type.
The more familiar constructors \(\mathit{zero} : \mathit{Nat}\) and \(\mathit{suc} :
\mathit{Nat} \to \mathit{Nat}\) can be merged together into a single constructor
\(\mathit{inNat} : (1 + \mathit{Nat}) \to \mathit{Nat}\).

What separates our derived monotone recursive types (which also constructs a
least fixpoint of \(F\)) and inductive datatypes is, essentially, the difference
between preorder theory and category theory: 
\emph{proof relevance}.
In moving from the first setting to the second, we observe the following
correspondences.
\begin{itemize}
\item The ordering corresponds to \emph{morphisms}.
  Here, this means working with functions \(S \to T\), not type inclusions
  \(\mathit{Cast} \cdot S \cdot T\).
  While there is at most one witness of an inclusion from one type to another,
  there of course may be multiple functions.
  
\item Monotonicity corresponds to \emph{functoriality}.
  Here, this means that a type scheme \(F\) comes with an operation
  \(\mathit{fmap}\) that lifts, for all types \(S\) and \(T\), functions \(S \to
  T\) to functions \(F \cdot S \to F \cdot T\). This lifting must respect
  identity and composition of functions.
  We give a formal definition in Cedille of functors later in
  Section~\ref{sec:functor-sigma}.
  
\item Where we had \(F\)-closed sets, we now have \emph{\(F\)-algebras}.
  Here, this means a type \(T\) together with a function \(t : F \cdot T \to
  T\).
\end{itemize}

Carrying the correspondence further, in Section~\ref{sec:tarskitrans} we proved
that \(\mathit{Rec} \cdot F\) is a lower bound of the set of \(F\)-closed types.
The related property for a datatype \(D\) with signature functor \(F\)
is the existence of a \emph{iteration operator}, \(\mathit{foldD}\), satisfying
both a typing and (because of the proof-relevant setting) a computation law.
This is shown in Figure~\ref{fig:scheme-iter}.

\begin{figure}[!h]
  \centering
  \[
    \begin{array}{c}
      \infer{
      \Gamma \vdash \mathit{foldD} \cdot T\ t \tpsynth D \to T
      }{
      \Gamma \vdash T \tpsynth \star
      \quad \Gamma \vdash t \tpcheck F \cdot T \to T
      }
      \\ \\
      |\mathit{foldD}\ t\ (\mathit{inD}\ t')| \rightsquigarrow |t\ (\mathit{fmap}\ (\mathit{foldD}\ t)\ t')|
    \end{array}
  \]
  \caption{Generic iteration scheme}
  \label{fig:scheme-iter}
\end{figure}

For the typing law, we read \(T\) as the type of results we wish to iteratively
compute and \(t\) as a function that constructs a result from an
\(F\)-collection of previous results recursively computed from predecessors.
This reading is confirmed by the computation law, which states that for all
\(T\), \(t\), and \(t'\), the function \(\mathit{foldD}\ t\) acts on
\(\mathit{inD}\ t'\) by first making recursive calls on the subdata of \(t'\)
(accessed using \(\mathit{fmap}\)) then using \(t\) to compute a result from this.
Instantiating \(F\) with the signature for \(\mathit{Nat}\), and working through
standard isomorphisms, we can specialize the typing and computation laws of the
generic iteration scheme to the usual laws for iteration over \(\mathit{Nat}\),
shown in Figure~\ref{fig:scheme-iter-nat}.

\begin{figure}[h]
  \centering
  \[
    \begin{array}{c}
      \infer{
      \Gamma \vdash \mathit{foldNat} \cdot T\ t_1\ t_2 \tpsynth \mathit{Nat} \to T
      }{
      \Gamma \vdash T \tpsynth \star
      \quad \Gamma \vdash t_1 \tpcheck T
      \quad \Gamma \vdash t_2 \tpcheck T \to T
      }
      \\ \\
      \begin{array}{lcl}
        |\mathit{foldNat}\ t_1\ t_2\ \mathit{zero}|
        & \rightsquigarrow
        & |t_1|
        \\ |\mathit{foldNat}\ t_1\ t_2\ (\mathit{suc}\ t)|
        & \rightsquigarrow
        & |t_2\ (\mathit{foldNat} \cdot T\ t_1\ t_2\ t)|
      \end{array}
    \end{array}
  \]
  \caption{Iteration scheme for \(\mathit{Nat}\)}
  \label{fig:scheme-iter-nat}
\end{figure}

Following the approach of \citet{Ge14_Church-Scott-Encoding}, we present lambda
encodings as solutions to structured recursion schemes over datatypes and
evaluate them by how well they simulate the computation laws for these schemes.
Read Figure~\ref{fig:scheme-iter} as a collection of constraints with unknowns \(D\),
\(\mathit{foldD}\), and \(\mathit{inD}\).
From these constraints, we can calculate an encoding of \(D\) in
System~F\(^{\omega}\) that is a variant of the Church encoding.
This is shown in Figure~\ref{fig:data-iter}.
For comparison, the figure also shows the familiar Church encoding of naturals,
which we may similarly read from the iteration scheme for \(\mathit{Nat}\).

\begin{figure}[h]
  \centering
  \[
    \begin{array}{lll}
      D
      & =
      & \abs{\forall}{X}{\star}{(F \cdot X \to X) \to X}
      \\
      \mathit{foldD}
      & =
      & \absu{\Lambda}{X}{\absu{\lambda}{a}{\absu{\lambda}{x}{x \cdot X\ a}}}
      \\
      \mathit{inD}
      & =
      & \absu{\lambda}{x}{\absu{\Lambda}{X}{\absu{\lambda}{a}{a\ (\mathit{fmap} \cdot
        D \cdot X\ (\mathit{foldD}\ \cdot X\ a)\ x)}}}
      \\ \\
      \mathit{Nat}
      & =
      & \abs{\forall}{X}{\star}{X \to (X \to X) \to X}
      \\ \mathit{foldNat}
      & =
      &
        \absu{\Lambda}{X}{\absu{\lambda}{z}{\absu{\lambda}{s}{\absu{\lambda}{n}{n
        \cdot X\ z\ s}}}}
      \\ \mathit{zero}
      & =
      & \absu{\Lambda}{X}{\absu{\lambda}{z}{\absu{\lambda}{s}{z}}}
      \\ \mathit{suc}
      & =
      &
        \absu{\lambda}{n}{\absu{\Lambda}{X}{\absu{\lambda}{z}{\absu{\lambda}{s}{s\
        (\mathit{foldNat} \cdot X\ z\ s\ n)}}}}
    \end{array}
  \]
  \caption{Church encoding of \(D\) and \(\mathit{Nat}\)}
  \label{fig:data-iter}
\end{figure}

For the typing law, we let \(D\) be the type of functions, polymorphic
in \(X\), that take functions of type \(F \cdot X \to X\) to a result of type \(X\).
For the iterator \(\mathit{foldD}\), we provide the given \(a : F \cdot X
\to X\) as an argument to the encoding.
Finally, we use the right-hand side of the computation law to give a definition for 
constructor \(\mathit{inD}\), and we can confirm that \(\mathit{inD}\) has type
\(F \cdot D \to D\).
Thus, the Church encoding of \(D\) arises as a direct solution to the iteration
scheme for \(D\) in (impredicative) polymorphic lambda calculi.

Notice that with the definitions of \(\mathit{inD}\) and
\(\mathit{foldD}\), we simulate the computation law for iteration in a
constant number of \(\beta\)-reduction steps.
For call-by-name operational semantics, we see that 
\[
  \begin{array}{clclcl}
    & |\mathit{foldD}\ t\ (\mathit{inD}\ t')|
    & \rightsquigarrow
    & (\absu{\lambda}{x}{x\ |t|})\ |\mathit{inD}\ t'|
    & \rightsquigarrow
    & |\mathit{inD}\ t'\ t|
    \\ \rightsquigarrow
    & (\absu{\lambda}{a}{a\ |(\mathit{fmap}\ (\mathit{foldD}\ a)\ t')|})\ |t|
    & \rightsquigarrow
    & |t\ (\mathit{fmap}\ (\mathit{foldD}\ t)\ t')|
  \end{array}
\]
For call-by-value semantics, we would assume that \(t\) and \(t'\) are values
and first reduce \(\mathit{inD}\ t'\).

The issue of inefficiency in computing predecessor for Church naturals has an
analogue for an arbitrary datatype \(D\) that only supports iteration.
The destructor for datatype \(D\) is an operator \(\mathit{outD}\) satisfying
the typing and computation law of Figure~\ref{fig:destruct-laws1}.
\begin{figure}[h]
  \centering
  \[
    \begin{array}{cc}
      \infer{
      \Gamma \vdash \mathit{outD} : D \to F \cdot D
      }{
      }
      & \quad \quad
      |\mathit{outD}\ (\mathit{inD}\ t)| \rightsquigarrow |t|
    \end{array}
  \] 
  \caption{Generic destructor}
  \label{fig:destruct-laws1}
\end{figure}

With \(\mathit{foldD}\), we can define a candidate for the destructor that
satisfies the desired typing as \(\mathit{outD} = \mathit{foldD}\
(\mathit{fmap}\ \mathit{inD})\).
However, this definition of \(\mathit{outD}\) does not efficiently simulate the
computation law.
By definitional equality alone, we have only
\[|\mathit{outD}\ (\mathit{inD}\ t)| \rightsquigarrow^* |\mathit{fmap}\ \mathit{inD}\
  (\mathit{fmap}\ \mathit{outD}\ t)|\]
which means we recursively destruct predecessors of \(t\) only to reconstruct them
with \(\mathit{inD}\).
In particular, if \(t\) is a variable the recursive call becomes stuck, and we
cannot reduce further to obtain a right-hand side of \(t\).

\subsection{Characterizing datatype encodings}
\label{sec:data-char}

Throughout the remainder of this paper, we will give a thorough characterization
of our Scott and Parigot encodings.
Specifically, in addition to the iteration scheme and destructor discussed
above, we will later introduce the \emph{case-distinction} and
\emph{primitive recursion} schemes to explain the definitions of the
Scott and Parigot encodings, respectively.
Then, for each combination of encoding and scheme we consider two properties: the
efficiency (under call-by-name and call-by-value semantics) with which we can
simulate the computation law of the scheme with the encoding, and the
provability of the \emph{extensionality} law for that scheme using the derived
induction principle for the encoding. 

\begin{figure}[h]
  \centering
  \small
\begin{verbatim}
module data-char/iter-typing (F: ★ ➔ ★) .

Alg ◂ ★ ➔ ★ = λ X: ★. F ·X ➔ X .

Iter ◂ ★ ➔ ★ = λ D: ★. ∀ X: ★. Alg ·X ➔ D ➔ X .
\end{verbatim}
  \caption{Iteration typing (\texttt{data-char/iter-typing.ced})}
  \label{fig:iter-typing}
\end{figure}

We now detail the criteria we shall use, and the corresponding Cedille
definitions, for the iteration scheme and destructor.
This begins with Figure~\ref{fig:iter-typing}, which takes as a module parameter
a type scheme \(F : \star \to \star\) and gives type definitions for the typing
law of the iteration scheme.
Type family \(\mathit{Alg}\) gives the shape of the types of functions used in
iteration, and family \(\mathit{Iter}\) gives the shape of the type of the
combinator \(\mathit{foldD}\) itself.

\paragraphb{Iteration scheme.}

\begin{figure}
  \centering
  \small
\begin{verbatim}
module data-char/iter
  (F: ★ ➔ ★) (fmap: ∀ X: ★. ∀ Y: ★. (X ➔ Y) ➔ F ·X ➔ F ·Y)
  (D: ★) (inD: F ·D ➔ D).

import data-char/iter-typing ·F .

AlgHom ◂ Π X: ★. Alg ·X ➔ (D ➔ X) ➔ ★
= λ X: ★. λ a: Alg ·X. λ h: D ➔ X.
  ∀ xs: F ·D. { h (inD xs) ≃ a (fmap h xs) } .

IterBeta ◂ Iter ·D ➔ ★
= λ foldD: Iter ·D.
  ∀ X: ★. ∀ a: Alg ·X. AlgHom ·X a (foldD a) .

IterEta ◂ Iter ·D ➔ ★
= λ foldD: Iter ·D.
  ∀ X: ★. ∀ a: Alg ·X. ∀ h: D ➔ X. AlgHom ·X a h ➔
  Π x: D. { h x ≃ foldD a x } .
\end{verbatim}
  \caption{Iteration characterization (\texttt{data-char/iter.ced})}
  \label{fig:iter-laws}
\end{figure}

Figure~\ref{fig:iter-laws} lists the computation and extensionality laws for the
iteration scheme.
For the module parameters, read \(\mathit{fmap}\) as the functorial operation lifting
functions over type scheme \(F\), \(D\) as the datatype, and \(\mathit{inD}\) as
the constructor.
\(\mathit{IterBeta}\) expresses the computation law using the auxiliary
definition \(\mathit{AlgHom}\) (the category-theoretic notion of an
\(F\)-algebra homomorphism from \(\mathit{inD}\)): for all \(\mathit{xs} : F \cdot D\), 
\(X\), and \(a : \mathit{Alg} \cdot X\), we have that \(\mathit{foldD}\ a\ (\mathit{inD}\
\mathit{x})\) is propositionally equal to \(a\ (\mathit{fmap}\ (\mathit{foldD}\ a)\
\mathit{xs})\).
For all datatype encodings and recursion schemes, we will be careful to note
whether the computation law is efficiently simulated (under both call-by-name
and call-by-value operational semantics) by the implementation we give for that
scheme.

The extensionality law, given by \(\mathit{IterEta}\), expresses the property that
a candidate for the combinator \(\mathit{foldD}\) for iteration is a
\emph{unique} solution to the computation law. 
More precisely, if there is any other function \(h : D \to X\) that satisfies
the computation law with respect to some \(a : \mathit{Alg} \cdot X\), then
\(\mathit{foldD}\ a\) and \(h\) are equal up to function extensionality.
Extensionality laws are proved with induction.

\begin{figure}[h]
  \centering
  \small
\begin{verbatim}
module data-char/destruct (F: ★ ➔ ★) (D: ★) (inD : F ·D ➔ D) .

Destructor ◂ ★ = D ➔ F ·D .

Lambek1 ◂ Destructor ➔ ★
= λ outD: Destructor. Π xs: F ·D. { outD (inD xs) ≃ xs } .

Lambek2 ◂ Destructor ➔ ★
= λ outD: Destructor. Π x: D. { inD (outD x) ≃ x } .
\end{verbatim}
  \caption{Laws for the datatype destructor (\texttt{data-char/destruct.ced})}
  \label{fig:destruct-laws}
\end{figure}

\paragraphb{Destructor.}
In Figure~\ref{fig:destruct-laws}, \(\mathit{Destructor}\) gives the type
of the generic data destructor, and \(\mathit{Lambek1}\) and
\(\mathit{Lambek2}\) together state that the property that candidate destructor
\(\mathit{outD}\) is a two-sided inverse of the constructor \(\mathit{inD}\).
The names for these properties come from \emph{Lambek's lemma}
\cite{Lam68_A-Fixpoint-Theorem-for-Complete-Categories}, which states that the
action of the initial algebra is an isomorphism.
For all encodings of datatypes, we will be careful to note whether
\(\mathit{Lambek1}\) (the computation law) is efficiently simulated by our
solution for the encoding's destructor under call-by-name and call-by-value
operational semantics.
As we noted earlier, for the generic Church encoding the solution for
\(\mathit{outD}\) is \emph{not} an efficient one.
Proofs of \(\mathit{Lambek2}\) (the extensionality law) will be given by
induction.

\section{Scott encoding}
\label{sec:scott}
The Scott encoding was first described in unpublished lecture notes by
\citet{Sco62_A-System-of-Functional-Abstraction}, and appears also in the work
of \citet{parigot89, parigot1992}.
Unlike Church naturals, an efficient predecessor function is
definable for Scott naturals \citep{parigot89}, but it is not known how to
express the type of Scott-encoded data in
System~F\footnote{In an unpublished note, \citet{ACP93_Types-for-Scott} claim to
present a type for Scott naturals in System~F that lacks an efficient
predecessor; our view is that this type is a form of Church encoding.}
\citep[point towards a negative
result]{SU99_Type-Fixpoints-Iteration-Recursion}.
Furthermore, it is not obvious how to define recursive functions over Scott
encodings without a general fixpoint mechanism for terms.

As a first application of monotone recursive types in Cedille -- and as a
warm-up for the generic derivations to come -- we show how to derive
Scott-encoded natural numbers with a weak form of induction.
By \emph{weak}, we mean that this form of induction does not provide an
inductive hypothesis, only a mechanism for proof by case analysis. 
In Section~\ref{sec:lr}, we will derive both primitive recursion and standard
induction for Scott encodings.

The Scott encoding can be seen as a solution to the case-distinction scheme in
polymorphic lambda calculi with recursive types.
We give the typing and computation laws for this scheme for \(\mathit{Nat}\) in
Figure~\ref{fig:nat-case}.
Unlike the iteration scheme, in the successor case the case-distinction
scheme provides direct access to the predecessor itself but no apparent form of
recursion.
\begin{figure}[h]
  \centering
  \[
    \begin{array}{c}
      \infer{
      \Gamma \vdash \mathit{caseNat} \cdot T\ t_1\ t_2 \tpsynth \mathit{Nat} \to T
      }{
      \Gamma \vdash T \tpsynth \star
      \quad \Gamma \vdash t_1 \tpcheck T
      \quad \Gamma \vdash t_2 \tpcheck \mathit{Nat} \to T
      }
      \\ \\
      \begin{array}{lll}
        |\mathit{caseNat} \cdot T\ t_1\ t_2\ \mathit{zero}|
        & \rightsquigarrow
        & |t_1|
        \\ |\mathit{caseNat} \cdot T\ t_1\ t_2\ (\mathit{suc}\ n)|
        & \rightsquigarrow
        &  |t_2\ n|
      \end{array}
    \end{array}
  \]  
  \caption{Typing and computation laws for case distinction on \(\mathit{Nat}\)}
  \label{fig:nat-case}
\end{figure}

Using \(\mathit{caseNat}\), we can define the predecessor function
\(\mathit{pred}\) for naturals.
\[\mathit{pred} = \mathit{caseNat} \cdot \mathit{Nat}\ \mathit{zero}\ \mathit{\absu{\lambda}{x}{x}}\]
Function \(\mathit{pred}\) then computes as follows over the constructors of
\(\mathit{Nat}\).
\[
  \begin{array}{lclcl}
    |\mathit{pred}\ \mathit{zero}|
    & \rightsquigarrow
    & |\mathit{zero}|
    \\ |\mathit{pred}\ (\mathit{suc}\ n)|
    & \rightsquigarrow
    & |(\absu{\lambda}{x}{x})\ n|
    & \rightsquigarrow
    & |n|
  \end{array}
\]
Thus we see that with an efficient simulation of \(\mathit{caseNat}\), we have
an efficient implementation of the predecessor function.

Using the same method as discussed in Section~\ref{sec:data-char}, from the
typing and computation laws we obtain the solutions for \(\mathit{Nat}\),
\(\mathit{caseNat}\), \(\mathit{zero}\), and \(\mathit{suc}\) given in
Figure~\ref{fig:scott-nat}.
For the definition of \(\mathit{Nat}\), the premises of the typing law mention
\(\mathit{Nat}\) itself, so a direct solution requires some form of recursive
types.
For readability, the solution in Figure~\ref{fig:scott-nat} uses isorecursive
types --- so \(|\mathit{unroll}(\mathit{roll}(t))| \rightsquigarrow |t|\) for
all \(t\).
It is then a mechanical exercise to confirm that the computation laws are
efficiently simulated by these definitions.

\begin{figure}[h]
  \centering
  \[
    \begin{array}{lcl}
      \mathit{Nat}
      & =
      & \absu{\mu}{N}{\absu{\forall}{X}{X \to (N \to X) \to X}}
      \\ \mathit{caseNat}
      & =
      &
        \absu{\Lambda}{X}{\absu{\lambda}{z}{\absu{\lambda}{s}{\absu{\lambda}{x}{\mathit{unroll}(x)
        \cdot X\ z\ s}}}}
      \\ \mathit{zero}
      & =
      & \mathit{roll}(\absu{\Lambda}{X}{\absu{\lambda}{z}{\absu{\lambda}{s}{z}}})
      \\ \mathit{suc}
      & =
      &
        \absu{\lambda}{x}{\mathit{roll}(\absu{\Lambda}{X}{\absu{\lambda}{z}{\absu{\lambda}{s}{s\ x}}})}
    \end{array}
  \]
  \caption{Scott naturals in System~F with isorecursive types}
  \label{fig:scott-nat}
\end{figure}

\subsection{Scott-encoded naturals, concretely}
\label{sec:scott-nat}

Our construction of Scott-encoded naturals supporting weak induction
consists of three stages.
In Figure~\ref{fig:scott-nat-1} we give the definition
of the non-inductive datatype signature \(\mathit{NatF}\) with its constructors.
In Figure~\ref{fig:scott-nat-2} we define a predicate \(\mathit{WkIndNatF}\) over
types \(N\) and terms \(n\) of type \(\mathit{NatF} \cdot N\) that says a
certain form of weak induction suffices to prove properties about \(n\).
Finally, in Figure~\ref{fig:scott-nat-3} the type \(\mathit{Nat}\) is given
using recursive types and the desired weak induction principle for
\(\mathit{Nat}\) is derived.

\begin{figure}[h]
  \centering
  \small
\begin{verbatim}
module scott/concrete/nat .

import view .
import cast .
import mono .
import recType .

NatF ◂ ★ ➔ ★ = λ N: ★. ∀ X: ★. X ➔ (N ➔ X) ➔ X.

zeroF ◂ ∀ N: ★. NatF ·N
= Λ N. Λ X. λ z. λ s. z.

sucF ◂ ∀ N: ★. N ➔ NatF ·N
= Λ N. λ n. Λ X. λ z. λ s. s n.

monoNatF ◂ Mono ·NatF = <..>
\end{verbatim}
  \caption{Scott naturals (part 1) (\texttt{scott/concrete/nat.ced})}
  \label{fig:scott-nat-1}
\end{figure}

\paragraphb{Signature \(\mathit{NatF}\).}
In Figure~\ref{fig:scott-nat-1}, type scheme \(\mathit{NatF}\) is the usual
impredicative encoding of the signature functor for natural numbers.
Terms \(\mathit{zeroF}\) and \(\mathit{sucF}\) are its constructors, quantifying
over the parameter \(N\); using the erasure rules
(Figure~\ref{fig:eraser}) we can confirm that these have the erasures
\(\absu{\lambda}{z}{\absu{\lambda}{s}{z}}\) and
\(\absu{\lambda}{n}{\absu{\lambda}{z}{\absu{\lambda}{s}{s\ n}}}\) --- these are
the constructors for Scott naturals in untyped lambda calculus.
The proof that \(\mathit{NatF}\) is monotone is omitted, indicated by
\texttt{<..>} in the figure (we detailed the proof in Section~\ref{sec:mono}).

\begin{figure}[h]
  \centering
  \small
\begin{verbatim}
WkIndNatF ◂ Π N: ★. NatF ·N ➔ ★
= λ N: ★. λ n: NatF ·N.
  ∀ P: NatF ·N ➔ ★. P (zeroF ·N) ➔ (Π m: N. P (sucF m)) ➔ P n .

zeroWkIndNatF ◂ ∀ N: ★. WkIndNatF ·N (zeroF ·N)
= Λ N. Λ P. λ z. λ s. z .

sucWkIndNatF ◂ ∀ N: ★. Π n: N. WkIndNatF ·N (sucF n)
= Λ N. λ n. Λ P. λ z. λ s. s n .
\end{verbatim}
  \caption{Scott naturals (part 2) (\texttt{scott/concrete/nat.ced})}
  \label{fig:scott-nat-2}
\end{figure}

\paragraphb{Predicate \(\mathit{WkIndNatF}\).}
We next define a predicate, parametrized by a type \(N\), over terms of type
\(\mathit{NatF} \cdot N\).
For such a term \(n\), \(\mathit{WkIndNat} \cdot N\ n\) is the property that, to
prove \(P\ n\) for arbitrary \(P : \mathit{NatF} \cdot N \to \star\), it
suffices to show certain cases for \(\mathit{zeroF}\) and \(\mathit{sucF}\).
\begin{itemize}
\item In the base case, we must show that \(P\) holds for \(\mathit{zeroF}\).
  
\item In the step case, we must show that for arbitrary \(m : N\) that
  \(P\) holds for \(\textit{sucF}\ m\).
\end{itemize}

Next in Figure~\ref{fig:scott-nat-2} are proofs \(\mathit{zeroWkIndNatF}\) and
\(\mathit{sucWkIndNatF}\), which show resp.\ that \(\mathit{zeroF}\) satisfies
the predicate \(\mathit{WkIndNatF}\) and \(\mathit{sucF}\ n\) satisfies this
predicate for all \(n\).
Notice that \(\mathit{zeroWkIndNatF}\) is definitionally equal to
\(\mathit{zeroF}\) and \(\mathit{sucWkIndNatF}\) is definitionally equal to
\(\mathit{sucF}\).
We can confirm this fact by having Cedille check that \(\beta\) proves they are
propositionally equal (\_ denotes an anonymous proof):
\begin{verbatim}
_ ◂ { zeroF ≃ zeroWkIndNatF } = β .
_ ◂ { sucF ≃ sucWkIndNatF } = β .
\end{verbatim}
This correspondence, first observed for Church encodings by \citet{leivant83},
between lambda-encoded data and the proofs that elements of the datatype satisfy
the datatype's induction principle, is an essential part of the recipe of
\cite{Stu18_From-Realizibility-to-Induction} for deriving inductive types in
CDLE.

\begin{figure}
  \centering
  \small
\begin{verbatim}
NatFI ◂ ★ ➔ ★ = λ N: ★. ι x: NatF ·N. WkIndNatF ·N x .

monoNatFI ◂ Mono ·NatFI = <..>

Nat ◂ ★ = Rec ·NatFI .
rollNat   ◂ NatFI ·Nat ➔ Nat = roll -monoNatFI .
unrollNat ◂ Nat ➔ NatFI ·Nat = unroll -monoNatFI .

zero ◂ Nat
= rollNat [ zeroF ·Nat , zeroWkIndNatF ·Nat ] .

suc ◂ Nat ➔ Nat
= λ m. rollNat [ sucF m , sucWkIndNatF m ] .

LiftNat ◂ (Nat ➔ ★) ➔ NatF ·Nat ➔ ★
= λ P: Nat ➔ ★. λ n: NatF ·Nat.
  ∀ v: View ·Nat β{ n }. P (elimView β{ n } -v) .

wkIndNat ◂ ∀ P: Nat ➔ ★. P zero ➔ (Π m: Nat. P (suc m)) ➔ Π n: Nat. P n
= Λ P. λ z. λ s. λ n.
  (unrollNat n).2 ·(LiftNat ·P) (Λ v. z) (λ m. Λ v. s m) -(selfView n) .
\end{verbatim}
  \caption{Scott naturals (part 3) (\texttt{scott/concrete/nat.ced})}
  \label{fig:scott-nat-3}
\end{figure}

\paragraphb{\(\mathit{Nat}\), the type of Scott naturals.}
Figure~\ref{fig:scott-nat-3} gives the third and final phase of the derivation
of Scott naturals.
The datatype signature \(\mathit{NatFI}\) is defined using a dependent
intersection, producing the subset of those terms of type
\(\mathit{NatF} \cdot N\) that are definitionally equal to some proof that they
satisfy the predicate \(\mathit{WkIndNatF} \cdot N\) (since
\(\mathit{WkIndNatF}\) is not an equational constraint, this
restriction is not trivialized by the Kleene trick).
Monotonicity of \(\mathit{NatFI}\) is given by \(\mathit{monoNatFI}\) (proof
omitted).

Using the recursive type former \(\mathit{Rec}\) derived in
Section~\ref{sec:rectypes}, we define \(\mathit{Nat}\) as the least fixpoint of
\(\mathit{NatFI}\), and specializing the rolling and unrolling operators to 
\(\mathit{NatFI}\) we obtain \(\mathit{rollNat}\) and \(\mathit{unrollNat}\).
The operators, along with the facts that \(|\mathit{zeroF}| =_{\beta\eta}
|\mathit{zeroWkIndNatF}|\) and \(|\mathit{sucF}| =_{\beta\eta}
|\mathit{sucWkIndNatF}|\), are then used to define the constructors
\(\mathit{zero}\) and \(\mathit{suc}\).
From the fact that \(|\mathit{rollNat}| =_{\beta\eta} \absu{\lambda}{x}{x}\),
and by the erasure of dependent intersection introductions, we see that the
constructors for \(\mathit{Nat}\) are in fact definitionally equal to the
corresponding constructors for \(\mathit{NatF}\).
We can again confirm this by using Cedille to check that \(\beta\) proves they are
propositionally equal.
\begin{verbatim}
_ ◂ { zero ≃ zeroF } = β .
_ ◂ { suc ≃ sucF } = β .
\end{verbatim}

\paragraphb{Weak induction for \(\mathit{Nat}\).}
Weak induction for \(\mathit{Nat}\), given by \(\mathit{wkIndNat}\) in
Figure~\ref{fig:scott-nat-3}, allows us to prove \(P\ n\) for arbitrary \(n :
\mathit{Nat}\) and \(P : \mathit{Nat} \to \star\) if we can show that \(P\)
holds of \(\mathit{zero}\) and that for arbitrary \(m\) we can construct a proof of
\(P\ (\mathit{suc}\ m)\).
However, there is a gap between this proof principle and the proof principle
\(\mathit{WkIndNatF} \cdot \mathit{Nat}\) associated to \(n\) --- the latter allows us to
prove properties over terms of type \(\mathit{NatF} \cdot \mathit{Nat}\), not
terms of type \(\mathit{Nat}\)!
To bridge this gap, we introduce a predicate transformer \(\mathit{LiftNat}\)
that takes properties of kind \(\mathit{Nat} \to \star\) to properties of kind
\(\mathit{NatF} \cdot \mathit{Nat} \to \star\).
For any \(n : \mathit{NatF} \cdot \mathit{Nat}\), the new property
\(\mathit{LiftNat} \cdot P\ n\) states that \(P\) holds for \(n\) if we
have a way of viewing \(n\) at type \(\mathit{Nat}\) (\(\mathit{View}\),
Figure~\ref{fig:view-ax}).

The key to the proof of weak induction for \(\mathit{Nat}\) is
that by using \(\mathit{View}\), the retyping operation of terms of type \(\mathit{NatF} \cdot
\mathit{Nat}\) to the type \(\mathit{Nat}\) is definitionally equal to \(\absu{\lambda}{x}{x}\), and the fact that
the constructors for \(\mathit{NatF}\) are definitionally equal to the
constructors for \(\mathit{Nat}\).
We elaborate further on this point.
Let \(z\) and \(s\) be resp.\ the assumed proofs of the base and inductive
cases.
From the second projection of the unrolling of \(n\) we have a proof of
\(\mathit{WkIndNatF} \cdot \mathit{Nat}\ (\mathit{unroll}\ n).1\).
Instantiating the predicate argument of this with \(\mathit{LiftNat} \cdot P\)
gives us three subgoals:
\begin{itemize}
\item \(\mathit{LiftNat} \cdot P\ (\mathit{zeroF} \cdot \mathit{Nat})\)

  Assuming \(v : \mathit{View} \cdot \mathit{Nat}\ \beta\{\mathit{zeroF}\}\),
  we wish to give a proof of \(P\ (\mathit{elimView}\ \beta\{\mathit{zeroF}\}\
  \mhyph v)\).
  This is convertible with the type \(P\ \mathit{zero}\) of \(z\), since
  \(|\mathit{elimView}\ \beta\{\mathit{zeroF}\}\ \mhyph v| =_{\beta\eta} |\mathit{zero}|\).
  
\item \(\abs{\Pi}{m}{\mathit{Nat}}{\mathit{LiftNat} \cdot P\ (\mathit{sucF}\ m)}\)

  Assume we have such an \(m\), and that \(v\) is a proof of \(\mathit{View}
  \cdot \mathit{Nat}\ \beta\{\mathit{sucF}\ m\}\).
  We are expected to give a proof of \(P\ (\mathit{elimView}\ \beta\{\mathit{sucF}\
  m.1\}\ \mhyph v)\).
  The expression \(s\ m\) has type \(P\ (\mathit{suc}\ m)\), which is
  convertible with that expected type.

\item \(\mathit{View} \cdot \mathit{Nat}\ \beta\{(\mathit{unrollNat}\ n).1\}\)

  This holds by \(\mathit{selfView}\ n\) of type \(\mathit{View} \cdot
  \mathit{Nat}\ \beta\{n\}\), since \(|\beta\{n\}| =_{\beta\eta} |\beta\{(\mathit{unrollNat}\ n).1\}|\).
\end{itemize}
The whole expression synthesizes type
\[\mathit{P}\ (\mathit{elimView}\ \beta\{(\mathit{unrollNat}\ n).1\}\ \mhyph
  (\mathit{selfView}\ n))\]
which is convertible with the expected type \(P\ n\).

\subsubsection{Computational and extensional character.}
\label{sec:scott-nat-comp}

As mentioned at the outset of this section, one of the crucial characteristics
of Scott-encoded naturals is that they may be used to efficiently simulate the
computation laws for case distinction.
We now demonstrate this is the case for the Scott naturals we have derived.
Additionally, we prove using weak induction that the solution we give for the combinator for case
distinction satisfies the corresponding \emph{extensionality} law, i.e., it is
the unique such solution up to function extensionality.

\begin{figure}[h]
  \small
  \centering
\begin{verbatim}
caseNat ◂ ∀ X: ★. X ➔ (Nat ➔ X) ➔ Nat ➔ X
= Λ X. λ z. λ s. λ n. (unrollNat n).1 z s .

caseNatBeta1 ◂ ∀ X: ★. ∀ z: X. ∀ s: Nat ➔ X. { caseNat z s zero ≃ z }
= Λ X. Λ z. Λ s. β .

caseNatBeta2
◂ ∀ X: ★. ∀ z: X. ∀ s: Nat ➔ X. ∀ n: Nat. { caseNat z s (suc n) ≃ s n }
= Λ X. Λ z. Λ s. Λ n. β .

pred ◂ Nat ➔ Nat = caseNat zero λ p. p .

predBeta1 ◂ { pred zero ≃ zero } = β .
predBeta2 ◂ ∀ n: Nat. { pred (suc n) ≃ n } = Λ n. β .

wkIndNatComp ◂ { caseNat ≃ wkIndNat } = β .
\end{verbatim}
  \caption{Computation laws for case distinction and predecessor (\texttt{scott/concrete/nat.ced})}
  \label{fig:scott-nat-comp}
\end{figure}

\paragraphb{Computational laws.}
The definition of the operator \(\mathit{caseNat}\) for case distinction is
given in Figure~\ref{fig:scott-nat-comp}, along with predecessor
\(\mathit{pred}\) (defined using \(\mathit{caseNat}\)) and proofs 
for both that they satisfy the desired computation laws (or \(\beta\)-laws) by definition.
As both \(\mathit{rollNat}\) and \(\mathit{unrollNat}\) erase to
\((\absu{\lambda}{x}{x})\ (\absu{\lambda}{x}{x})\)
(Figure~\ref{fig:recType-ax}), our overhead in simulating case distinction is
only a constant number of reductions.
With an efficient operation for case distinction, we obtain an efficient
predecessor \(\mathit{pred}\).

To confirm the efficiency of this implementation, we consider the erasure of the
right-hand side of the computation law for case distinction for the successor
case (\(\mathit{caseNatBeta2}\)). 
\[
  \underbrace{(\absu{\lambda}{z}{\absu{\lambda}{s}{\absu{\lambda}{n}{\underbrace{(\absu{\lambda}{x}{x})\
            (\absu{\lambda}{x}{x})}_{\mathit{unrollNat}}\ n\ z\ s}}})}_{\mathit{caseNat}}\ z\ s\
  (\underbrace{(\absu{\lambda}{n}{\underbrace{(\absu{\lambda}{x}{x})\
        (\absu{\lambda}{x}{x})}_{\mathit{rollNat}}\
      (\underbrace{(\absu{\lambda}{n}{\absu{\lambda}{z}{\absu{\lambda}{s}{s\
              n}}})}_{\mathit{sucF}}\ n)})}_{\mathit{suc}}\ n) 
\]
This both call-by-name and (under the assumption that \(n\) is a value)
call-by-value reduces to \(\mathit{s\ n}\) in a constant number of steps.

Additionally, it is satisfying to note that the computational content underlying the
weak induction principle is precisely that which underlies the case-distinction
operator \(\mathit{caseNat}\).
This is proven by \(\mathit{wkIndComp}\), which shows not just that they satisfy
the same computation laws, but in fact that the two terms are definitionally equal.

\paragraphb{Extensional laws.}
\begin{figure}[h]
  \centering
  \small
\begin{verbatim}
caseNatEta
◂ ∀ X: ★. ∀ z: X. ∀ s: Nat ➔ X.
  ∀ h: Nat ➔ X. { h zero ≃ z } ➾ (Π n: Nat. { h (suc n) ≃ s n }) ➾
  Π n: Nat. { caseNat z s n ≃ h n }
= Λ X. Λ z. Λ s. Λ h. Λ hBeta1. Λ hBeta2.
  wkIndNat ·(λ x: Nat. { caseNat z s x ≃ h x })
    (ρ hBeta1 @x.{ z ≃ x } - β)
    (λ m. ρ (hBeta2 m) @x.{ s m ≃ x } - β) .

reflectNat ◂ Π n: Nat. { caseNat zero suc n ≃ n }
= caseNatEta ·Nat -zero -suc -(λ x. x) -β -(λ m. β) .
\end{verbatim}
  \caption{Extensional laws for case distinction
    (\texttt{scott/concrete/nat.ced})}
  \label{fig:scott-nat-ext}
\end{figure}
Using weak induction, we can prove the extensionality law (or \(\eta\)-law) of the
case-distinction scheme.
This is \(\mathit{caseNatEta}\) in Figure~\ref{fig:scott-nat-ext}.
The precise statement of uniqueness is that, for every type \(X\) and terms \(z : X\)
and \(s : \mathit{Nat} \to X\), if there exists a function \(h : \mathit{Nat}
\to X\) satisfying the computation laws of the case-distinction scheme with
respect to \(z\) and \(s\), then \(h\) is extensionally equal to
\(\mathit{caseNat}\ z\ s\).

From uniqueness, we can obtain the proof \(\mathit{reflectNat}\) that using case
distinction with the constructors themselves reconstructs the given number.
The name for this is taken from the \emph{reflection law}
\citep{UV99_Primitive-Corecurusion-and-CoV-Coiteration} of the iteration scheme
for datatypes.
The standard formulation of the reflection law and the variation given by
\(\mathit{reflectNat}\) both express that the only elements of a datatype are
those generated by its constructors.
This idea plays a crucial role in the future derivations of (full) induction for
both the Parigot and Scott encodings, and is elaborated on in
Section~\ref{sec:parigot}.

\subsection{Scott-encoded data, generically}
\label{sec:scott-gen}

For the generic Scott encoding, we begin our discussion by phrasing the
case-distinction scheme \emph{generically} (meaning \emph{parametrically}) in a
signature \(F : \star \to \star\).
Let \(D\) be the datatype whose signature is \(F\) and whose constructor is
\(\mathit{inD} : F \cdot D \to D\).
Datatype \(D\) satisfies the case-distinction scheme if there exists an
operator \(\mathit{caseD}\) satisfying the typing and computation laws
listed in Figure~\ref{fig:scheme-case}.

\begin{figure}[h]
  \centering
  \[
    \begin{array}{c}
      \infer{
      \Gamma \vdash \mathit{caseD} \cdot T\ t \tpsynth D \to T
      }{
      \Gamma \vdash T \tpsynth \star
      \quad \Gamma \vdash t \tpcheck F \cdot D \to T
      }
      \\ \\
      |\mathit{caseD} \cdot T\ t\ (\mathit{inD}\ t')| \rightsquigarrow |t\ t'|
    \end{array}
  \]
  \caption{Generic case-distinction scheme}
  \label{fig:scheme-case}
\end{figure}

We understand the computation law as saying that when acting on data constructed
with \(\mathit{inD}\), the case-distinction scheme gives to its function
argument \(t\) the revealed subdata \(t' : F \cdot D\) directly.
Notice that unlike the iteration scheme, for case distinction we do not require
that \(F\) comes together with an operation \(\mathit{fmap}\) as there is no
recursive call to be made on the predecessors.

From these laws, we can define \(\mathit{outD} : D \to F \cdot D\), the generic
destructor that reveals the \(F\)-collection of predecessors used to construct a
term of type \(D\).
\[\mathit{outD} = \mathit{caseD}\ (\absu{\lambda}{x}{x})\]
This satisfies the expected computation law for the destructor in a number of
steps that is constant with respect to \(t\).

In System F\(^\omega\) extended with isorecursive types, we can read an encoding
directly from these laws, resulting in the solutions for \(D\),
\(\mathit{caseD}\), and \(\mathit{inD}\) given in Figure~\ref{fig:data-discr}.
This is the generic Scott encoding, and is the basis for the
developments in Section~\ref{ssec:derive-scott}.

\begin{figure}[h]
  \centering
  \[
    \begin{array}{lll}
      D
      & =
      & \absu{\mu}{D}{\abs{\forall}{X}{\star}{(F \cdot D \to X) \to X}}
      \\ \mathit{caseD}
      & =
      & \absu{\Lambda}{X}{\absu{\lambda}{a}{\absu{\lambda}{x}{\mathit{unroll}(x) \cdot X\ a}}}
      \\ \mathit{inD}
      & =
      & \absu{\lambda}{x}{\mathit{roll}(\absu{\Lambda}{X}{\absu{\lambda}{a}{a\ x}})}
    \end{array}
  \]
  \caption{Generic Scott encoding of \(D\) in System \(\text{F}^\omega\) with
    isorecursive types}
  \label{fig:data-discr}
\end{figure}

\subsubsection{Characterization criteria.}
We formalize in Cedille the above description of the generic case-distinction
scheme in Figures~\ref{fig:data-char-case-typing} and \ref{fig:data-char-case}.
Definitions for the typing law of case distinction for datatype \(D\) are given in
Figure~\ref{fig:data-char-case-typing}, where the module is parametrized by a
type scheme \(F\) that gives the datatype signature.
The type family \(\mathit{AlgCase}\) gives the shape of the types of functions
used for case distinction, and \(\mathit{Case}\) gives the shape of the type of
operator \(\mathit{caseD}\) itself.

\begin{figure}[h]
  \centering
  \small
\begin{verbatim}
module data-char/case-typing (F: ★ ➔ ★) .

AlgCase ◂ ★ ➔ ★ ➔ ★
= λ D: ★. λ X: ★. F ·D ➔ X .

Case ◂ ★ ➔ ★
= λ D: ★. ∀ X: ★. AlgCase ·D ·X ➔ D ➔ X .
\end{verbatim}
  \caption{Case distinction typing (\texttt{data-char/case-typing.ced})}
  \label{fig:data-char-case-typing}
\end{figure}

\begin{figure}
  \centering
  \small
\begin{verbatim}
import data-char/iter-typing .

module data-char/case
  (F: ★ ➔ ★) (D: ★) (inD: Alg ·F ·D).

import data-char/case-typing ·F .

AlgCaseHom ◂ Π X: ★. AlgCase ·D ·X ➔ (D ➔ X) ➔ ★
= λ X: ★. λ a: AlgCase ·D ·X. λ h: D ➔ X.
  ∀ xs: F ·D. { h (inD xs) ≃ a xs } .

CaseBeta ◂ Case ·D ➔ ★
= λ caseD: Case ·D.
  ∀ X: ★. ∀ a: AlgCase ·D ·X. AlgCaseHom ·X a (caseD a) .

CaseEta ◂ Case ·D ➔ ★
= λ caseD: Case ·D.
  ∀ X: ★. ∀ a: AlgCase ·D ·X. ∀ h: D ➔ X. AlgCaseHom ·X a h ➔
  Π x: D. { h x ≃ caseD a x } .
\end{verbatim}
  \caption{Case distinction characterization (\texttt{data-char/case.ced})}
  \label{fig:data-char-case}
\end{figure}

In Figure~\ref{fig:data-char-case}, we take the signature \(F\) as
well as the datatype \(D\) and its constructor \(\mathit{inD}\) as parameters.
We import the definitions of Figure~\ref{fig:iter-typing} without specifying that
module's type scheme parameter, so in the type of \(\mathit{inD}\),
\(\mathit{Alg} \cdot F \cdot D\), we give this explicitly as \(F\).
For a candidate \(\mathit{caseD} : \mathit{Case} \cdot D\) for the operator for case
distinction, the property \(\mathit{CaseBeta}\ \mathit{caseD}\) states that it
satisfies the desired computation law, where the shape of the computation law is
given by \(\mathit{AlgCaseHom}\).
\(\mathit{CaseEta}\ \mathit{caseD}\) is the property that \(\mathit{caseD}\)
satisfies the extensionality law, i.e., any other function \(h : D \to X\) that
satisfies the computation law with respect to \(a : \mathit{AlgCase} \cdot D
\cdot X\) is extensionally equal to \(\mathit{caseD}\ a\).

\subsubsection{Generic Scott encoding.}
We now detail the generic derivation of Scott-encoded data supporting weak
induction.
The developments in this section are parametrized by a type scheme \(F : \star
\to \star\) that is monotonic (the curly braces around \(\mathit{mono}\)
indicate that it is an erased module parameter).
As we did for the concrete derivation of naturals,
the construction is separated into several phases.
In Figure~\ref{fig:scott-generic-1}, we give the unrefined signature
\(\mathit{DF}\) for Scott-encoded data and its constructor.
In Figure~\ref{fig:scott-generic-2}, we define the predicate \(\mathit{WkIndDF}\)
expressing that \(\mathit{DF}\) terms satisfy a certain weak induction principle, and
show that the constructor satisfies this predicate.
In Figure~\ref{fig:scott-generic-3}, we take the fixpoint of the refined
signature, defining the datatype \(D\), and prove weak induction for \(D\).

\label{ssec:derive-scott}
\begin{figure}[h]
  \centering
  \small
\begin{verbatim}
import mono .

module scott/encoding (F: ★ ➔ ★) {mono: Mono ·F} .

import view .
import cast .
import recType .
import utils .

import data-char/typing ·F .

DF ◂ ★ ➔ ★ = λ D: ★. ∀ X: ★. AlgCase ·D ·X ➔ X .

inDF ◂ ∀ D: ★. AlgCase ·D ·(DF ·D)
= Λ D. λ xs. Λ X. λ a. a xs .

monoDF ◂ Mono ·DF = <..>
\end{verbatim}
  \caption{Generic Scott encoding (part 1) (\texttt{scott/generic/encoding.ced})}
  \label{fig:scott-generic-1}
\end{figure}

\paragraphb{Signature \(\mathit{DF}\).}
In Figure~\ref{fig:scott-generic-1}, \(\mathit{DF}\) is the type scheme whose fixpoint
is the solution to the equation for \(D\) in Figure~\ref{fig:data-discr}.
Definition \(\mathit{inDF}\) is the polymorphic constructor for signature
\(\mathit{DF}\), i.e., the generic grouping together of the collection of
constructors for the datatype signature (e.g., \(\mathit{zeroF}\) and
\(\mathit{sucF}\), Figure~\ref{fig:scott-nat-1}).
Finally, \(\mathit{monoDF}\) is a proof that type scheme \(\mathit{DF}\) is
monotonic (definition omitted, indicated by \texttt{<..>}).


\paragraphb{Predicate \(\mathit{WkIndDF}\).}
\begin{figure}[h]
  \centering
  \small
\begin{verbatim}
WkPrfAlg ◂ Π D: ★. (DF ·D ➔ ★) ➔ ★
= λ D: ★. λ P: DF ·D ➔ ★. Π xs: F ·D. P (inDF xs) .

WkIndDF ◂ Π D: ★. DF ·D ➔ ★
= λ D: ★. λ x: DF ·D.
  ∀ P: DF ·D ➔ ★. WkPrfAlg ·D ·P ➔ P x .

inWkIndDF ◂ ∀ D: ★. WkPrfAlg ·D ·(WkIndDF ·D)
= Λ D. λ xs. Λ P. λ a. a xs .
\end{verbatim}
  \caption{Generic Scott encoding (part 2) (\texttt{scott/generic/encoding.ced})}
  \label{fig:scott-generic-2}
\end{figure}

In Figure~\ref{fig:scott-generic-2}, we give the definition of
\(\mathit{WkIndDF}\), the property (parametrized by type \(D\)) that terms of
type \(\mathit{DF} \cdot D\) satisfy a certain weak induction principle.
More precisely, \(\mathit{WkIndDF} \cdot D\ t\) is the type of proofs that, for
all properties \(P : \mathit{DF} \cdot D \to \star\), \(P\) holds for \(t\) if a
weak inductive proof can be given for \(P\).
The type of weak inductive proofs is \(\mathit{WkPrfAlg} \cdot D \cdot P\), to
be read ``weak \((F,D)\)-proof-algebras for \(P\)''.
A term \(a\) of this type takes an \(F\)-collection of \(D\) values and produces
a proof that \(P\) holds for the value constructed from this using \(\mathit{inDF}\).

In the concrete derivation of Scott naturals, the predicate
\(\mathit{WkIndNatF}\) (Figure~\ref{fig:scott-nat-2}) required terms of the
following types be given in proofs by weak induction:
\[
  \begin{array}{l}
    P\ (\mathit{zeroF} \cdot N)
    \\ \abs{\Pi}{m}{N}{P\ (\mathit{sucF}\ m)}
  \end{array}
\]
We can understand \(\mathit{WkPrfAlg}\) as combing these types together into a
single type, parametrized by the signature \(F\):
\[\abs{\Pi}{\mathit{xs}}{F \cdot D}{P\ (\mathit{inDF}\ xs)}\]

Next in the figure is \(\mathit{inWkIndDF}\), which for all types \(D\) is a
weak \((F,D)\)-proof-algebra for \(\mathit{WkIndDF} \cdot D\).
Put more concretely, it is a proof that every term of type \(\mathit{DF} \cdot
D\) constructed by \(\mathit{inDF}\) admits the weak induction
principle given by \(\mathit{WkIndDF} \cdot D\).
The corresponding definitions from Section~\ref{sec:scott-gen} are
\(\mathit{zeroWkIndNatF}\) and \(\mathit{sucWkIndNatF}\). 

\paragraphb{The Scott-encoded datatype \(D\).}
\begin{figure}[h]
  \centering
  \small
\begin{verbatim}
DFI ◂ ★ ➔ ★ = λ D: ★. ι x: DF ·D. WkIndDF ·D x .

monoDFI ◂ Mono ·DFI = <..>

D ◂ ★ = Rec ·DFI .
rollD   ◂ DFI ·D ➔ D = roll -monoDFI .
unrollD ◂ D ➔ DFI ·D = unroll -monoDFI .

inD ◂ AlgCase ·D ·D
= λ xs. rollD [ inDF xs , inWkIndDF xs ] .

LiftD ◂ (D ➔ ★) ➔ DF ·D ➔ ★
= λ P: D ➔ ★. λ x: DF ·D. ∀ v: View ·D β{ x }. P (elimView β{ x } -v) .

wkIndD ◂ ∀ P: D ➔ ★. (Π xs: F ·D. P (inD xs)) ➔ Π x: D. P x
= Λ P. λ a. λ x.
  (unrollD x).2 ·(LiftD ·P) (λ xs. Λ v. a xs) -(selfView x) .
\end{verbatim}
  \caption{Generic Scott encoding (part 3) (\texttt{scott/generic/encoding.ced})}
  \label{fig:scott-generic-3}
\end{figure}

With the inductivity predicate \(\mathit{WkIndDF}\) and weak proof algebra
\(\mathit{inWkIndDF}\) for it, we are now able to form a signature refining
\(\mathit{DF}\) such that its fixpoint supports weak induction (proof by cases).
Observe that the proof \(\mathit{inWkIndDF}\) in
Figure~\ref{fig:scott-generic-2} is definitionally equal to \(\mathit{inDF}\). 
\begin{verbatim}
_ ◂ { inDF ≃ inWkIndDF } = β .
\end{verbatim}
This effectively allows us to use dependent intersections to form, for all
\(D\), the subset of the type \(\mathit{DF} \cdot D\) whose elements satisfy
\(\mathit{WkIndDF} \cdot D\).
This is \(\mathit{DFI}\) in Figure~\ref{fig:scott-generic-3}.

Since \(\mathit{DFI}\) is monotonic (proof omitted), we can form the datatype
\(D\) as the fixpoint of \(\mathit{DFI}\) using \(\mathit{Rec}\), with rolling
and unrolling operations \(\mathit{rollD}\) and \(\mathit{unrollD}\) that are
definitionally equal to \(\absu{\lambda}{x}{x}\).
The constructor \(\mathit{inD}\) for \(D\) takes an \(F\)-collection of \(D\)
predecessors \(\mathit{xs}\) and constructs a value of type \(D\) using 
the fixpoint rolling operator, the constructor \(\mathit{inDF}\), and the
proof \(\mathit{inWkIndDF}\).
Note again that, by the erasure of dependent intersections, we have that
\(\mathit{inD}\) and \(\mathit{inDF}\) are definitionally equal.

\paragraphb{Weak induction for \(D\).}
As was the case for the concrete encoding of Scott naturals, we must now
bridge the gap between the desired weak induction principle, where we wish to
prove properties of kind \(D \to \star\), and what we are given by
\(\mathit{WkIndDF}\) (the ability to prove properties of kind \(\mathit{DF}
\cdot D \to \star\)).
This is achieved using the predicate transformer \(\mathit{LiftD}\) that maps
predicates over \(\mathit{D}\) to predicates over \(\mathit{DF} \cdot D\) by
requiring an additional assumption that the given \(x : \mathit{DF} \cdot D\)
can be viewed as having type \(D\).

The weak induction principle \(\mathit{wkIndD}\) for \(D\) states that a
property \(P\) holds for term \(x : D\) if we can provide a function \(a\)
which, when given an arbitrary \(F\)-collection of \(D\) predecessors, produces
a proof that \(P\) holds for the successor of this collection constructed from
\(\mathit{inD}\).
In the body of \(\mathit{wkIndD}\), we invoke the proof principle
\(\mathit{WkIndDF} \cdot D\ (\mathit{unroll}\ x).1\), given by
\((\mathit{unroll}\ x).2\), on the lifting of \(P\).
For the weak proof algebra, we apply the assumption \(a\) to the revealed
predecessors \(\mathit{xs}\).
This expression has type \(P\ (\mathit{inD}\ xs)\), and the expected type is
\(P\ (\mathit{elimView}\ \beta\{\mathit{inDF}\ xs\}\ \mhyph v)\).
These two types are convertible, since the two terms in question are
definitionally equal:
\[|\mathit{elimView}\ \beta\{\mathit{inDF}\ \mathit{xs}\}\ \mhyph v| =_{\beta\eta}
  |\mathit{inD}\ \mathit{xs}|\]
since in particular \(|\mathit{inDF}| =_{\beta\eta} |\mathit{inD}|\).
The whole expression, then, has type
\[P\ (\mathit{elimView}\ \beta\{(\mathit{unrollD}\ x).1\}\ \mhyph
  (\mathit{selfView}\ x))\]
which is convertible with the expected type \(P\ x\).

\subsubsection{Computational and extensional character.}
\label{sec:scott-gen-char}
We now analyze the properties of our generic Scott encoding.
In particular, we give the normalization guarantee for terms of type
\(D\) and confirm that we can give implementations of the case-distinction
scheme and destructor that both efficiently simulate their computation
laws and provably satisfy their extensionality laws.
Iteration and primitive recursion are treated in Section~\ref{sec:lr-gen-char}.

\begin{figure}
  \centering
  \small
\begin{verbatim}
import cast .
import mono .
import recType .
import utils .

module scott/generic/props
  (F: ★ ➔ ★) {mono: Mono ·F} .

import data-char/case-typing ·F .
import scott/generic/encoding ·F -mono .

normD ◂ Cast ·D ·(AlgCase ·D ·D ➔ D)
= intrCast -(λ x. (unrollD x).1 ·D) -(λ x. β) .

import data-char/case ·F ·D inD .

caseD ◂ Case ·D
= Λ X. λ a. λ x. (unrollD x).1 a .

caseDBeta ◂ CaseBeta caseD
= Λ X. Λ a. Λ xs. β .

caseDEta ◂ CaseEta caseD
= Λ X. Λ a. Λ h. λ hBeta.
  wkIndD ·(λ x: D. { h x ≃ caseD a x })
    (λ xs. ρ (hBeta -xs) @x.{ x ≃ a xs } - β) .

reflectD ◂ Π x: D. { caseD inD x ≃ x }
= λ x. ρ ς (caseDEta ·D -inD -(id ·D) (Λ xs. β) x) @y.{ y ≃ x } - β .
\end{verbatim}
  \caption{Characterization of \(\mathit{caseD}\)
    (\texttt{scott/generic/props.ced})}
  \label{fig:scott-props-case}
\end{figure}

\paragraphb{Normalization guarantee.}
Recall that Proposition~\ref{thm:cedille-termination} guarantees call-by-name
normalization for closed Cedille terms whose types can be included into some
function type.
The proof \(\mathit{normD}\) of Figure~\ref{fig:scott-props-case}
establishes the existence of a cast from \(D\) to \(\mathit{AlgCase} \cdot D
\cdot D \to D\), meaning that closed terms of type \(D\) satisfy this criterion.

\paragraphb{Case-distinction scheme.}
We next bring into scope the definitions for characterizing the case-distinction
scheme (Figure~\ref{fig:data-char-case}).
For our solution \(\mathit{caseD}\) in Figure~\ref{fig:scott-props-case},
\(\mathit{caseDBeta}\) proves it satisfies the computation law and
\(\mathit{caseDEta}\) proves it satisfies the extensionality law.
As we saw for the concrete example of Scott naturals in
Section~\ref{sec:scott-nat-comp}, the proof \(\mathit{caseDBeta}\) of the
computation law holds by \emph{definitional} equality, not just propositional
equality, since the propositional equality is proved by \(\beta\).
By inspecting the definitions of \(\mathit{inD}\) and \(\mathit{caseD}\), and
the erasures of \(\mathit{roll}\) and \(\mathit{unroll}\)
(Figure~\ref{fig:recType-ax}), we can confirm that in fact \(\mathit{caseD}\ t\
(\mathit{inD}\ t')\) reduces to \(t\) in a number of steps that is constant with
respect to \(t'\) under both call-by-name and call-by-value operational semantics
(for call-by-value semantics, we would first assume \(t'\) is a value).

For \(\mathit{caseDEta}\), we proceed by weak induction and must show that
\(h\ (\mathit{inD}\ \mathit{xs})\) is propositionally equal to \(\mathit{caseD}\
a\ (\mathit{inD}\ xs)\).
This follows from the assumption that \(h\) satisfies the computation law with
respect to \(a : \mathit{AlgCase} \cdot D \cdot X\).
As an expected result of uniqueness, \(\mathit{reflectD}\) shows that applying
\(\mathit{caseD}\) to the constructor produces a function extensionally equal to
the identity function.

\begin{figure}[h]
  \centering
  \small
\begin{verbatim}
import data-char/destruct ·F ·D inD .

outD ◂ Destructor = caseD (λ xs. xs) .

lambek1D ◂ Lambek1 outD = λ xs. β .

lambek2D ◂ Lambek2 outD
= wkIndD ·(λ x: D. { inD (outD x) ≃ x }) (λ xs. β) .
\end{verbatim}
  \caption{Characterization of \(\mathit{outD}\)
    (\texttt{scott/generic/props.ced})}
  \label{fig:scott-props-out}
\end{figure}

\paragraphb{Destructor.}
In Figure~\ref{fig:scott-props-out}, we give the definition we proposed earlier
for the datatype destructor \(\mathit{outD}\).
The proof \(\mathit{lambek1D}\) establishes that \(|\mathit{outD}\ (\mathit{inD}\
t)|\) is definitionally equal to \(|t|\) for all terms \(t : F \cdot D\).
As \(\mathit{caseD}\) is an efficient implementation of case distinction,
we know that \(\mathit{outD}\) is an efficient destructor.
The proof \(\mathit{lambek2D}\) establishes the other side of the isomorphism
between \(D\) and \(F \cdot D\), and follows by weak induction.

\section{Parigot encoding}
\label{sec:parigot}
In this section we derive Parigot-encoded data with (non-weak) induction,
illustrating with a concrete example in Section~\ref{sec:parigot-nat} the 
techniques we use before proceeding with the generic derivation in
Section~\ref{ssec:parigot-gen}.
The Parigot encoding was first described by \citet{parigot88, parigot1992} for
natural numbers and later for a more general class of datatypes by
\citet{Ge14_Church-Scott-Encoding} (wherein it is called the \emph{Church-Scott}
encoding).
It is a combination of the Church and Scott encoding: it directly
supports recursion with access to previously computed results as well as to
predecessors.
For the inductive versions we derive in this section, this means that unlike the
weakly inductive Scott encoding of Section~\ref{sec:scott}, we have access to an
inductive hypothesis.
However, for the Parigot encoding this additional power comes at a cost: the
size of the encoding of natural number \(n\) is exponential in \(n\).

The Parigot encoding can be seen as a direct solution to the \emph{primitive recursion}
scheme in polymorphic lambda calculi with recursive types.
For natural numbers, the typing and computation laws for this scheme are given in
Figure~\ref{fig:nat-rec}. 
\begin{figure}
  \centering
  \[
    \begin{array}{c}
      \infer{
       \Gamma \vdash \mathit{recNat} \cdot T\ t_1\ t_2 \tpsynth \mathit{Nat} \to T
      }{
      \Gamma \vdash T \tpsynth \star
      \quad \Gamma \vdash t_1 \tpcheck T
      \quad \Gamma \vdash t_2 \tpcheck \mathit{Nat} \to T \to T
      }
      \\ \\
      \begin{array}{lcl}
        |\mathit{recNat} \cdot T\ t_1\ t_2\ \mathit{zero}|
        & \rightsquigarrow
        & |t_1|
        \\ |\mathit{recNat} \cdot T\ t_1\ t_2\ (\mathit{suc}\ n)|
        & \rightsquigarrow
        & |t_2\ n\ (\mathit{recNat} \cdot T\ t_1\ t_2\ n)|
      \end{array}
    \end{array}
  \]
  \caption{Typing and computation laws for primitive recursion on \(\mathit{Nat}\)}
  \label{fig:nat-rec}
\end{figure}
The significant feature of primitive recursion is that in the successor case,
the user-supplied function \(t_2\) has access both to the predecessor \(n\)
\emph{and} the result recursively computed from \(n\).

With primitive recursion, we can give the following implementation of the
predecessor function \(\mathit{pred}\).
\[\mathit{pred} = \mathit{recNat} \cdot \mathit{Nat}\ \mathit{zero}\
  (\absu{\lambda}{x}{\absu{\lambda}{y}{x}})\]
The efficiency of this definition of \(\mathit{pred}\) depends on the
operational semantics of the language.
Under call-by-name semantics, we have that \(|\mathit{pred}\
(\mathit{suc}\ n)|\) reduces to \(n\) in a constant number of steps since the
recursively computed result (the predecessor of \(n\)) is discarded before it
can be further evaluated.
This is not the case for call-by-value semantics: for closed \(n\) we
would compute \emph{all} predecessors of \(n\), then discard these results.

In System~F with isorecursive types, we can obtain Parigot naturals from the
typing and computation laws for primitive recursion over naturals.
The solutions for \(\mathit{Nat}\), \(\mathit{recNat}\), \(\mathit{zero}\), and
\(\mathit{suc}\) we acquire in this way are shown in Figure~\ref{fig:parigot-nat}.
\begin{figure}[h]
  \centering
  \[
    \begin{array}{lcl}
      \mathit{Nat}
      & =
      & \absu{\mu}{N}{\absu{\forall}{X}{X \to (N \to X \to X) \to X}}
      \\ \mathit{recNat}
      & =
      &
        \absu{\Lambda}{X}{\absu{\lambda}{z}{\absu{\lambda}{s}{\absu{\lambda}{x}{\mathit{unroll}(x)
        \cdot X\ z\ s}}}}
      \\ \mathit{zero}
      & =
      & \mathit{roll}(\absu{\lambda}{X}{\absu{\lambda}{z}{\absu{\lambda}{s}{z}}})
      \\ \mathit{suc}
      & =
      &
        \absu{\lambda}{n}{\mathit{roll}(\absu{\Lambda}{X}{\absu}{\lambda}{z}{\absu{\lambda}{s}{
        s\ n\ (\mathit{recNat} \cdot X\ z\ s\ n)}})}
    \end{array}
  \]
  \caption{Parigot naturals in System~F with isorecursive types}
  \label{fig:parigot-nat}
\end{figure}

\paragraphb{Parigot naturals and canonicity.}
In addition to showing yet another application of derived
recursive types in Cedille, this section serve two pedagogical purposes.
First, the Parigot encoding more readily supports primitive recursion
than the Scott encoding, for which the construction is rather complex (see
Section~\ref{sec:lr}).
Second, the derivation of induction for Parigot-encoded data involves a very
different approach than that used in Section~\ref{sec:scott}, taking full
advantage of the embedding of untyped lambda calculus in Cedille.

We elaborate on this second point further: as observed by
\cite{Ge14_Church-Scott-Encoding}, there is a deficiency in the definition of
the type of Parigot-encoded data in polymorphic type theory with recursive types.
For example, the type \(\mathit{Nat}\) is not precise enough: it admits the
definition of the following bogus constructor.
\[\mathit{suc'} = \absu{\lambda}{n}{\absu{\lambda}{z}{\absu{\lambda}{s}{s\
        \mathit{zero}\ (\mathit{recNat}\ z\ s\ n)}}}\]
The difficulty is that the type \(\mathit{Nat}\) does not enforce that the first
argument to the bound \(s\) is the same number that we use to compute the second
argument.
Put another way, this represents a failure to secure the extensionality law
(uniqueness) for the primitive recursion scheme with this encoding.

To address this, we observe that there is a purely computational
characterization of the subset of \(\mathit{Nat}\) that contains only the
canonical Parigot naturals.
This characterization is the \emph{reflection law}.
As we will see by proving induction, the set of canonical
Parigot naturals is precisely the set of terms \(n\) of type 
\(\mathit{Nat}\) satisfying the following equality:
\[\{\mathit{recNat}\ \mathit{zero}\ (\absu{\lambda}{m}{\mathit{suc}})\ n \simeq n\}\]
As an example, the non-canonical Parigot natural \(\mathit{suc'}\ (\mathit{suc}\
\mathit{zero})\) does \emph{not} satisfy this criterion: rebuilding it with the
constructors \(\mathit{zero}\) and \(\mathit{suc}\) produces \(\mathit{suc}\
(\mathit{suc}\ \mathit{zero})\) (see also \citet[][Section
2]{GJF12_Generic-Fibrational-Induction}, where it is shown that the reflection
law together with dependent products guarantees induction for naturals).

With \(\mathit{Top}\) and the Kleene trick (Section~\ref{sec:kleene-trick}), we can
express the property that a term satisfies the reflection law \emph{before} we
give a type for Parigot naturals.
This is good, because we wish to use the reflection law \emph{in the definition}
of the type of Parigot naturals!

\subsection{Parigot-encoded naturals, concretely}
\label{sec:parigot-nat}

We split the derivation of inductive Parigot naturals into three parts.
In Figure~\ref{fig:parigot-nat-1}, we define untyped operations for Parigot
naturals and prove that the untyped constructors preserve the reflection law.
In Figure~\ref{fig:parigot-nat-2}, we define the type \(\mathit{Nat}\) of
canonical Parigot naturals and its constructors.
Finally, in Figure~\ref{fig:parigot-nat-3} we define the subset of Parigot
naturals supporting induction, then show that the type \(\mathit{Nat}\) is
included in this subset.

\begin{figure}
  \centering
  \small
\begin{verbatim}
import cast .
import mono .
import recType .
import view .
import utils/top .

module parigot/concrete/nat .

recNatU ◂ Top
= β{ λ z. λ s. λ n. n z s } .

zeroU ◂ Top
= β{ λ z. λ s. z } .

sucU ◂ Top ➔ Top
= λ n. β{ λ z. λ s. s n (recNatU z s n) } .

reflectNatU ◂ Top
= β{ recNatU zeroU (λ m. sucU) } .

NatC ◂ Top ➔ ★ = λ n: Top. { reflectNatU n ≃ n } .

zeroC ◂ NatC zeroU = β{ zeroU } .

sucC ◂ Π n: Top. NatC n ➾ NatC (sucU n)
= λ n. Λ nc. ρ nc @x.{ sucU x ≃ sucU n } - β{ sucU n } .
\end{verbatim}
  \caption{Parigot naturals (part 1) (\texttt{parigot/concrete/nat.ced})}
  \label{fig:parigot-nat-1}
\end{figure}

\paragraphb{Reflection law.}
The first definitions in Figure~\ref{fig:parigot-nat-1} are untyped operations
for Parigot naturals.
Definition \(\mathit{recNatU}\) is the combinator for
primitive recursion, and \(\mathit{zeroU}\) and \(\mathit{sucU}\) are the
constructors (compare these to the corresponding definitions in
Figure~\ref{fig:parigot-nat}).
The term \(\mathit{reflectNatU}\) is the function which rebuilds Parigot naturals
with their constructors, and the predicate \(\mathit{NatC}\) expresses the
reflection law for untyped Parigot naturals.

The proofs \(\mathit{zeroC}\) and \(\mathit{sucC}\) show respectively that
\(\mathit{zeroU}\) satisfies the reflection law, and that if \(n\) satisfies the
reflection law then so does \(\mathit{sucU\ n}\).
In the proof for \(\mathit{sucU}\), the expected type reduces to an equality type whose left-hand
side is convertible with:
\[ \mathit{sucU}\ (\mathit{reflectNatU}\ n) \]
We finish the proof by rewriting with the assumption that \(n\) satisfies the reflection law.

Note that in addition to using the Kleene trick (Section~\ref{sec:kleene-trick})
to have a type \(\mathit{Top}\) for terms of the untyped lambda calculus, we are
also using it so that the proofs \(\mathit{zeroC}\) and \(\mathit{sucC}\) are 
definitionally equal to the untyped constructors \(\mathit{zeroU}\) and
\(\mathit{sucU}\) (see Figure~\ref{fig:eraser} for the erasure of \(\rho\)).
\begin{verbatim}
_ ◂ { zeroC ≃ zeroU } = β .
_ ◂ { sucC  ≃ sucU  } = β .
\end{verbatim}
(where \(\_\) indicates an anonymous proof).
This is so that we may define Parigot naturals as an equational subset type with
dependent intersection, which we will see next.

\begin{figure}
  \centering
  \small
\begin{verbatim}
NatF' ◂ ★ ➔ ★
= λ N: ★. ∀ X: ★. X ➔ (N ➔ X ➔ X) ➔ X .

NatF ◂ ★ ➔ ★
= λ N: ★. ι n: NatF' ·N. NatC β{ n } .

monoNatF ◂ Mono ·NatF = <..>

Nat ◂ ★ = Rec ·NatF .
rollNat   ◂ NatF ·Nat ➔ Nat = roll -monoNatF .
unrollNat ◂ Nat ➔ NatF ·Nat = unroll -monoNatF .

recNat ◂ ∀ X: ★. X ➔ (Nat ➔ X ➔ X) ➔ Nat ➔ X
= Λ X. λ z. λ s. λ n. (unrollNat n).1 z s .

zero' ◂ NatF' ·Nat = Λ X. λ z. λ s. z .

zero ◂ Nat = rollNat [ zero' , zeroC ] .

suc' ◂ Nat ➔ NatF' ·Nat
= λ n. Λ X. λ z. λ s. s n (recNat z s n) .

suc ◂ Nat ➔ Nat
= λ n. rollNat [ suc' n , sucC β{ n } -(unrollNat n).2 ] .
\end{verbatim}
  \caption{Parigot naturals (part 2) (\texttt{parigot/concrete/nat.ced})}
  \label{fig:parigot-nat-2}
\end{figure}

\paragraphb{\(\mathit{Nat}\), the type of Parigot naturals.}
In Figure~\ref{fig:parigot-nat-2}, we first define the type scheme
\(\mathit{NatF'}\) whose fixpoint over-approximates the type of Parigot
naturals.
Using dependent intersection, we then define the type scheme \(\mathit{NatF}\)
to map types \(N\) to the subset of terms of type \(\mathit{NatF'} \cdot N\)
which satisfy the reflection law.
This type scheme is monotonic (\(\mathit{monoNatF}\), definition omitted), so we
may use the recursive type former \(\mathit{Rec}\) to define the type
\(\mathit{Nat}\) with rolling and unrolling operators \(\mathit{rollNat}\) and
\(\mathit{unrollNat}\) that are definitionally equal to \(\absu{\lambda}{x}{x}\)
(see Figure~\ref{fig:recType-ax}).

\paragraphb{Constructors of \(\mathit{Nat}\).}
Definitions \(\mathit{recNat}\), \(\mathit{zero}\), and \(\mathit{suc}\) are the
typed versions of the primitive recursion combinator and constructors for
Parigot naturals.
The definitions of the constructors are split into two parts, with
\(\mathit{zero'}\) and \(\mathit{suc'}\) constructing terms of type
\(\mathit{NatF'} \cdot \mathit{Nat}\) and the unprimed constructors combining their
primed counterparts with the respective proofs that they satisfy the reflection
law.
For example, in \(\mathit{suc}\) the second component of the dependent
intersection is a proof of
\[\{\mathit{reflectNatU}\ (\mathit{sucU}\ \beta\{n\}) \simeq \mathit{sucU}\
  \beta\{n\}\}\]
obtained from invoking the proof \(\mathit{sucC}\) with
\[ (\textit{unrollNat}\ n).2 : \{\mathit{reflectNatU}\ \beta\{(\mathit{unrollNat}\ n).1\}
  \simeq \beta\{(\mathit{unrollNat}\ n).1\}\}\]
This is accepted by Cedille by virtue of the following definition equalities:
\[
  \begin{array}{lclcl}
    |\mathit{sucU}|
    & =_{\beta\eta}
    & |\mathit{suc'}|
    & =_{\beta\eta}
    & |\mathit{sucC}|
    \\ |\beta\{(\mathit{unrollNat}\ n).1\}|
    & =_{\beta\eta}
    & |n|
    & =_{\beta\eta}
    &  |\beta\{n\}|
  \end{array}
\]
Finally, as expected the typed and untyped versions of each of these three
operations are definitionally equal.

\begin{figure}
  \centering
  \small
\begin{verbatim}
IndNat ◂ Nat ➔ ★
= λ n: Nat. ∀ P: Nat ➔ ★. P zero ➔ (Π m: Nat. P m ➔ P (suc m)) ➔ P n .

NatI ◂ ★ = ι n: Nat. IndNat n .

recNatI
◂ ∀ P: Nat ➔ ★. P zero ➔ (Π m: Nat. P m ➔ P (suc m)) ➔ Π n: NatI. P n.1
= Λ P. λ z. λ s. λ n. n.2 z s .

indZero ◂ IndNat zero
= Λ P. λ z. λ s. z .

zeroI ◂ NatI = [ zero , indZero ] .

indSuc ◂ Π n: NatI. IndNat (suc n.1)
= λ n. Λ P. λ z. λ s. s n.1 (recNatI z s n) .

sucI ◂ NatI ➔ NatI
= λ n. [ suc n.1 , indSuc n ] .

reflectNatI ◂ Nat ➔ NatI
= recNat zeroI (λ _. sucI) .

toNatI ◂ Cast ·Nat ·NatI
= intrCast -reflectNatI -(λ n. (unrollNat n).2) .

indNat ◂ ∀ P: Nat ➔ ★. P zero ➔ (Π m: Nat. P m ➔ P (suc m)) ➔ Π n: Nat. P n
= Λ P. λ z. λ s. λ n. recNatI z s (elimCast -toNatI n) .
\end{verbatim}
  \caption{Parigot naturals (part 3) (\texttt{parigot/concrete/nat.ced})}
  \label{fig:parigot-nat-3}
\end{figure}

\paragraphb{\(\mathit{NatI}\), the type of inductive Parigot naturals.}
The definition of the inductive subset of Parigot naturals begins in
Figure~\ref{fig:parigot-nat-3} with the predicate \(\mathit{IndNat}\) over
\(\mathit{Nat}\).
For all \(n : \mathit{Nat}\), \(\mathit{IndNat}\ n\) is the type of proofs that,
in order to prove an arbitrary predicate \(P\) holds for \(n\), it suffices to
give corresponding proofs for the constructors \(\mathit{zero}\) and
\(\mathit{suc}\).
We again note that in the successor case, we have access to both the predecessor
\(m\) and a proof of \(P\ m\) as an inductive hypothesis.
The type \(\mathit{NatI}\) is then defined with dependent intersection as
the subset of \(\mathit{Nat}\) for which the predicate \(\mathit{IndNat}\) holds.

Definition \(\mathit{recNatI}\) brings us close to the derivation of an
induction principle for \(\mathit{Nat}\), but does not quite achieve it.
With an inductive proof of predicate \(P\), we have only that \(P\) holds for
the Parigot naturals in the inductive subset.
It remains to show that \emph{every} Parigot natural is in this subset.
We begin this with proofs \(\mathit{indZero}\) and
\(\mathit{indSuc}\) stating resp.\ that \(\mathit{zero}\) satisfies
\(\mathit{IndNat}\) and, for every \(n\) in the inductive subset
\(\mathit{NatI}\), \(\mathit{suc}\ n.1\) satisfies \(\mathit{IndNat}\).
We see that \(|\mathit{indZero}| =_{\beta\eta} |\mathit{zero}|\), and also that
\(|\mathit{indSuc}| =_{\beta\eta} |\mathit{suc}|\) since \(|\mathit{recNatI}|
=_{\beta\eta} |\mathit{recNat}|\).
Thus, the constructors \(\mathit{zeroI}\) and \(\mathit{sucI}\) for
\(\mathit{NatI}\) can be formed with dependent intersection introduction.

\paragraphb{Reflection and induction.}
We can now show that every term of type \(\mathit{Nat}\) also has type
\(\mathit{NatI}\), i.e., every (canonical) Parigot natural is in the inductive subset.
We do this by leveraging the fact that satisfaction of the reflection law is
baked into the type Parigot naturals.
First, we define \(\mathit{reflectNatI}\) which uses \(\mathit{recNat}\) to
recursively rebuild a Parigot natural with the constructors \(\mathit{zeroI}\)
and \(\mathit{sucI}\) of the inductive subset.
Next, we observe that \(|\mathit{reflectNatI}| =_{\beta\eta}
|\mathit{reflectNatU}|\), so we define a cast \(\mathit{toNatI}\) where the
given proof
\[(\mathit{unrollNat}\ n).2 : \{\mathit{reflectNatU}\ \beta\{(\mathit{unrollNat}\ n).1\}
  \simeq \beta\{(\mathit{unrollNat}\ n).1\}\}\]
has a type convertible with the expected type \(\{\mathit{reflectNatI}\ n \simeq
n\}\).
So here we are using the full power of \(\mathit{intrCast}\) by exhibiting a
type inclusion using a function that behaves extensionally like identity, but is
\emph{not} intensionally the identity function.
The proof \(\mathit{indNat}\) of the induction principle for Parigot
naturals follows from \(\mathit{recNatI}\) and the use of \(\mathit{toNatI}\) to
convert the given \(n : \mathit{Nat}\) to the type \(\mathit{NatI}\).

\subsubsection{Computational and extensional character.}
\label{sec:parigot-nat-char}
\begin{figure}[h]
  \centering
  \small
\begin{verbatim}
recNatBeta1
◂ ∀ X: ★. ∀ z: X. ∀ s: Nat ➔ X ➔ X.
  { recNat z s zero ≃ z }
= Λ X. Λ z. Λ s. β .

recNatBeta2
◂ ∀ X: ★. ∀ z: X. ∀ s: Nat ➔ X ➔ X. ∀ n: Nat.
  { recNat z s (suc n) ≃ s n (recNat z s n) }
= Λ X. Λ z. Λ s. Λ n. β .

indNatComp ◂ { indNat ≃ recNat } = β .

pred ◂ Nat ➔ Nat = recNat zero (λ n. λ r. n) .

predBeta1 ◂ { pred zero ≃ zero } = β .

predBeta2 ◂ ∀ n: Nat. { pred (suc n) ≃ n }
= Λ n. β .
\end{verbatim}
  \caption{Computation laws for primitive recursion and predecessor
    (\texttt{parigot/concrete/nat.ced})}
  \label{fig:parigot-nat-comp}
\end{figure}
We now give a characterization of \(\mathit{Nat}\).
From the code listing in Figure~\ref{fig:parigot-nat-2}, it is clear
\(\mathit{recNat}\) satisfies the typing law.
Figure~\ref{fig:parigot-nat-comp} shows the proofs of the computation laws
(\(\mathit{recNatBeta1}\) and \(\mathit{recNatBeta2}\)), which hold by
definitional equality.
By inspecting the definitions of \(\mathit{recNat}\), \(\mathit{zero}\), and
\(\mathit{suc}\), and from the erasures of \(\mathit{roll}\) and
\(\mathit{unroll}\), we can confirm that the computation laws are simulated in
a constant number of reduction steps under both call-by-name and call-by-value
semantics.

Figure~\ref{fig:parigot-nat-comp} also includes \(\mathit{indNatComp}\), which
shows that the computational content underlying the induction principle is
precisely the recursion scheme, and the predecessor function \(\mathit{pred}\)
with its expected computation laws.
As mentioned earlier, we must qualify that with an efficient simulation of
\(\mathit{recNat}\) we do not obtain an efficient solution for the predecessor
function under call-by-value operational semantics.

\begin{figure}[h]
  \centering
  \small
\begin{verbatim}
recNatEta
◂ ∀ X: ★. ∀ z: X. ∀ s: Nat ➔ X ➔ X.
  ∀ h: Nat ➔ X. { h zero ≃ z } ➾ (Π n: Nat. { h (suc n) ≃ s n (h n) }) ➾
  Π n: Nat. { h n ≃ recNat z s n }
= Λ X. Λ z. Λ s. Λ h. Λ hBeta1. Λ hBeta2.
  indNat ·(λ x: Nat. { h x ≃ recNat z s x })
    (ρ hBeta1 @x.{ x ≃ z } - β)
    (λ m. λ ih.
       ρ (hBeta2 m) @x.{ x ≃ s m (recNat z s m) }
     - ρ ih @x.{ s m x ≃ s m (recNat z s m) } - β) .
\end{verbatim}
  \caption{Extensional law for primitive recursion
    (\texttt{parigot/concrete/nat.ced})}
  \label{fig:parigot-nat-ext}
\end{figure}

Finally, in Figure~\ref{fig:parigot-nat-ext} we use induction to prove that for
all \(z : X\) and \(s : \mathit{Nat} \to X \to X\), \(\mathit{recNat}\ z\ s\) is
the unique solution satisfying the computation laws for primitive recursion with
respect to \(z\) and \(s\).
Unlike the analogous proof for \(\mathit{caseNat}\) in
Section~\ref{sec:scott-nat-comp}, in the successor case we reach a subgoal where
we must prove \(s\ m\ (h\ m)\) is equal to \(s\ m\ (\mathit{recNat}\ z\ s\ m)\)
for an arbitrary \(m : \mathit{Nat}\) and function \(h : \mathit{Nat} \to X\)
which satisfies the computation laws with respect to \(z\) and \(s\).
At that point, we use the inductive hypothesis, which is unavailable with weak
induction, to conclude the proof.

\subsection{\(\mathit{Functor}\) and \(\mathit{Sigma}\)}
\label{sec:functor-sigma}

The statement of the generic primitive recursion scheme requires functors and
pair types, and the induction principle additionally requires dependent pair
types.
As we will use this scheme to define the generic Parigot encoding, in this
section we first show our formulation of functors and the functor laws and give
an axiomatic presentation of the derivation of dependent pair types with
induction in Cedille.

\paragraphb{Functors.}
\begin{figure}[h]
  \centering
\small
\begin{verbatim}
module functor (F : ★ ➔ ★).

Fmap ◂ ★ = ∀ X: ★. ∀ Y: ★. (X ➔ Y) ➔ (F ·X ➔ F ·Y).

FmapId ◂ Fmap ➔ ★ = λ fmap: Fmap.
  ∀ X: ★. ∀ Y: ★. Π c: X ➔ Y. (Π x: X. {c x ≃ x}) ➔ Π x: F ·X . {fmap c x ≃ x}.

FmapCompose ◂ Fmap ➔ ★ = λ fmap: Fmap.
  ∀ X: ★. ∀ Y: ★. ∀ Z: ★. Π f: Y ➔ Z. Π g: X ➔ Y. Π x: F ·X.
  {fmap f (fmap g x) ≃ fmap (λ x. f (g x)) x}.
\end{verbatim}
  \caption{Functors (\texttt{functor.ced})}
  \label{fig:functor}
\end{figure}

Functors and the associated identity and composition laws for them are given in
Figure~\ref{fig:functor}.
Analogous to monotonicity, functoriality of a type scheme \(F : \star \to \star\)
means that \(F\) comes together with an operation \(\mathit{fmap} :
\mathit{Fmap} \cdot F\) lifting functions \(S \to T\) to functions \(F \cdot S
\to F \cdot T\), for all types \(S\) and \(T\).
Unlike monotonicity, in working with functions we find ourselves in a
proof-relevant setting, so we will require that this lifting respects identity
(\(\mathit{FmapId}\ \mathit{fmap}\)) and composition (\(\mathit{FmapCompose}\
\mathit{fmap}\)).

\begin{figure}
  \centering
  \small
\begin{verbatim}
import functor.

module functorThms (F: ★ ➔ ★) (fmap: Fmap ·F)
  {fmapId: FmapId ·F fmap} {fmapCompose: FmapCompose ·F fmap}.

import cast .
import mono .

monoFunctor ◂ Mono ·F
= Λ X. Λ Y. λ c.
  intrCast
    -(λ d. fmap (elimCast -c) d)
    -(λ d. fmapId (elimCast -c) (λ x. β) d).
\end{verbatim}
  \caption{Functors and monotonicity (\texttt{functorThms})}
  \label{fig:functorThms}
\end{figure}

Notice also that our definition of the identity law has an extrinsic twist: the
domain and codomain of the lifted function \(c\) need not be convertible types
for us to satisfy the constraint that \(c\) acts extensionally like the identity
function.
Phrasing the identity law in this way allows us to derive a useful lemma,
\(\mathit{monoFunctor}\) in Figure~\ref{fig:functorThms}, that establishes that
every functor is a monotone type scheme (see Figure~\ref{fig:mono} for the
definition of \(\mathit{Mono}\)).

\paragraphb{Dependent pair types.}
\begin{figure}
  \centering
  \[
    \begin{array}{c}
      \infer{
      \Gamma \vdash \mathit{Sigma} \cdot S \cdot T \tpsynth \star
      }{
      \Gamma \vdash S \tpsynth \star
      \quad \Gamma \vdash T \tpsynth S \to \star
      }
      \\ \\
      \infer{
      \Gamma \vdash \mathit{mksigma} \cdot S \cdot T\ s\ t \tpsynth
      \mathit{Sigma} \cdot S \cdot T
      }{
      \Gamma \vdash \mathit{Sigma} \cdot S \cdot T \tpsynth \star
      \quad \Gamma \vdash s \tpcheck S
      \quad \Gamma \vdash t \tpcheck T\ s
      }
      \\ \\
      \begin{array}{cc}
          \infer{
          \Gamma \vdash \mathit{proj1}\ p \tpsynth S
          }{
           \Gamma \vdash p \tpsynth \mathit{Sigma} \cdot S \cdot T
          }
        & 
        \infer{
         \Gamma \vdash \mathit{proj2}\ p \tpsynth T\ (\mathit{proj1}\ p)
        }{
         \Gamma \vdash p \tpsynth \mathit{Sigma} \cdot S \cdot T
        }
      \end{array}
      \\ \\
      \infer{
       \Gamma \vdash \mathit{indsigma}\ p \cdot P\ f \tpsynth P\ p
      }{
      \Gamma \vdash p \tpsynth \mathit{Sigma} \cdot S \cdot T
      \quad \Gamma \vdash P \tpsynth \mathit{Sigma} \cdot S \cdot T \to \star
      \quad \Gamma \vdash f \tpcheck \abs{\Pi}{x}{S}{\abs{\Pi}{y}{T\ x}{P\ (\mathit{mksigma}\ x\ y)}}
      }
      \\ \\
      \begin{array}{lcl}
        |\mathit{proj1}\ (\mathit{mksigma}\ s\ t)|
        & =_{\beta\eta}
        & |s|
        \\ |\mathit{proj2}\ (\mathit{mksigma}\ s\ t)|
        & =_{\beta\eta}
        & |t|
        \\ |\mathit{indsigma}\ (\mathit{mksigma}\ s\ t)\ f|
        & =_{\beta\eta}
        & |f\ s\ t|
      \end{array}
    \end{array}
  \]
  \caption{\(\mathit{Sigma}\), axiomatically (\texttt{utils/sigma.ced})}
    \label{fig:sigma}
\end{figure}

\begin{figure}
  \centering
  \small
\begin{verbatim}
Pair ◂ ★ ➔ ★ ➔ ★
= λ S: ★. λ T: ★. Sigma ·S ·(λ _: S. T).

fork ◂ ∀ X: ★. ∀ S: ★. ∀ T: ★. (X ➔ S) ➔ (X ➔ T) ➔ X ➔ Pair ·S ·T
= Λ X. Λ S. Λ T. λ f. λ g. λ x. mksigma (f x) (g x) .
\end{verbatim}
  \caption{\(\mathit{Pair}\) (\texttt{utils/sigma.ced})}
  \label{fig:pair}
\end{figure}

Figure~\ref{fig:sigma} gives an axiomatic presentation of the dependent pair
type \(\mathit{Sigma}\) (see the code repository for the full derivation).
The constructor is \(\mathit{mksigma}\), the first and second projections are
\(\mathit{proj1}\) and \(\mathit{proj2}\), and the induction principle is
\(\mathit{indsigma}\). 
Below the type inference rules, we confirm that the projection functions and
induction principle compute as expected over pairs formed from the constructor.
In Figure~\ref{fig:pair}, the type \(\mathit{Pair}\) is defined in terms of
\(\mathit{Sigma}\) as the special case where the type of the second component
does not depend upon the first component.
Additionally, the figure also gives the utility function \(\mathit{fork}\) for
constructing non-dependent pairs to help express the computation law of the
primitive recursion scheme.

\subsection{Parigot-encoded data, generically}
\label{ssec:parigot-gen}


In this section we derive inductive Parigot-encoded datatypes generically.
The derivation is parametric in a signature functor \(F\) for the datatype with
an operation \(\mathit{fmap} : \mathit{Fmap} \cdot F\) that satisfies the
functor identity and composition laws.

\begin{figure}[h]
  \centering
  \[
    \begin{array}{c}
      \infer{
       \Gamma \vdash \mathit{recD} \cdot T\ t \tpsynth D \to T
      }{
      \Gamma \vdash T \tpsynth \star
      \quad \Gamma \vdash t \tpcheck F \cdot (\mathit{Pair} \cdot D \cdot T) \to T
      }
      \\ \\
      \begin{array}{lcl}
        |\mathit{recD}\ t\ (\mathit{inD}\ t')|
        & =_{\beta\eta}
        & |t\ (\mathit{fmap}\ (\mathit{fork}\ \mathit{id}\ (\mathit{recD}\ t))\ t')|
      \end{array}
    \end{array}
  \]  
  \caption{Generic primitive recursion scheme}
  \label{fig:scheme-rec}
\end{figure}

The typing and computation laws for the generic primitive recursion scheme for
datatype \(D\) with signature \(F\) are given in Figure~\ref{fig:scheme-rec}.
For the typing rule, we see that the primitive recursion scheme allows recursive
functions to be defined in terms an \(F\)-collection of tuples containing both
direct predecessors \emph{and} the recursive results computed from those
predecessors.
This reading is affirmed by the computation law, which states that the
action of \(\mathit{recD}\ t\) over values built from \(\mathit{inD}\ t'\) (for
some \(t' : F \cdot D\)) is to apply \(t\) to the result of tupling each
predecessor (accessed with \(\mathit{fmap}\)) with the result of
\(\mathit{recD}\ t\) (here \(\mathit{id}\) is the polymorphic identity
function).
Assuming \(t\) and \(\mathit{inD}\ t'\) are typed according to the typing
law, the right-hand side of the equation has type \(T\) with the following
assignment of types to sub-expressions.
\begin{itemize}
\item \(\mathit{id} : D \to D\)
  
\item \(\mathit{fork}\ \mathit{id}\ (\mathit{recD}\ t) : D \to \mathit{Pair}
  \cdot D \cdot T\)
  
\item \(\mathit{fmap}\ (\mathit{fork}\ \mathit{id}\ (\mathit{recD}\ t)) : F \cdot
  D \to F \cdot (\mathit{Pair} \cdot D \cdot T)\)
\end{itemize}

With primitive recursion, we can implement the datatype destructor \(\mathit{outD}\):
\[\mathit{outD} = \mathit{recD}\ (\mathit{fmap}\ \mathit{proj1})\]
This simulates the desired computation law \(|\mathit{outD}\
(\mathit{inD}\ t)| =_{\beta\eta} |t|\) only up to the functor identity and
composition laws.
With definitional equality, we obtain a right-hand side of:
\[|\mathit{fmap}\ \mathit{proj1}\ (\mathit{fmap}\ (\mathit{fork}\ \mathit{id}\
  \mathit{outD})\ t)|\]
Additionally, and as we saw for Parigot naturals, this is not an efficient
implementation of the destructor under call-by-value operational semantics since
the predecessors of \(t\) are recursively destructed.

\begin{figure}[h]
  \centering
  \[
    \begin{array}{lcl}
      D
      & =
      & \absu{\mu}{D}{\abs{\forall}{X}{\star}{(F \cdot (\mathit{Pair} \cdot D \cdot X) \to X) \to X}}
      \\ \mathit{recD}
      & =
      & \absu{\Lambda}{X}{\absu{\lambda}{a}{\absu{\lambda}{x}{\mathit{unroll}(x) \cdot X\ a}}}
      \\ \mathit{inD}
      & =
      & \absu{\lambda}{x}{\mathit{roll}(\absu{\Lambda}{X}{\absu{\lambda}{a}{a\
        (\mathit{fmap} \ (\mathit{fork}\ (\mathit{id} \cdot D)\ (\mathit{recD}\ a))\ x)}})}
    \end{array}
  \]
  \caption{Generic Parigot encoding of \(D\) in System~\(\text{F}^{\omega}\) with
    isorecursive types}
  \label{fig:data-rec}
\end{figure}

Using the typing and computation laws for primitive recursion to read an
encoding for \(D\), we obtain a generic supertype of canonical Parigot encodings
in Figure~\ref{fig:data-rec}.
Similar to the case of the Scott encoding, we find that to give the definition
of \(D\) we need some form of recursive types since \(D\) occurs in the premises
of the typing law.
For our derivation, we must further refine the type of \(D\) so that we only
include canonical Parigot encodings (i.e., those built only from \(\mathit{inD}\)).
We take the same approach used for Parigot naturals: \(\mathit{Top}\) and
the Kleene trick help us express satisfaction of the reflection law for untyped
terms, which is in turn used to give a refined definition of \(D\). 

\subsubsection{Characterization criteria.}
We formalize in Cedille the above description of the generic primitive recursion
scheme in Figures~\ref{fig:data-char-primrec-typing} and
\ref{fig:data-char-primrec}.
Definitions for the typing law of the primitive recursion scheme are given in
Figure~\ref{fig:data-char-primrec-typing}, where parameter \(F\) gives the
datatype signature.
Type family \(\mathit{AlgRec}\) gives the shape of the type functions used for
primitive recursion, and \(\mathit{PrimRec}\) gives the shape of the type of
operator \(\mathit{recD}\) itself.

\begin{figure}[h]
  \centering
  \small
\begin{verbatim}
import utils .

module primrec-typing (F: ★ ➔ ★) .

AlgRec ◂ ★ ➔ ★ ➔ ★
= λ D: ★. λ X: ★. F ·(Pair ·D ·X) ➔ X .

PrimRec ◂ ★ ➔ ★
= λ D: ★. ∀ X: ★. AlgRec ·D ·X ➔ D ➔ X .
\end{verbatim}
  \caption{Primitive recursion typing (\texttt{data-char/primrec-typing.ced})}
  \label{fig:data-char-primrec-typing}
\end{figure}

\begin{figure}
  \centering
  \small
\begin{verbatim}
import functor .
import utils .

import data-char/iter-typing .
import data-char/case-typing .

module data-char/primrec
  (F: ★ ➔ ★) (fmap: Fmap ·F) (D: ★) (inD: Alg ·F ·D).

import data-char/primrec-typing ·F .

AlgRecHom ◂ Π X: ★. AlgRec ·D ·X ➔ (D ➔ X) ➔ ★
= λ X: ★. λ a: AlgRec ·D ·X. λ h: D ➔ X.
  ∀ xs: F ·D. { h (inD xs) ≃ a (fmap (fork id h) xs) } .

PrimRecBeta ◂ PrimRec ·D ➔ ★
= λ rec: PrimRec ·D.
  ∀ X: ★. ∀ a: AlgRec ·D ·X. AlgRecHom ·X a (rec a) .

PrimRecEta ◂ PrimRec ·D ➔ ★
= λ rec: PrimRec ·D.
  ∀ X: ★. ∀ a: AlgRec ·D ·X. ∀ h: D ➔ X. AlgRecHom ·X a h ➔
  Π x: D. { h x ≃ rec a x } .

PrfAlgRec ◂ (D ➔ ★) ➔ ★
= λ P: D ➔ ★. Π xs: F ·(Sigma ·D ·P). P (inD (fmap (proj1 ·D ·P) xs)) .

fromAlgCase ◂ ∀ X: ★. AlgCase ·F ·D ·X ➔ AlgRec ·D ·X
= Λ X. λ a. λ xs. a (fmap ·(Pair ·D ·X) ·D (λ x. proj1 x) xs) .

fromAlg ◂ ∀ X: ★. Alg ·F ·X ➔ AlgRec ·D ·X
= Λ X. λ a. λ xs. a (fmap ·(Pair ·D ·X) ·X (λ x. proj2 x) xs) .
\end{verbatim}
  \caption{Primitive recursion characterization
    (\texttt{data-char/primrec.ced})}
  \label{fig:data-char-primrec}
\end{figure}

In Figure~\ref{fig:data-char-primrec}, we now assume that \(F\) has an operation
\(\mathit{fmap}\) for lifting functions, and we take additional module
parameters \(D\) for the datatype and \(\mathit{inD}\) for its constructor.
We import the definitions of the modules defined in Figures~\ref{fig:iter-laws}
and \ref{fig:data-char-case} without specifying the type scheme module
parameter, so it is given explicitly for definitions exported from these modules (e.g.,
\(\mathit{Alg} \cdot F \cdot D\)).
\(\mathit{AlgRecHom}\) gives the shape of the computation law for primitive
recursion with respect to a particular \(a : \mathit{AlgRec} \cdot D \cdot X\),
\(\mathit{PrimRecBeta}\) is a predicate on candidates for the combinator for
primitive recursion stating that they satisfy the computation law with respect
to \emph{all} such functions \(a\), and \(\mathit{PrimRecEta}\) is a predicate
stating that a candidate is the unique such solution up to function
extensionality.

The figure also lists \(\mathit{PrfAlgRec}\), a dependent version of
\(\mathit{AlgRec}\).
Read the type \(\mathit{PrfAlgRec} \cdot P\) as the type of ``\((F,D)\)-proof
algebras for \(P\)''.
It is the type of proofs that take an \(F\)-collection of
\(D\) predecessors tupled with proofs that \(P\) holds for them and produces
a proof that \(P\) holds for the value constructed from these predecessors with
\(\mathit{inD}\).
\(\mathit{PrfAlgRec}\) will be used in the derivations of full induction for
both the generic Parigot and generic Scott encoding.

Finally, as the primitive recursion scheme can be used to simulate both the
iteration and case-distinction schemes, the figure lists the helper functions
\(\mathit{fromAlgCase}\) and \(\mathit{fromAlg}\).
Definition \(\mathit{fromAlgCase}\) converts a function for use in case
distinction to one for use in primitive recursion by ignoring previously
computed results, and \(\mathit{fromAlg}\) converts a function for use in
iteration by ignoring predecessors.

\subsubsection{Generic Parigot encoding.}
\label{sec:parigot-gen}
We now detail the generic derivation of inductive Parigot-encoded data.
The developments of this section are parametrized by a functor \(F\), with
\(\mathit{fmap}\) giving the lifting of functions and \(\mathit{fmapId}\) and
\(\mathit{fmapCompose}\) the proofs that this lifting respects identity and
composition.
The construction is separated into several phases: in
Figure~\ref{fig:parigot-generic-1} we give the computational characterization of
canonical Parigot encodings as a predicate on untyped terms, then prove that the
untyped constructor preserves this predicate; in
Figure~\ref{fig:parigot-generic-2}, we define the type of Parigot encodings, its 
primitive recursion combinator, and its constructors;
in Figure~\ref{fig:parigot-generic-3} we define the inductive subset of Parigot
encodings and its constructor; finally, in Figure~\ref{fig:parigot-generic-4} we
show that every Parigot encoding is already in the inductive subset and prove induction.

\begin{figure}
  \centering
  \small
\begin{verbatim}
import functor .
import utils .

import cast .
import mono .
import recType .

module parigot/generic/encoding
  (F: ★ ➔ ★) (fmap: Fmap ·F)
  {fmapId: FmapId ·F fmap} {fmapCompose: FmapCompose ·F fmap } .

import functorThms ·F fmap -fmapId -fmapCompose .

recU ◂ Top = β{ λ a. λ x. x a } .

inU ◂ Top = β{ λ xs. λ a. a (fmap (fork id (recU a)) xs) } .

reflectU ◂ Top = β{ recU (λ xs. inU (fmap proj2 xs)) } .

DC ◂ Top ➔ ★ = λ x: Top. { reflectU x ≃ x } .

inC ◂ Π xs: F ·(ι x: Top. DC x). DC β{ inU xs }
= λ xs.
  ρ (fmapCompose ·(ι x: Top. DC x) ·(Pair ·Top ·Top) ·Top
       (λ x. proj2 x) (fork (λ x. x.1) (λ x. β{ reflectU x })) xs)
    @x.{ inU x ≃ inU xs }
- ρ (fmapId ·(ι x: Top. DC x) ·Top (λ x. β{ reflectU x.1 }) (λ x. x.2) xs)
    @x.{ inU x ≃ inU xs }
- β{ inU xs } .
\end{verbatim}
  \caption{Generic Parigot encoding (part 1)
    (\texttt{parigot/generic/encoding.ced})}
  \label{fig:parigot-generic-1}
\end{figure}

\paragraphb{Reflection law.}
The definitions \(\mathit{recU}\), \(\mathit{inU}\), and \(\mathit{reflectU}\)
of Figure~\ref{fig:parigot-generic-1} are untyped versions of resp.\ the
combinator for primitive recursion, the generic constructor, and the operation
that builds canonical Parigot encodings by recursively rebuilding the encoding
with the constructor \(\mathit{inU}\) (compare to Figure~\ref{fig:parigot-nat-1}
of Section~\ref{sec:parigot-nat}).
Predicate \(\mathit{DC}\) gives the characterization of canonical Parigot 
encodings that \(\mathit{reflectU}\) behaves extensionally like the
identity function for them.

Even without yet having a type for Parigot-encoded data, we can still
effectively reason about the behaviors of these untyped programs.
This is shown in the proof of \(\mathit{inC}\), which states \(\mathit{inU}\
\mathit{xs}\) satisfies the predicate \(\mathit{DC}\) if \(\mathit{xs}\) is an
\(F\)-collection of untyped terms that satisfy \(\mathit{DC}\).
In the body, the expected type is convertible with the type
\[\{{ \mathit{inU}\ (\mathit{fmap}\ \mathit{proj2}\
    (\mathit{fmap}\ (\mathit{fork}\ \mathit{id}\ \mathit{reflectU})\
    \mathit{xs}))
    ≃ \mathit{inU}\ \mathit{xs}}\}\] 
We rewrite by the functor composition law to fuse the mapping of
\(\mathit{proj2}\) with that of \(\mathit{fork}\ \mathit{id}\
\mathit{reflectU}\), and now the left-hand side of the resulting equation is
convertible (by the computation law for \(\mathit{proj2}\)) with
\[ \mathit{inU}\ (\mathit{fmap}\ \mathit{reflectU}\ \mathit{xs})\]
Here, we can rewrite by the functor identity law using the assumption that, on the
predecessors contained in \(\mathit{xs}\), \(\mathit{reflectU}\) behaves as the
identity function.
Note that we use the Kleene trick so that the proof
\(\mathit{inC}\) is definitionally equal to the untyped constructor
\(\mathit{inU}\).
This allows us to use dependent intersection to form the refinement needed to
type only canonical Parigot encodings.

\begin{figure}[h]
  \centering
  \small
\begin{verbatim}
import data-char/primrec-typing ·F .

DF' ◂ ★ ➔ ★ = λ D: ★. ∀ X: ★. AlgRec ·D ·X ➔ X .
DF  ◂ ★ ➔ ★ = λ D: ★. ι x: DF' ·D. DC β{ x } .

monoDF ◂ Mono ·DF = <..>

D ◂ ★ = Rec ·DF .
rollD   ◂ DF ·D ➔ D = roll -monoDF .
unrollD ◂ D ➔ DF ·D = unroll -monoDF .

recD ◂ PrimRec ·D
= Λ X. λ a. λ x. (unrollD x).1 a .

inD' ◂ F ·D ➔ DF' ·D
= λ xs. Λ X. λ a. a (fmap (fork (id ·D) (recD a)) xs) .

toDC ◂ Cast ·D ·(ι x: Top. DC x)
= intrCast -(λ x. [ β{ x } , (unrollD x).2 ]) -(λ x. β) .

inD ◂ F ·D ➔ D
= λ xs. rollD [ inD' xs , inC (elimCast -(monoFunctor toDC) xs) ] .
\end{verbatim}
  \caption{Generic Parigot encoding (part 2)
    (\texttt{parigot/generic/encoding.ced})}
  \label{fig:parigot-generic-2}
\end{figure}

\paragraphb{Parigot encoding \(\mathit{D}\).}
In Figure~\ref{fig:parigot-generic-2}, \(\mathit{DF'}\) is the type
scheme whose least fixpoint is the solution to \(D\) in Figure~\ref{fig:data-rec}, and
\(\mathit{DF}\) is the refinement of \(\mathit{DF'}\) to those terms satisfying
the reflection law.
Since \(\mathit{DF}\) is a monotone type scheme (\(\mathit{monoDF}\), proof
omitted), we may define \(D\) as its fixpoint with rolling and unrolling
operations \(\mathit{rollD}\) and \(\mathit{unrollD}\).
Following this is \(\mathit{recD}\), the typed combinator for primitive
recursion (see Figure~\ref{fig:data-char-primrec-typing} for the definition of
\(\mathit{PrimRec}\)).

The constructor \(\mathit{inD}\) for \(D\) is defined in two parts.
First, we define \(\mathit{inD'}\) to construct a value of type \(\mathit{DF'}
\cdot D\) from \(\mathit{xs} : F \cdot D\), with the definition similar to that
which we obtained in Figure~\ref{fig:data-rec}.
Then, with the auxiliary proof \(\mathit{toDC}\) that \(D\) is included into the
type of untyped terms satisfying \(\mathit{DC}\), we define the constructor
\(\mathit{inD}\) for \(D\) using the rolling operation and dependent intersection
introduction.
The definition is accepted by virtue of the following definitional equalities:
\[
  \begin{array}{lcl}
    |\mathit{inD'}|
    & =_{\beta\eta}
    & |\mathit{inU}|
    \quad =_{\beta\eta}
    \quad |\mathit{inC}|
    \\ |\mathit{xs}|
    & =_{\beta\eta}
    & |\mathit{elimCast}\ \mhyph (\mathit{monoFunctor}\ \mathit{toDC})\ \mathit{xs}|
  \end{array}
\]
We can use Cedille to confirm that the typed recursion combinator and
constructor are definitionally equal to the corresponding untyped operations.
\begin{verbatim}
_ ◂ { recD ≃ recU } = β .
_ ◂ { inD  ≃ inU }  = β .
\end{verbatim}

\begin{figure}
  \centering
  \small
\begin{verbatim}
import data-char/primrec ·F fmap -fmapId -fmapCompose ·D inD .

IndD ◂ D ➔ ★ = λ x: D. ∀ P: D ➔ ★. PrfAlgRec ·P ➔ P x .

DI ◂ ★ = ι x: D. IndD x .

recDI ◂ ∀ P: D ➔ ★. PrfAlgRec ·P ➔ Π x: DI. P x.1
= Λ P. λ a. λ x. x.2 a .

fromDI ◂ Cast ·DI ·D
= intrCast -(λ x. x.1) -(λ x. β) .

inDI' ◂ F ·DI ➔ D
= λ xs. inD (elimCast -(monoFunctor fromDI) xs) .

indInDI' ◂ Π xs: F ·DI. IndD (inDI' xs)
= λ xs. Λ P. λ a.
  ρ ς (fmapId ·DI ·D (λ x. proj1 (mksigma x.1 (recDI a x))) (λ x. β) xs)
    @x.(P (inD x))
- ρ ς (fmapCompose (proj1 ·D ·P) (λ x: DI. mksigma x.1 (recDI a x)) xs)
    @x.(P (inD x))
- a (fmap ·DI ·(Sigma ·D ·P) (λ x. mksigma x.1 (recDI a x)) xs) .

inDI ◂ F ·DI ➔ DI
= λ xs. [ inDI' xs , indInDI' xs ] .
\end{verbatim}
  \caption{Generic Parigot encoding (part 3)
    (\texttt{parigot/generic/encoding.ced})}
  \label{fig:parigot-generic-3}
\end{figure}

\paragraphb{Inductive Parigot encoding \(\mathit{DI}\).}
In Figure~\ref{fig:parigot-generic-3} we give the definition of \(\mathit{DI}\),
the type for the inductive subset of Parigot-encoded data, and its
constructor \(\mathit{inDI}\).
This definition begins by bringing \(\mathit{PrfAlgRec}\) into scope with an
import, used to define the predicate \(\mathit{IndD}\).
For arbitrary \(x : D\), the property \(\mathit{IndD}\ x\) states that, for all
\(P : D \to \star\), to prove \(P\ x\) it suffices to give an
\((F,D)\)-proof-algebra for \(P\).
Then, we define the inductive subset \(\mathit{DI}\) of Parigot encodings that
satisfy the predicate \(\mathit{IndD}\) using dependent intersections.
Definition \(\mathit{recDI}\) is the induction principle for terms of type \(D\)
in this subset, and corresponds to the definition \(\mathit{recNatI}\) for
Parigot naturals in Figure~\ref{fig:parigot-nat-3}.
With \(\mathit{recDI}\), we can obtain the desired induction scheme for \(D\) if
we show \(\mathit{D}\) is included into \(\mathit{DI}\).

We begin the proof of this type inclusion by defining the constructor
\(\mathit{inDI}\) for the inductive subset.
This is broken into three parts.
First, \(\mathit{inDI'}\) constructs a value of type \(D\) from an
\(F\)-collection of \(\mathit{DI}\) predecessors using the inclusion of type
\(\mathit{DI}\) into \(D\) (\(\mathit{fromDI}\)).
Next, with \(\mathit{indInDI'}\) we prove that the values constructed from
\(\mathit{inDI'}\) satisfy the inductivity predicate \(\mathit{IndD}\) using the
functor identity and composition laws.

In the body of \(\mathit{indInDI'}\), we use equational reasoning to bridge the
gap between the expected type \(P\ (\mathit{inDI'}\ \mathit{xs})\) and the type
of the expression in the final line, which is:
\[ P\ (\mathit{inD}\ (\mathit{fmap}\ (\mathit{proj1} \cdot D \cdot P)\
  (\mathit{fmap} \cdot \mathit{DI} \cdot (\mathit{Sigma} \cdot D \cdot P) (\absu{\lambda}{x}{\mathit{mksigma}\ x.1\
    (\mathit{recDI}\ a\ x)}))\ \mathit{xs}))\]
The main idea here is that we first use the functor composition law to fuse the
two lifted operations, then observe that this results in lifting a single
function,
\[ \absu{\lambda}{x}{\mathit{proj1}\ (\mathit{mksigma}\ x.1\ (\mathit{recDI}\ a\
    x)) : \mathit{DI} \to D}
\]
that is definitionally equal to the identity function (by the computation law
for \(\mathit{proj1}\), Figure~\ref{fig:sigma}, and the erasure of dependent
intersection projections).
We then use the functor identity law, exchanging \(\mathit{inD}\) for
\(\mathit{inDI'}\) in the rewritten type (these two terms are definitionally equal).
As \(\mathit{inDI'}\) and \(\mathit{indInDI'}\) are definitionally equal to each
other (since \(|\mathit{fork}\ \mathit{id}\ (\mathit{recD}\ a)| =_{\beta\eta}
|\absu{\lambda}{x}{\mathit{mksigma}\ x.1\ (\mathit{recDI}\ a\ x)}|\)),
we can define the constructor \(\mathit{inDI}\) using the rolling
operation and dependent intersection introduction. 

\begin{figure}
  \centering
  \small
\begin{verbatim}
reflectDI ◂ D ➔ DI
= recD (λ xs. inDI (fmap ·(Pair ·D ·DI) ·DI (λ x. proj2 x) xs)) .

toDI ◂ Cast ·D ·DI
= intrCast -reflectDI -(λ x. (unrollD x).2) .

indD ◂ ∀ P: D ➔ ★. PrfAlgRec ·D inD ·P ➔ Π x: D. P x
= Λ P. λ a. λ x. recDI a (elimCast -toDI x) .
\end{verbatim}
  \caption{Generic Parigot encoding (part 4)
    (\texttt{parigot/generic/encoding/ced})}
  \label{fig:parigot-generic-4}
\end{figure}

\paragraphb{Reflection and induction.}
We can now show an inclusion of the type \(D\) into the type \(\mathit{DI}\),
giving us induction, by using the fact that terms of type \(D\) are canonical
Parigot encodings.
This is shown in Figure~\ref{fig:parigot-generic-4}.
First, we define the operation \(\mathit{reflectDI}\) which recursively rebuilds
Parigot-encoded data with the constructor \(\mathit{inDI}\) for the inductive
subset, producing a value of type \(\mathit{DI}\).
Then, since \(|\mathit{reflectDI}| =_{\beta\eta} |\mathit{reflectU}|\), we can
use \(\mathit{reflectDI}\) to witness the inclusion of \(D\) into
\(\mathit{DI}\), since every term \(x\) of type \(D\) is itself a proof that
\(\mathit{reflectU}\) behaves extensionally as the identity function on \(x\).
From here, the proof \(\mathit{indD}\) of induction uses
\(\mathit{recDI}\) in combination with this type inclusion.

\subsubsection{Computational and extensional character.}
\label{sec:parigot-props}
We now analyze the properties of our generic Parigot encoding.
In particular, we wish to know the normalization guarantee for terms of type
\(D\) and to confirm that \(\mathit{recD}\) is an efficient and unique solution
to the primitive recursion scheme for \(D\), which can in turn be used to
simulate case distinction and iteration.
We omit the uniqueness proofs, which make heavy use of the functor laws and
rewriting (see the code repository for this paper).

\begin{figure}
\small
\begin{verbatim}
import functor .
import cast .
import recType .
import utils .

module parigot/generic/props
  (F: ★ ➔ ★) (fmap: Fmap ·F)
  {fmapId: FmapId ·F fmap} {fmapCompose: FmapCompose ·F fmap} .

import functorThms ·F fmap -fmapId -fmapCompose .
import parigot/generic/encoding ·F fmap -fmapId -fmapCompose .
import data-char/primrec-typing ·F .

normD ◂ Cast ·D ·(AlgRec ·D ·D ➔ D)
= intrCast -(λ x. (unrollD x).1 ·D) -(λ x. β) .

import data-char/primrec ·F fmap -fmapId -fmapCompose ·D inD .

recDBeta ◂ PrimRecBeta recD
= Λ X. Λ a. Λ xs. β .

reflectD ◂ Π x: D. { recD (fromAlg inD) x ≃ x }
= λ x. (unrollD x).2 .

recDEta ◂ PrimRecEta recD = <..>
\end{verbatim}
  \caption{Characterization of \(\mathit{recD}\) (\texttt{parigot/generic/props.ced})}
  \label{fig:parigot-props-rec}
\end{figure}

\paragraphb{Normalization guarantee.}
In Figure~\ref{fig:parigot-props-rec}, \(\mathit{normD}\) establishes the
inclusion of type \(D\) into the function type \(\mathit{AlgRec} \cdot D \cdot D
\to D\).
By Proposition~\ref{thm:cedille-termination}, this guarantees call-by-name
normalization of closed terms of type \(D\).

\paragraphb{Primitive recursion scheme}
With proof \(\mathit{recDBeta}\), we have that our solution \(\mathit{recD}\)
(Figure~\ref{fig:parigot-generic-2}) satisfies the computation law by
definitional equality.
By inspecting the definitions of \(\mathit{recD}\) and \(\mathit{inD}\), and the
erasures of \(\mathit{roll}\) and \(\mathit{unroll}\)
(Figure~\ref{fig:recType-ax}), we can confirm that the computation law is
satisfied in a constant number of steps under both call-by-name and
call-by-value operational semantics.

Definition \(\mathit{recDEta}\) establishes that this solution is unique up function
extensionality.
The proof follows from induction, the functor laws, and a non-obvious use of the
Kleene trick.
The reflection law is usually obtained as a consequence of uniqueness, but as
the generic Parigot encoding has been defined with satisfaction of this law
baked in, the proof \(\mathit{reflectD}\) proceeds by appealing to that fact
directly.

\begin{figure}[h]
  \centering
  \small
\begin{verbatim}
import data-char/case-typing ·F .
import data-char/case ·F ·D inD .

caseD ◂ Case ·D
= Λ X. λ a. recD (fromAlgCase a) .

caseDBeta ◂ CaseBeta caseD
= Λ X. Λ a. Λ xs.
  ρ (fmapCompose ·D ·(Pair ·D ·X) ·D
       (λ x. proj1 x) (fork (id ·D) (caseD a)) xs)
    @x.{ a x ≃ a xs }
- ρ (fmapId ·D ·D (λ x. proj1 (fork (id ·D) (caseD a) x)) (λ x. β) xs)
    @x.{ a x ≃ a xs }
- β.

caseDEta ◂ CaseEta caseD = <..>
\end{verbatim}
  \caption{Characterization of \(\mathit{caseD}\)
    (\texttt{parigot/generic/props.ced})}
  \label{fig:parigot-props-case}
\end{figure}

\paragraphb{Case-distinction scheme.}
In Figure~\ref{fig:parigot-props-case}, we define the candidate
\(\mathit{caseD}\) for the operation giving case distinction for \(D\) using
\(\mathit{recD}\) and \(\mathit{fromAlgCase}\) (Figure~\ref{fig:data-char-primrec}).
This definition satisfies the computation law only up to the functor laws.
With definitional equality alone, \(\mathit{caseD}\ a\ (\mathit{inD}\
\mathit{xs})\) is joinable with
\[a\ (\mathit{fmap}\ \mathit{proj1}\ (\mathit{fmap}\ (\mathit{fork}\
  \mathit{id}\ (\mathit{caseD}\ a))\ \mathit{xs}))\] 
meaning we have introduced another traversal over the signature with
\(\mathit{fmap}\).
As we have seen before, under call-by-value semantics this would also cause
\(\mathit{caseD}\ a\) to be needlessly computed for all predecessors.
The proof of extensionality, \(\mathit{caseDEta}\), follows from
\(\mathit{recDEta}\).

\begin{figure}[h]
  \centering
  \small
\begin{verbatim}
import data-char/destruct ·F ·D inD .

outD ◂ Destructor = caseD (λ xs. xs) .

lambek1D ◂ Lambek1 outD
= λ xs. ρ (caseDBeta ·(F ·D) -(λ x. x) -xs) @x.{ x ≃ xs } - β .

lambek2D ◂ Lambek2 outD = <..>
\end{verbatim}
  \caption{Characterization of destructor \(\mathit{outD}\)
    (\texttt{parigot/generic/props.ced})}
  \label{fig:parigot-props-out}
\end{figure}

\paragraphb{Destructor.}
In Figure~\ref{fig:parigot-props-out} we define the destructor
\(\mathit{outD}\) using case distinction.
As such, the destructor inherits the caveat that the computation law,
\(\mathit{lambek1D}\), only holds up to the functor laws and is not efficient
under call-by-name semantics.
The extensionality law, \(\mathit{lambek2D}\), holds by induction.

\begin{figure}[h]
  \centering
  \small
\begin{verbatim}
import data-char/iter-typing ·F .
import data-char/iter ·F fmap ·D inD .

foldD ◂ Iter ·D
= Λ X. λ a. recD (fromAlg a) .

foldDBeta ◂ IterBeta foldD
= Λ X. Λ a. Λ xs.
  ρ (fmapCompose ·D ·(Pair ·D ·X) ·X
       (λ x. proj2 x) (fork (id ·D) (foldD a)) xs)
    @x.{ a x ≃ a (fmap (foldD a) xs) }
- β .

foldDEta ◂ IterEta foldD = <..>
\end{verbatim}
  \caption{Characterization of \(\mathit{foldD}\) (\texttt{parigot/generic/props.ced})}
  \label{fig:parigot-initial}
\end{figure}

\paragraphb{Iteration.}
The last property we confirm is that we may use \(\mathit{recD}\) to give a
unique solution for the typing and computation laws of the iteration scheme
(Figures~\ref{fig:iter-typing} and \ref{fig:iter-laws}).
The proposed solution is \(\mathit{foldD}\), given in
Figure~\ref{fig:parigot-initial}.
The computation law, proven with \(\mathit{foldDBeta}\), only holds by the
functor laws, as there are two traversals of the signature with \(\mathit{fmap}\)
instead of one (\(\mathit{fromAlg}\), defined in
Figure~\ref{fig:data-char-primrec}, introduces the additional traversal).

\subsubsection{Example: Rose trees.}
\label{sec:parigot-rosetrees}
We conclude the discussion of Parigot-encoded data by using the generic
derivation to define rose trees with induction.
Rose trees, also called finitely branching trees, are a datatype in which subtrees are
contained within a list, meaning that nodes may have an arbitrary number of
children.
In Haskell, they are defined as:
\begin{verbatim}
data RoseTree a = Rose a [RoseTree a]
\end{verbatim}

There are two motivations for this choice of example.
First, while the rose tree datatype can be put into a form that reveals it is a strictly
positive datatype by using containers \citep{AAG03_Categories-of-Containers} and
a nested inductive definition, we use impredicative encodings for datatypes
(including lists), and so the rose tree datatype we define is not syntactically
strictly positive.
Second, the expected induction principle for rose trees is moderately tricky.
Indeed, it is complex enough to be difficult to synthesize automatically:
additional plugins
\citep{Ull20_Generating-Induction-Principles-for-Nested-Inductive-Types-MetaCoq}
are required for Coq, and in the Agda standard library \citep{ADT21_Agda-StdLib}
the induction principle can be obtained automatically by declaring rose trees as
a size-indexed type \citep{Ab10_Integrating-Sized-and-Dependent-Types}. 

The difficulty lies in giving users access to the inductive hypothesis for the
sub-trees contained within the list.
To work around this, users of Coq or Agda can define a mutually inductive
definition, forgoing reuse for the special-purpose ``list of rose trees''
datatype, or prove the desired induction principle manually for the definition
that uses the standard list type.
In this section, we use the induction principle derived for our
generic Parigot encoding to take this second approach, proving an
induction principle for rose trees in the style expected of a mutual inductive
definition while re-using the list datatype.
The presentation of a higher-level language, based on such a generic encoding, in
which this induction principle could be automatically derived is a matter for
future work.

\begin{figure}[h]
  \centering
  \small
\begin{verbatim}
import utils.

module parigot/examples/list-data (A : ★).

import signatures/list ·A .
import parigot/generic/encoding as R
  ·ListF listFmap -listFmapId -listFmapCompose .

List ◂ ★ = R.D .

nil ◂ List = <..>
cons ◂ A ➔ List ➔ List = <..>

indList
◂ ∀ P: List ➔ ★. P nil ➔ (Π hd: A. Π tl: List. P tl ➔ P (cons hd tl)) ➔
  Π xs: List. P xs
= <..>

recList ◂ ∀ X: ★. X ➔ (A ➔ List ➔ X ➔ X) ➔ List ➔ X
= Λ X. indList ·(λ x: List. X) .
\end{verbatim}
  \caption{Lists (\texttt{parigot/examples/list-data.ced})}
  \label{fig:parigot-list-data}
\end{figure}

\begin{figure}[h]
  \centering
  \small
\begin{verbatim}
import utils .
import functor .

module parigot/examples/list .

import parigot/examples/list-data .

listMap ◂ Fmap ·List
= Λ A. Λ B. λ f.
  recList ·A ·(List ·B) (nil ·B) (λ hd. λ tl. λ xs. cons (f hd) xs) .

listMapId ◂ FmapId ·List listMap = <..>
listMapCompose ◂ FmapCompose ·List listMap = <..>
\end{verbatim}
  \caption{Map for lists (\texttt{parigot/examples/list.ced})}
  \label{fig:parigot-list-map}
\end{figure}

\paragraphb{Lists.}
Figure~\ref{fig:parigot-list-data} shows the definition of the datatype
\(\mathit{List}\), and the types of the list constructors (\(\mathit{nil}\) and
\(\mathit{cons}\)) and induction principle (\(\mathit{indList}\)).
These are defined using the generic derivation of inductive Parigot encodings.
This derivation is brought into scope as ``\(R\)'', and the definitions
within that module are accessed with the prefix ``\(\mathit{R.}\)'', e.g.,
``\(\mathit{R.D}\)'' for the datatype.
Note also that code in the figure refers to \(\mathit{List}\) as a datatype, not
\(\mathit{List} \cdot A\), since \(A\) is a parameter to the module.

For the sake of brevity, we omit the definitions of the list signature
(\(\mathit{ListF}\)), its mapping operation (\(\mathit{listFmap}\)), and the
proofs this mapping satisfies the functor laws (\(\mathit{listFmapId}\) and
\(\mathit{listFmapCompose}\)) (see \texttt{signatures/list.ced} in the code
repository).
In the figure, these are given as module arguments to the generic derivation.
We also define the primitive recursion principle \(\mathit{recList}\) for lists
as a non-dependent use of induction.

In Figure~\ref{fig:parigot-list-map}, we change module contexts (so we may
consider lists with different element types, e.g. \(\mathit{List} \cdot B\)) to
define the list mapping operation \(\mathit{listMap}\) by recursion.
We also prove with \(\mathit{listMapId}\) and \(\mathit{listMapCompose}\) that
this mapping operation obeys the functor laws.

\begin{figure}[h]
  \centering
  \small
\begin{verbatim}
import functor .
import utils .

module signatures/tree
  (A: ★) (F: ★ ➔ ★) (fmap: Fmap ·F)
  {fmapId: FmapId ·F fmap} {fmapCompose: FmapCompose ·F fmap} .

TreeF ◂ ★ ➔ ★ = λ T: ★. Pair ·A ·(F ·T) .

treeFmap ◂ Fmap ·TreeF
= Λ X. Λ Y. λ f. λ t. mksigma (proj1 t) (fmap f (proj2 t)) .

treeFmapId ◂ FmapId ·TreeF treeFmap = <..>
treeFmapCompose ◂ FmapCompose ·TreeF treeFmap = <..>
\end{verbatim}
  \caption{Signature for \(F\)-branching trees (\texttt{signatures/tree.ced})}
  \label{fig:parigot-tree-sig}
\end{figure}

\paragraphb{Signature \(\mathit{TreeF}\).}
Figure~\ref{fig:parigot-tree-sig} gives the signature \(\mathit{TreeF}\) for a
datatype of trees whose branching factor is given by a functor \(F\),
generalizing the rose tree datatype.
The proofs that the lifting operation \(\mathit{treeFmap}\) respects identity
and composition make use of the corresponding proofs for the given \(F\).

\begin{figure}[h]
  \centering
  \small
\begin{verbatim}
import utils .
import list-data .
import list .

module parigot/examples/rosetree-data (A: ★) .

import signatures/tree ·A ·List listMap -listMapId -listMapCompose .

import parigot/generic/encoding as R
  ·TreeF treeFmap -treeFmapId -treeFmapCompose .

RoseTree ◂ ★ = R.D .

rose ◂ A ➔ List ·RoseTree ➔ RoseTree
= λ x. λ t. R.inD (mksigma x t) .

rose' ◂ ∀ P: RoseTree ➔ ★. TreeF ·(Sigma ·RoseTree ·P) ➔ RoseTree
= Λ P. λ xs. R.inD (treeFmap (proj1 ·RoseTree ·P) xs) .

indRoseTree
◂ ∀ P: RoseTree ➔ ★. ∀ Q: List ·RoseTree ➔ ★.
  Q (nil ·RoseTree) ➔
  (Π t: RoseTree. P t ➔ Π ts: List ·RoseTree. Q ts ➔ Q (cons t ts)) ➔
  (Π x: A. Π ts: List ·RoseTree. Q ts ➔ P (rose x ts)) ➔
  Π t: RoseTree. P t
= Λ P. Λ Q. λ n. λ c. λ r.
  R.indD ·P (λ xs.
     indsigma xs ·(λ x: TreeF ·(Sigma ·RoseTree ·P). P (rose' x))
       (λ x. λ ts.
          [conv ◂ List ·(Sigma ·RoseTree ·P) ➔ List ·RoseTree
           = listMap (proj1 ·RoseTree ·P)]
        - [pf ◂ Q (conv ts)
           = indList ·(Sigma ·RoseTree ·P)
               ·(λ x: List ·(Sigma ·RoseTree ·P). Q (conv x))
               n (λ hd. λ tl. λ ih. c (proj1 hd) (proj2 hd) (conv tl) ih) ts]
        - r x (conv ts) pf)) .
\end{verbatim}
  \caption{Rose trees (\texttt{parigot/examples/rosetree-data.ced})}
  \label{fig:parigot-rosetree}
\end{figure}

\paragraphb{Rose trees.}
In Figure~\ref{fig:parigot-rosetree}, we instantiate the module parameters for
the signature of \(F\)-branching trees with \(\mathit{List}\) and
\(\mathit{listMap}\), then instantiate the generic derivation of inductive
Parigot encodings with \(\mathit{TreeF}\) and \(\mathit{treeFmap}\).
We define the standard constructor \(\mathit{rose}\) for rose trees using the
generic constructor \(\mathit{R.inD}\), and give a variant constructor
\(\mathit{rose'}\) which we use in the definition of the induction principle for
rose trees.
This variant uses \(\mathit{treeFmap}\) to remove the tupled proofs that an
inductive hypothesis holds for sub-trees, introduced in the generic induction principle.

Finally, we give the induction principle for rose trees as
\(\mathit{indRoseTree}\) in the figure.
This is a mutual induction principle, with \(P\) the property one desires to
show holds for all rose trees and \(Q\) the invariant maintained for collections
of sub-trees.
We require that:
\begin{itemize}
\item \(Q\) holds for \(\mathit{nil}\), bound as \(n\);
  
\item if \(P\) holds for \(\mathit{t}\) and \(Q\) holds for \(\mathit{ts}\) then \(Q\)
  holds for \(\mathit{cons}\ t\ \mathit{ts}\), bound as \(c\); and that

\item \(P\) holds for \(\mathit{rose}\ x\ \mathit{ts}\) when
  \(Q\) holds for \(\mathit{ts}\), bound as \(r\).

\end{itemize}
In the body of \(\mathit{indRoseTree}\), we use the induction principle
\(\mathit{R.indD}\) for the generic Parigot encoding, then
the induction principle \(\mathit{indsigma}\) (Figure~\ref{fig:sigma}) for
pairs, revealing \(\ann{x}{\mathit{RoseTree}}\) and \(\ann{\mathit{ts}}{\mathit{List}
  \cdot (\mathit{Sigma} \cdot \mathit{RoseTree} \cdot P)}\).
With auxiliary function \(\mathit{conv}\) to convert \(\mathit{ts}\) to a list
of rose trees, we use list induction on \(\mathit{ts}\) to prove \(Q\) holds for
\(\mathit{conv}\ \mathit{ts}\).
With this proved as \(\mathit{pf}\), we can conclude by using \(r\).

\section{Lepigre-Raffalli encoding}
\label{sec:lr}

We now revisit the issue of programming with Scott-encoded data.
Neither the case-distinction scheme, nor the weak induction principle we derived
in Section~\ref{sec:scott}, provide an obvious mechanism for recursion.
In contrast, the Parigot encoding readily admits the primitive recursion scheme,
as it can be viewed as a solution to that scheme.
So despite its significant overhead in space representation, the
Parigot encoding appears to have a clear advantage over the Scott encoding in
total typed lambda calculi.

Amazingly, in some settings this deficit of the Scott encoding is \emph{only}
apparent.
Working with a logical framework, \cite{parigot88} showed how to derive with
``metareasoning'' a strongly normalizing recursor for Scott naturals.
More recently, \cite{lepigre+19} demonstrated a well-typed recursor for Scott
naturals in a Curry-style theory featuring a sophisticated form of subtyping
which utilizes ``circular but well-founded'' derivations.
The Lepigre-Raffalli construction involves a novel impredicative encoding of
datatypes, which we shall call the \emph{Lepigre-Raffalli encoding}, that both
supports recursion \emph{and} is a supertype of the Scott encoding.
In Cedille, we can similarly show an inclusion of the type of Scott encodings into
the type of Lepigre-Raffalli encodings by using weak induction together with the
fact that our derived recursive types are least fixpoints.

\paragraphb{Lepigre-Raffalli recursion.}
We elucidate the construction of the Lepigre-Raffalli encoding by showing its
relationship to the case-distinction scheme.
The computation laws for case distinction over natural numbers
(Figure~\ref{fig:nat-case}) do not form a recursive system of equations.
However, having obtained solutions for Scott naturals and
\(\mathit{caseNat}\), we can \emph{introduce} recursion into the computation laws
by observing that for all \(n\), \(|n| =_{\beta\eta}
|\absu{\lambda}{z}{\absu{\lambda}{s}{\mathit{caseNat}\ z\ s\ n}}|\).
\[
  \begin{array}{lcl}
    \\ |\mathit{caseNat}\ t_1\ t_2\ \mathit{zero}|
    & =_{\beta\eta}
    & |t_1|
    \\ |\mathit{caseNat}\ t_1\ t_2\ (\mathit{suc}\ n)|
    & =_{\beta\eta}
    & |t_2\ (\absu{\lambda}{z}{\absu{\lambda}{s}{\mathit{caseNat}\ z\ s\ n}})|
  \end{array}
\]

Viewing the computation laws this way, we see in the \(\mathit{suc}\) case that
\(t_2\) is given a function which will make a recursive call on \(n\) when
provided a suitable base and step case.
We desire that these be the same base and step cases originally provided,
i.e., that these in fact be \(t_1\) and \(t_2\) again.
To better emphasize this new interpretation, we rename \(\mathit{caseNat}\)
to \(\mathit{recLRNat}\).
By congruence of \(\beta\eta\)-equivalence, the two equations above give us:
\[
  \begin{array}{lcl}
    |\mathit{recLRNat}\ t_1\ t_2\ \mathit{zero}\ t_1\ t_2|
    & =_{\beta\eta}
    & |t_1\ t_1\ t_2|
    \\ |\mathit{recLRNat}\ t_1\ t_2\ (\mathit{suc}\ n)\ t_1\ t_2|
    & =_{\beta\eta}
    & |t_2\ (\absu{\lambda}{z}{\absu{\lambda}{s}{\mathit{recLRNat}\ z\ s\ n}})\
      t_1\ t_2|
  \end{array}
\]

To give types to the terms involved in these equations, observe that we can use
impredicative quantification to address the self-application occurring in the
right-hand sides.
Below, let the type \(T\) of the result we are computing be such that type
variables \(Z\) and \(S\) are fresh with respect to its free variables, and let
``\(?\)'' be a placeholder for a type.
\[
  \begin{array}{lcl}
    t_1
    & : 
    & \abs{\forall}{Z}{\star}{\abs{\forall}{S}{\star}{Z \to S \to T}}
    \\ t_2
    & :
    & \abs{\forall}{Z}{\star}{\abs{\forall}{S}{\star}{\ ? \to Z \to S \to T}}
  \end{array}
\]

This gives an interpretation of \(t_1\) as a constant function
that ignores its two arguments and returns a result of type \(T\).
For \(t_2\), ``\(?\)'' holds the place of the type of its first argument,
\(\absu{\lambda}{z}{\absu{\lambda}{s}{\mathit{recLRNat}\ z\ s\ n}}\).
We are searching for a type that matches our intended reading that \(t_2\) will
instantiate the arguments \(z\) and \(s\) with \(t_1\) and \(t_2\).
Now, \(t_2\) is provided copies of \(t_1\) and \(t_2\) at the universally quantified
types \(Z\) and \(S\), so we make a further refinement:
\[
  \abs{\forall}{Z}{\star}{\abs{\forall}{S}{\star}{(Z \to S \to\ ?) \to Z \to S
      \to T}}
\]
We can complete the type of \(t_2\) by observing that in the system of recursive
equations for the computation law, \(\mathit{recLRNat}\) is a function of five arguments.
Using \(\eta\)-expansion, we can rewrite the equation for the successor case to
match this usage:
\[
  |\mathit{recLRNat}\ t_1\ t_2\ (\mathit{suc}\ n)\ t_1\ t_2| =_{\beta\eta} |t_2\
  (\absu{\lambda}{z}{\absu{\lambda}{s}{\absu{\lambda}{z'}{\absu{\lambda}{s'}{\mathit{recLRNat}\
        z\ s\ n\ z'\ s'}}}})\ t_1\ t_2|
\]
where we understand that, from the perspective of \(t_2\), instantiations of
\(z\) and \(z'\) should have the universally quantified type \(Z\) and that
instantiations of \(s\) and \(s'\) should have universally quantified type \(S\).
We thus obtain the complete definition of the type of \(t_2\).
\[
  \abs{\forall}{Z}{\star}{\abs{\forall}{S}{\star}{(Z \to S \to Z \to S \to T)
      \to Z \to S \to T}}
\]
\begin{figure}
  \centering
  \[
    \begin{array}{c}
      \begin{array}{lcl}
        \mathit{NatZ} \cdot T
        & =
        & \abs{\forall}{Z}{\star}{\abs{\forall}{S}{\star}{Z \to S \to T}}
        \\
        \mathit{NatS} \cdot T
        & =
        & \abs{\forall}{Z}{\star}{\abs{\forall}{S}{\star}{(Z \to S \to Z \to S \to T) \to Z \to
          S \to T}}
      \end{array}
      \\ \\
      \infer{
       \Gamma \vdash \mathit{recLRNat} \cdot T\ t_1\ t_2 \tpsynth \mathit{Nat} \to
      \mathit{NatZ} \cdot T \to \mathit{NatS} \cdot T \to T
      }{
      \Gamma \vdash T \tpsynth \star
      \quad \Gamma \vdash t_1 \tpcheck \mathit{NatZ} \cdot T
      \quad \Gamma \vdash t_2 \tpcheck \mathit{NatS} \cdot T
      }
    \end{array}
  \]
  \caption{Typing law for the Lepigre-Raffalli recursion scheme
    on \(\mathit{Nat}\)}
  \label{fig:scheme-rec-lr-nat}
\end{figure}

Now we are able to construct a typing rule for our recursive combinator, shown
in Figure~\ref{fig:scheme-rec-lr-nat}.
From this, we obtain the type for Lepigre-Raffalli naturals:
\[
  \mathit{Nat} = \abs{\forall}{X}{\star}{\mathit{NatZ} \cdot X \to \mathit{NatS}
  \cdot X \to \mathit{NatZ} \cdot X \to \mathit{NatS} \cdot X \to T}
\]

The remainder of this section is structured as follows.
In Section~\ref{sec:lr-nat}, we show that the type of Lepigre-Raffalli naturals
is a supertype of the type of Scott naturals, and derive the primitive recursion
scheme for Scott naturals from the Lepigre-Raffalli recursion scheme we have just discussed.
In Section~\ref{sec:lr-inductive-nat}, we modify the Lepigre-Raffalli encoding
and derive induction for Scott naturals.
Finally, in Section~\ref{sec:lr-gen} we generalize this modification and derive
induction for generic Scott encodings.

\subsection{Primitive recursion for Scott naturals, concretely}
\label{sec:lr-nat}

Our derivation of primitive recursion for Scott naturals is split into three
parts.
In Figure~\ref{fig:lr-nat-1}, we define the type of Lepigre-Raffalli naturals
and the combinator for Lepigre-Raffalli recursion.
In Figure~\ref{fig:lr-nat-2}, we prove that Scott naturals are a subtype of
Lepigre-Raffalli naturals, giving us Lepigre-Raffalli recursion over them.
Finally, in Figure~\ref{fig:lr-nat-3} we implement primitive recursion for Scott
naturals using Lepigre-Raffalli recursion.

\begin{figure}
  \centering
  \small
\begin{verbatim}
import cast.
import mono.
import recType.

import scott/concrete/nat as S .

module lepigre-raffalli/concrete/nat1 .

NatRec ◂ ★ ➔ ★ ➔ ★ ➔ ★
= λ X: ★. λ Z: ★. λ S: ★. Z ➔ S ➔ Z ➔ S ➔ X .

NatZ ◂ ★ ➔ ★
= λ X: ★. ∀ Z: ★. ∀ S: ★. Z ➔ S ➔ X .

NatS ◂ ★ ➔ ★
= λ X: ★. ∀ Z: ★. ∀ S: ★. NatRec ·X ·Z ·S ➔ Z ➔ S ➔ X .

Nat ◂ ★ = ∀ X: ★. NatRec ·X ·(NatZ ·X) ·(NatS ·X) .

recLRNat ◂ ∀ X: ★. NatZ ·X ➔ NatS ·X ➔ Nat ➔ NatZ ·X ➔ NatS ·X ➔ X
= Λ X. λ z. λ s. λ n. n z s .
\end{verbatim}
  \caption{Primitive recursion for Scott naturals (part 1) (\texttt{lepigre-raffalli/concrete/nat1.ced})}
  \label{fig:lr-nat-1}
\end{figure}

\paragraphb{Lepigre-Raffalli naturals.}
In Figure~\ref{fig:lr-nat-1}, we give in Cedille the definition for the type of
Lepigre-Raffalli encodings we previously obtained.
We use a qualified import of the concrete encoding of Scott naturals
(Section~\ref{sec:scott-nat}), so to access a definition from that development
we use ``\(\mathit{S.}\)'' as a prefix (not to be confused with the quantified
type variable \(S\) that appears with no period).
The common shape \(Z \to S \to Z \to S \to X\) has been refactored into the type
family \(\mathit{NatRec}\), used in the definitions of \(\mathit{NatS}\) and
\(\mathit{Nat}\).
We also give the definition for the recursive combinator \(\mathit{recLRNat}\),
which we observe is definitionally equal to \(\mathit{S.caseNat}\)
(Figure~\ref{fig:scott-nat-comp}):

\begin{figure}
  \centering
  \small
\begin{verbatim}
zero ◂ Nat
= Λ X. λ z. λ s. z ·(NatZ ·X) ·(NatS ·X) .

suc ◂ Nat ➔ Nat
= λ n. Λ X. λ z. λ s.
  s ·(NatZ ·X) ·(NatS ·X) (λ z'. λ s'. recLRNat z' s' n) .

rollNat ◂ Cast ·(S.NatFI ·Nat) ·Nat
= intrCast
    -(λ n. n.1 zero suc)
    -(λ n. n.2 ·(λ x: S.NatF ·Nat. { x zero suc ≃ x }) β (λ m. β)) .

toNat ◂ Cast ·S.Nat ·Nat = recLB -rollNat .
\end{verbatim}
  \caption{Primitive recursion for Scott naturals (part 2)
    (\texttt{lepigre-raffalli/concrete/nat1.ced})}
  \label{fig:lr-nat-2}
\end{figure}

\paragraphb{Inclusion of Scott naturals into Lepigre-Raffalli naturals.}
Figure~\ref{fig:lr-nat-2} shows that Scott naturals (\(\mathit{S.Nat}\)) are a
subtype of Lepigre-Raffalli naturals.
This begins with the constructors \(\mathit{zero}\) and \(\mathit{suc}\), whose
definitions come from computation laws we derived for \(\mathit{recLRNat}\).
In particular, for successor the first argument to the bound \(s\) is 
\(\absu{\lambda}{z'}{\absu{\lambda}{s'}{\mathit{recLRNat}\ z'\ s'\ n}}\), the
handle for making recursive calls on the predecessor \(n\) that awaits a
suitable base and step case.
Because the computation laws for \(\mathit{recLRNat}\) are derived from case
distinction, we have that \(\mathit{zero}\) and \(\mathit{suc}\) are
definitionally equal to \(\mathit{S.zero}\) and \(\mathit{S.suc}\).
\begin{verbatim}
_ ◂ { zero ≃ S.zero } = β .
_ ◂ { suc  ≃ S.suc  } = β .
\end{verbatim}

Recall that in Section~\ref{sec:scott-nat-comp}, we saw that the function which
rebuilds Scott naturals with its constructors behaves extensionally as the
identity function.
We can leverage this fact to define \(\mathit{rollNat}\), which establishes an
inclusion of \(\mathit{S.NatFI} \cdot Nat\) into \(\mathit{Nat}\) by rebuilding 
a term of the first type with the constructors \(\mathit{zero}\) and
\(\mathit{suc}\) for Lepigre-Raffalli naturals.
The proof is given not by \(\mathit{wkIndNat}\), but the even weaker
pseudo-induction principle \(\mathit{S.WkIndNatF} \cdot \mathit{Nat}\ n.1\),
\[\abs{\forall}{P}{\mathit{S.NatF} \cdot \mathit{Nat} \to \star}{P\
    (\mathit{S.zeroF} \cdot \mathit{Nat}) \to (\abs{\Pi}{m}{\mathit{Nat}}{P\
      (\mathit{S.sucF}\ m)) \to P\ n.1}}\]
given by \(n.2\).
We saw in Section~\ref{sec:scott-nat} that \(|\mathit{S.zeroF}| =
_{\beta\eta} |\mathit{S.zero}|\) and \(|\mathit{S.sucF}| =_{\beta\eta}
|\mathit{S.suc}|\), so it follows that \(|\mathit{S.zeroF}| =
_{\beta\eta} |\mathit{zero}|\) and \(|\mathit{S.sucF}| =_{\beta\eta} |\mathit{suc}|\).

With \(\mathit{rollNat}\), we have that \(\mathit{Nat}\) is an
\(\mathit{S.NatFI}\)-closed type.
Since \(\mathit{S.Nat} = \mathit{Rec} \cdot \mathit{S.NatFI}\) is a lower bound
of all such types with respect to type inclusion, using \(\mathit{recLB}\)
(Figure~\ref{fig:recType}) we have a cast from Scott naturals to
Lepigre-Raffalli naturals.

\begin{figure}[h]
  \centering
  \small
\begin{verbatim}
recNatZ ◂ ∀ X: ★. X ➔ NatZ ·(S.Nat ➔ X)
= Λ X. λ x. Λ Z. Λ S. λ z. λ s. λ m. x .

recNatS ◂ ∀ X: ★. (S.Nat ➔ X ➔ X) ➔ NatS ·(S.Nat ➔ X)
= Λ X. λ f. Λ Z. Λ S. λ r. λ z. λ s. λ m.
  f m (r z s z s (S.pred m)) .

recNat ◂ ∀ X: ★. X ➔ (S.Nat ➔ X ➔ X) ➔ S.Nat ➔ X
= Λ X. λ x. λ f. λ n.
  recLRNat ·(S.Nat ➔ X) (recNatZ x) (recNatS f)
    (elimCast -toNat n)
    (recNatZ x) (recNatS f) (S.pred n) .

recNatBeta1
◂ ∀ X: ★. ∀ x: X. ∀ f: S.Nat ➔ X ➔ X. { recNat x f S.zero ≃ x }
= Λ X. Λ x. Λ f. β .

recNatBeta2
◂ ∀ X: ★. ∀ x: X. ∀ f: S.Nat ➔ X ➔ X. ∀ n: S.Nat.
  { recNat x f (S.suc n) ≃ f n (recNat x f n) }
= Λ X. Λ x. Λ f. Λ n. β .
\end{verbatim}
  \caption{Primitive recursion for Scott naturals (part 3)
    (\texttt{lepigre-raffalli/concrete/nat1.ced})}
  \label{fig:lr-nat-3}
\end{figure}

\paragraphb{Primitive recursion for Scott naturals.}
The last step in equipping Scott naturals with primitive recursion is
to translate this scheme to the Lepigre-Raffalli recursion scheme.
This is done in three parts, shown in Figure~\ref{fig:lr-nat-3}.
One complication that must be addressed is that Lepigre-Raffalli recursion
reinterprets the predecessor as a function for making recursive calls, but 
primitive recursion enables direct access to the predecessor.
So that it may serve both roles, we duplicate the predecessor.
This means that if \(T\) is the type of results we wish to compute with
primitive recursion, then we use Lepigre-Raffalli recursion to compute a
function of type \(\mathit{S.Nat} \to T\).

If \(t : T\) is the base case for primitive recursion, then
\(\mathit{recNatZ}\ t\) is a constant polymorphic function that ignores its first
three arguments and returns \(t\).
For the step case \(f : \mathit{S.Nat} \to T \to T\), \(\mathit{recNatS}\ f\)
produces a step case for Lepigre-Raffalli recursion, introducing:
\begin{itemize}
\item type variables \(Z\) and \(S\),
  
\item \(\ann{r}{\mathit{NatRec} \cdot (\mathit{S.Nat} \to T) \cdot Z \cdot S}\),
  the handle for making recursive calls,
  
\item \(z\) and \(s\), the base and step cases at the abstracted types \(Z\) and
  \(S\), and
  
\item \(\ann{m}{S.Nat}\), which we intend to be a duplicate of \(r\).
  
\end{itemize}
In the body of \(\mathit{recNatS}\), \(f\) is given access to the predecessor
\(m\) and the result recursively computed with \(r\), where we decrement \(m\)
as we pass through the recursive call.
Finally, \(\mathit{recNat}\) gives us the primitive recursion scheme for Scott
naturals by translating the base and step cases to the Lepigre-Raffalli style
(and duplicating them), coercing the given Scott natural \(n\) to a
Lepigre-Raffalli natural, and giving also the predecessor of \(n\).

With \(\mathit{recNatBeta1}\) and \(\mathit{recNatBeta2}\), we use Cedille to
confirm that the expected computation laws for the primitive recursion scheme
hold by definition.
To give a more complete understanding of how \(\mathit{recNat}\) computes, we
show some intermediate steps involved for the step case for arbitrary
untyped terms \(t_1\), \(t_2\), and \(n\) in Figure~\ref{fig:lr-nat-rec-reduce}.
In the figure, we omit types and erased arguments, and indeed it should be read
as ordinary (full) \(\beta\)-reduction for untyped terms.
In the last two lines of the figure, we switch the direction of reduction.
Altogether, this shows that \(|\mathit{recNat}\ t_1\ t_2\ (\mathit{S.suc}\ n)|\)
and \(|\mathit{t_2}\ n\ (\mathit{recNat}\ t_1\ t_2\ n)|\) are joinable in a
constant number of reduction steps.

\begin{figure}
  \centering
  \[
    \begin{array}{cl}
      &
        \mathit{recNat}\ t_1\ t_2\ (\mathit{S.suc}\ n)
      \\ \\ \rightsquigarrow_\beta^*
      & \mathit{recLRNat}\ (\mathit{recNatZ}\ t_1)\ (\mathit{recNatS}\ t_2)\
        (\mathit{elimCast}\ (\mathit{S.suc}\ n))\ 
      \\ & \quad \quad (\mathit{recNatZ}\ t_1)\ (\mathit{recNatS}\
           t_2)\ (\mathit{S.pred}\ (\mathit{S.suc}\ n))
      \\ \\ \rightsquigarrow_\beta^*
      & \mathit{S.suc}\ n\ (\mathit{recNatZ}\ t_1)\ (\mathit{recNatS}\ t_2)\
        (\mathit{recNatZ}\ t_1)\ (\mathit{recNatS}\ t_2)\ n
      \\ \\ \rightsquigarrow_\beta^*
      & \mathit{recNatS}\ t_2\ n\ (\mathit{recNatZ}\ t_1)\ (\mathit{recNatS}\
        t_2)\ n
      \\ \\ \rightsquigarrow_\beta^*
      & t_2\ n\ (n\ (\mathit{recNatZ}\ t_1)\ (\mathit{recNatS}\
        t_2)\ (\mathit{recNatZ}\ t_1)\ (\mathit{recNatS}\ t_2)\
           (\mathit{S.pred}\ n))
      \\ \\ \leftrsquigarrow_\beta^*
      & t_2\ n\ (\mathit{recLRNat}\ (\mathit{recNatZ}\ t_1)\ (\mathit{recNatS}\
        t_2)\ (\mathit{elimCast}\ n)\
      \\ \\ & \quad \quad \quad (\mathit{recNatZ}\ t_1)\ (\mathit{recNatS}\ t_2)\ (\mathit{S.pred}\ n))
      \\ \\ \leftrsquigarrow_\beta^*
      & t_2\ n\ (\mathit{recNat}\ t_1\ t_2\ n)
    \end{array}
  \]
  \caption{Reduction of \(\mathit{recNat}\) for the successor case}
  \label{fig:lr-nat-rec-reduce}
\end{figure}

\subsection{Induction for Scott naturals, concretely}
\label{sec:lr-inductive-nat}

In this section, we describe modifications to the Lepigre-Raffalli encoding
allowing us to equip Scott naturals with induction.
For completeness, we show the full derivation, but as this development is 
similar to what preceded, we shall only highlight the differences.

\begin{figure}
  \centering
  \small
\begin{verbatim}
import cast.
import mono.
import recType.

import scott/concrete/nat as S .

module lepigre-raffalli/concrete/nat2 .

NatRec ◂ (S.Nat ➔ ★) ➔ S.Nat ➔ ★ ➔ ★ ➔ ★
= λ P: S.Nat ➔ ★. λ x: S.Nat. λ Z: ★. λ S: ★.
  Z ➔ S ➔ Z ➔ S ➔ P x.

NatZ ◂ (S.Nat ➔ ★) ➔ ★
= λ P: S.Nat ➔ ★. ∀ Z: ★. ∀ S: ★. Z ➔ S ➔ P S.zero .

NatS ◂ (S.Nat ➔ ★) ➔ ★
= λ P: S.Nat ➔ ★.
  ∀ Z: ★. ∀ S: ★. Π n: (ι x: S.Nat. NatRec ·P x ·Z ·S).
  Z ➔ S ➔ P (S.suc n.1) .

Nat ◂ ★ = ι x: S.Nat. ∀ P: S.Nat ➔ ★. NatRec ·P x ·(NatZ ·P) ·(NatS ·P) .

recLRNat ◂ ∀ P: S.Nat ➔ ★. NatZ ·P ➔ NatS ·P ➔ Π n: Nat. NatZ ·P ➔ NatS ·P ➔ P n.1
= Λ X. λ z. λ s. λ n. n.2 z s .

zero ◂ Nat
= [ S.zero , Λ X. λ z. λ s. z ·(NatZ ·X) ·(NatS ·X) ] .

suc ◂ Nat ➔ Nat
= λ n.
  [ S.suc n.1
  , Λ P. λ z. λ s.
    s ·(NatZ ·P) ·(NatS ·P) [ n.1 , λ z. λ s. recLRNat z s n ] ] .

rollNat ◂ Cast ·(S.NatFI ·Nat) ·Nat
= intrCast
    -(λ n. n.1 zero suc)
    -(λ n. n.2 ·(λ x: S.NatF ·Nat. { x zero suc ≃ x }) β (λ m. β)) .

toNat ◂ Cast ·S.Nat ·Nat = recLB -rollNat .
\end{verbatim}
  \caption{Induction for Scott naturals (part 1)
    (\texttt{lepigre-raffalli/concrete/nat2.ced})}
  \label{fig:lr-nat-ind-1}
\end{figure}

In Figure~\ref{fig:lr-nat-ind-1}, we begin our modification by making
\(\mathit{NatRec}\) dependent: \(\mathit{NatRec} \cdot P\ n \cdot Z \cdot S\)
is the type of functions taking two arguments each of type \(Z\) and \(S\) and
returning a proof that \(P\) holds for \(n\).
The next and most significant modification is to the type family \(\mathit{NatS}\).
We want that the handle \(n\) for invoking our inductive hypothesis will produce
a proof that \(P\) holds for the predecessor --- which is \(n\) itself!
We can express the dual role of the predecessor as data (\(n\)) and
function (\(\absu{\lambda}{z}{\absu{\lambda}{s}{\mathit{recLRNat}\ z\ s\ n}}\))
with dependent intersections, which recovers the view of the predecessor as a
Scott natural.
This duality is echoed in \(\mathit{Nat}\), which is defined
with dependent intersections as the type of Scott naturals \(x\) which, for an
arbitrary predicate \(P\), will act as a function taking two base
\((\mathit{NatZ} \cdot P)\) and step (\(\mathit{NatS} \cdot P\)) cases and
produce a proof that \(P\) holds of \(x\).

By its definition, the type \(\mathit{Nat}\) of the modified Lepigre-Raffalli
encoding is a subtype of Scott encodings.
Using the same approach as in Section~\ref{sec:lr-nat}, we can show a type
inclusion in the other direction.
We define the constructors \(\mathit{zero}\) and \(\mathit{suc}\), noting that
the innermost intersection introduction in \(\mathit{suc}\) again shows the dual
role of the predecessor, then show that \(\mathit{Nat}\) is \(\mathit{S.NatFI}\)
closed by proving that rebuilding a term of type \(\mathit{S.NatFI} \cdot
\mathit{Nat}\) with these constructors reproduces the original term at type
\(\mathit{Nat}\).

\begin{figure}
  \centering
  \small
\begin{verbatim}
indNatZ ◂ ∀ P: S.Nat ➔ ★. P S.zero ➔ NatZ ·P
= Λ X. λ x. Λ Z. Λ S. λ z. λ s. x .

indNatS ◂ ∀ P: S.Nat ➔ ★. (Π n: S.Nat. P n ➔ P (S.suc n)) ➔ NatS ·P
= Λ P. λ f. Λ Z. Λ S. λ r. λ z. λ s. f r.1 (r.2 z s z s) .

indNat
◂ ∀ P: S.Nat ➔ ★. P S.zero ➔ (Π m: S.Nat. P m ➔ P (S.suc m)) ➔
  Π n: S.Nat. P n
= Λ P. λ x. λ f. λ n.
  recLRNat ·P (indNatZ x) (indNatS f) (elimCast -toNat n)
    (indNatZ x) (indNatS f) .
\end{verbatim}
  \caption{Induction for Scott naturals (part 2)
    (\texttt{lepigre-raffalli/concrete/nat2.ced})}
  \label{fig:lr-nat-ind-2}
\end{figure}

Finally, we derive true induction for Scott naturals in
Figure~\ref{fig:lr-nat-ind-2}.
Playing the same roles as \(\mathit{recNatZ}\) and \(\mathit{recNatS}\)
(Figure~\ref{fig:lr-nat-3}), \(\mathit{indNatZ}\) and \(\mathit{indNatS}\)
convert the usual base and step cases of an inductive proof into forms suitable
for Lepigre-Raffalli-style induction.
In particular, note that in \(\mathit{indNatS}\) the bound \(r\) plays the role
of both predecessor (\(r.1\)) and handle for the inductive hypothesis (\(r.2\)).

\subsection{Induction for Scott-encoded data, generically}
\label{sec:lr-gen}

In this section, we formulate a \emph{generic} variant of the Lepigre-Raffalli
encoding and use this to derive induction for our generic Scott encoding.
To explain the generic encoding, we start by deriving a Lepigre-Raffalli recursion
scheme from the observation that, for the solution \(\mathit{caseD}\) for the
case-distinction scheme given in Figure~\ref{fig:scott-props-case}, \(|t|
=_{\beta\eta} |\absu{\lambda}{a}{\mathit{caseD}\ a\ t}|\) for all \(t\).
This allows us to introduce recursion into the computation law for case
distinction.

Let \(D\) be the type of Scott-encoded data whose signature is \(F\), let
\(t : \mathit{AlgCase} \cdot D \cdot T\) for some \(T\), and let \(t' : F \cdot
D\).
For the sake of exposition, we will for now assume that \(F\) is a functor.
Using the functor identity law, we can form the following equation from the
computation law of case distinction.
\[|\mathit{caseD}\ t\ (\mathit{inD}\ t')| \rightsquigarrow |t\ t'|
  =_{\mathit{FmapId}} |t\ (\mathit{fmap}\
  (\absu{\lambda}{x}{\absu{\lambda}{a}{\mathit{caseD}\ a\ x}})\ t')|\]

Renaming \(\mathit{caseD}\) to \(\mathit{recLRD}\), we can read the above as a
computational characterization of the generic Lepigre-Raffalli recursion scheme
and use this to derive a suitable typing law.
We desire that \(t\) should be a function which will be able to accept itself as
a second argument in order to make recursive calls on predecessors.
\[|\mathit{recLRD}\ t\ (\mathit{inD}\ t')\ t| \rightsquigarrow |t\ t'\ t|
  =_{\mathit{FmapId}} |t\ (\mathit{fmap}\
  (\absu{\lambda}{x}{\absu{\lambda}{a}{\mathit{recLRD}\ a\ x}})\ t')\ t|\]
Under this interpretation, we are able to give the following type for terms
\(t\) used in generic Lepigre-Raffalli recursion to compute a value of type \(T\).
\[\abs{\forall}{Y}{\star}{F \cdot (Y \to Y \to T) \to Y \to T}\]
Compare this to Lepigre-Raffalli recursion for naturals (Figure~\ref{fig:scheme-rec-lr-nat}).
\begin{itemize}
\item Quantification over \(Y\) replaces quantification over \(Z\) and \(S\) for
  the base and step case of naturals. Here, we intend that \(Y\) will be impredicatively
  instantiated with the type \(\abs{\forall}{Y}{\star}{F \cdot (Y \to Y \to T)
    \to Y \to T}\) again.
  
\item The single handle for making a recursive call on the natural number
  predecessor becomes an \(F\)-collection of handles of type \(Y \to Y \to T\)
  for making recursive calls, obtained from the \(F\)-collection of \(D\) predecessors.
\end{itemize}
This leads to the typing law for \(\mathit{recLRD}\) listed in
Figure~\ref{fig:scheme-lr-gen}.
For the computation law, we desire it be precisely the same as that for
\(\mathit{caseD}\) --- meaning that we will not need to require \(F\) to be a
functor for Lepigre-Raffalli recursion (or induction) over datatype \(D\).

\begin{figure}[h]
  \centering
  \[
    \begin{array}{c}
      \infer{
       \Gamma \vdash \mathit{recLRD} \cdot T\ t \tpsynth D \to (\abs{\forall}{Y}{\star}{F
      \cdot (Y \to Y \to T) \to Y \to T}) \to T
      }{
      \Gamma \vdash T \tpsynth \star
      \quad \Gamma \vdash t \tpcheck \abs{\forall}{Y}{\star}{F \cdot (Y \to Y
      \to T) \to Y \to T}
      }
    \end{array}
  \] 
  \caption{Typing law for the generic Lepigre-Raffalli recursion scheme}
  \label{fig:scheme-lr-gen}
\end{figure}

Unlike the other schemes we have considered, we are unaware of any standard
criteria for characterizing Lepigre-Raffalli recursion, and the development of a
categorical semantics for this scheme is beyond the scope of this paper.
Instead, and under the assumption that \(F\) is a functor, we will show that
from this scheme and the related induction principle we can give efficient and
provably unique solutions to the iteration and primitive recursion schemes.

The derivations of this section are separated into three parts.
In Figure~\ref{fig:lr-generic-1}, we give the type of our generic variant of the
Lepigre-Raffalli encoding for a monotone signature.
In Figure~\ref{fig:lr-generic-2}, we derive Lepigre-Raffalli induction for
Scott encodings.
Finally, in Figure~\ref{fig:lr-generic-3} we assume the stronger condition that
the datatype signature is a functor and derive a standard induction principle
for Scott encodings.

\begin{figure}[h]
  \centering
  \small
\begin{verbatim}
import cast .
import mono .
import recType .

module scott-rec/generic/encoding (F: ★ ➔ ★) {mono: Mono ·F} .

import scott/generic/encoding as S ·F -mono .

DRec ◂ (S.D ➔ ★) ➔ S.D ➔ ★ ➔ ★
= λ P: S.D ➔ ★. λ x: S.D. λ Y: ★. Y ➔ Y ➔ P x .

inDRec ◂ ∀ P: S.D ➔ ★. ∀ Y: ★. F ·(ι x: S.D. DRec ·P x ·Y) ➔ S.D
= Λ P. Λ Y. λ xs.
  [c ◂ Cast ·(ι x: S.D. DRec ·P x ·Y) ·S.D
   = intrCast -(λ x. x.1) -(λ x. β)]
- S.inD (elimCast -(mono c) xs) .

PrfAlgLR ◂ (S.D ➔ ★) ➔ ★
= λ P: S.D ➔ ★.
  ∀ Y: ★. Π xs: F ·(ι x: S.D. DRec ·P x ·Y). Y ➔ P (inDRec xs) .

D ◂ ★ = ι x: S.D. ∀ P: S.D ➔ ★. DRec ·P x ·(PrfAlgLR ·P) .

recLRD ◂ ∀ P: S.D ➔ ★. PrfAlgLR ·P ➔ Π x: D. PrfAlgLR ·P ➔ P x.1
= Λ P. λ a. λ x. x.2 a .
\end{verbatim}
  \caption{Generic Lepigre-Raffalli-style induction for Scott encodings (part 1)
    (\texttt{lepigre-raffalli/generic/encoding.ced})}
  \label{fig:lr-generic-1}
\end{figure}

\paragraphb{Generic Lepigre-Raffalli encoding.}
In Figure~\ref{fig:lr-generic-1}, we begin by importing the generic Scott
encoding, using prefix ``\(\mathit{S.}\)'' to access definitions in that
module.
Type family \(\mathit{DRec}\) gives the shape of the types of handles for
invoking an inductive hypothesis for a particular term \(x\) of type
\(\mathit{S.D}\) and predicate \(P : \mathit{S.D} \to \star\)
(compare this to \(\mathit{NatRec}\) in Figure~\ref{fig:lr-nat-ind-1}).
Next, for all predicates \(P\) over Scott encodings \(\mathit{S.D}\),
\(\mathit{PrfAlgLR} \cdot P\) is the type of Lepigre-Raffalli-style proof
algebras for \(P\), corresponding to \(\mathit{NatZ} \cdot P\) and
\(\mathit{NatS} \cdot P\) together in Figure~\ref{fig:lr-nat-ind-1}.
A term of this type is polymorphic in a type \(Y\) (which we interpret as
standing in for \(\mathit{PrfAlgLR} \cdot P\) itself) and takes an
\(F\)-collection \(\mathit{xs}\) of terms which, with the use of dependent
intersection types, are each interpreted both as a predecessor and a handle for
accessing the inductive hypothesis for that predecessor.
The argument of type \(Y\) is the step case to be given to these handles.
To state that the result should be a proof that \(P\) holds for the value
constructed from these predecessors, we need a variant constructor
\(\mathit{inDRec}\) that first casts the predecessors to the type
\(\mathit{S.D}\) using monotonicity of \(F\).
Note that we have \(|\mathit{inDRec}| =_{\beta\eta} |\mathit{S.inD}|\).

Type \(D\) is our generic Lepigre-Raffalli encoding, again defined with
dependent intersection as the type for Scott encodings \(x\) that also act as
functions that, for all properties \(P\), produce a proof that \(P\) holds of
\(x\) when given two Lepigre-Raffalli-style proof algebras for \(P\).
Finally, \(\mathit{recLRD}\) is the Lepigre-Raffalli-style induction principle
restricted to those Scott encodings which have type \(D\).
To obtain true Lepigre-Raffalli induction, it remains to show that \emph{every}
term of type \(\mathit{S.D}\) has type \(D\).

\begin{figure}[h]
  \centering
  \small
\begin{verbatim}
fromD ◂ Cast ·D ·S.D
= intrCast -(λ x. x.1) -(λ x. β) .

instDRec ◂ ∀ P: S.D ➔ ★. Cast ·D ·(ι x: S.D. DRec ·P x ·(PrfAlgLR ·P))
= Λ P. intrCast -(λ x. [ x.1 , λ a. recLRD a x ]) -(λ x. β) .

inD ◂ F ·D ➔ D
= λ xs.
  [ S.inD (elimCast -(mono fromD) xs)
  , Λ P. λ a.
    a ·(PrfAlgLR ·P) (elimCast -(mono (instDRec ·P)) xs) ].

rollD ◂ Cast ·(S.DFI ·D) ·D
= intrCast
    -(λ x. x.1 inD)
    -(λ x. x.2 ·(λ x: S.DF ·D. { x inD ≃ x }) (λ xs. β)) .

toD ◂ Cast ·S.D ·D = recLB -rollD .

indLRD ◂ ∀ P: S.D ➔ ★. PrfAlgLR ·P ➔ Π x: S.D. PrfAlgLR ·P ➔ P x
= Λ P. λ a. λ x. recLRD a (elimCast -toD x) .
\end{verbatim}
  \caption{Generic Lepigre-Raffalli-style induction for Scott encodings (part 2)
    (\texttt{lepigre-raffalli/generic/encoding.ced})}
  \label{fig:lr-generic-2}
\end{figure}

\paragraphb{Lepigre-Raffalli induction.}
In Figure~\ref{fig:lr-generic-2}, we begin the process of demonstrating an
inclusion of the type \(\mathit{S.D}\) into \(D\) by defining the constructor
\(\mathit{inD}\) for the generic Lepigre-Raffalli encoding.
This definition crucially uses the auxiliary function \(\mathit{instDRec}\) to
produce the two views of a given predecessor \(\ann{x}{D}\) as subdata (\(x.1\))
and as a handle for the inductive hypothesis associated to that predecessor
(\(\absu{\lambda}{a}{\mathit{recLRD}\ a\ x}\)) for a given predicate \(P\);
compare this to the definition of \(\mathit{suc}\) in Figure~\ref{fig:lr-nat-ind-1}.
As expected, we have that \(\mathit{inD}\) and \(\mathit{S.inD}\) (and also
\(\mathit{S.inDF}\)) are definitionally equal.

With the constructor defined, we show with \(\mathit{rollD}\) that \(D\) is an
\(\mathit{S.DFI}\)-closed type (\(\mathit{S.DFI}\) is defined in
Figure~\ref{fig:scott-generic-3}) by giving a proof that rebuilding a term of
type \(\mathit{S.DFI} \cdot D\) with constructor \(\mathit{inD}\) reproduces the
same term.
As \(\mathit{S.D} = \mathit{Rec} \cdot \mathit{S.DFI}\) is a lower bound of all
such types, we thus obtain a proof \(\mathit{toD}\) of an inclusion of the type
\(\mathit{S.D}\) into \(D\) using \(\mathit{recLB}\) (Figure~\ref{fig:recType}).
With this, we have Lepigre-Raffalli-style induction as \(\mathit{indLRD}\).

\begin{figure}
  \centering
  \small
\begin{verbatim}
import functor .
import cast .
import mono .
import utils .

module lepigre-raffalli/generic/induction
  (F: ★ ➔ ★) (fmap: Fmap ·F)
  {fmapId : FmapId ·F fmap} {fmapCompose: FmapCompose ·F fmap} .

import functorThms ·F fmap -fmapId -fmapCompose .

import scott/generic/encoding as S ·F -monoFunctor .
import lepigre-raffalli/generic/encoding ·F -monoFunctor .

import data-char/primrec-typing ·F .
import data-char/primrec ·F fmap -fmapId -fmapCompose ·S.D S.inD .

applyDRec ◂ ∀ P: S.D ➔ ★. ∀ Y: ★. Y ➔ (ι x: S.D. DRec ·P x ·Y) ➔ Sigma ·S.D ·P
= Λ P. Λ Y. λ y. λ x. mksigma x.1 (x.2 y y) .

fromPrfAlgRec
◂ ∀ P: S.D ➔ ★. PrfAlgRec ·P ➔ PrfAlgLR ·P
= Λ P. λ a. Λ Y. λ xs. λ y.
  ρ ς (fmapId ·(ι x: S.D. DRec ·P x ·Y) ·S.D
         (λ x. proj1 (applyDRec y x)) (λ x. β) xs)
    @x.(P (S.inD x))
- ρ ς (fmapCompose (proj1 ·S.D ·P) (applyDRec ·P y) xs)
    @x.(P (S.inD x))
- a (fmap (applyDRec ·P y) xs) .

indD ◂ ∀ P: S.D ➔ ★. PrfAlgRec ·P ➔ Π x: S.D. P x
= Λ P. λ a. λ x. indLRD (fromPrfAlgRec a) x (fromPrfAlgRec a) .
\end{verbatim}
  \caption{Generic induction for Scott encodings
    (\texttt{lepigre-raffalli/generic/induction.ced})}
  \label{fig:lr-generic-3}
\end{figure}

\paragraphb{Standard induction for Scott encodings.}
To derive the usual induction principle for Scott encodings using
Lepigre-Raffalli induction, we change module contexts in
Figure~\ref{fig:lr-generic-3} and now assume that \(F\) is a functor (see
Section~\ref{sec:functor-sigma} for the definitions of the functor laws).
As we did for the derivation of induction for Scott naturals, our
approach here is to convert a proof algebra of the form for standard induction
\[\mathit{PrfAlgRec} \cdot \mathit{S.D}\ \mathit{S.inD} \cdot P =
  \abs{\Pi}{xs}{F \cdot (\mathit{Sigma} \cdot \mathit{S.D} \cdot P)}{P\
    (\mathit{S.inD}\ (\mathit{fmap}\ \mathit{proj1}\ \mathit{xs}))}\]
into one of the form for Lepigre-Raffalli induction.

Function \(\mathit{fromPrfAlgRec}\) gives the conversion of proof algebras, and
its body is best read bottom-up.
The bound \(\mathit{xs}\) is an \(F\)-collection of predecessors playing dual
roles as subdata and handles for inductive hypotheses that require proof
algebras at the universally quantified type \(Y\), and the bound \(\ann{y}{Y}\)
is ``self-handle'' of the Lepigre-Raffalli proof algebra we are defining that
gives us access to those inductive hypotheses.
With \(\mathit{applyDRec}\) we separate these two roles, producing a dependent
pair of type \(\mathit{Sigma} \cdot \mathit{S.D} \cdot P\) as expected for the
usual formulation of induction.

The type of the final line of \(\mathit{fromPrfAlgRec}\) is:
\[P\ (\mathit{S.inD}\ (\mathit{fmap}\ \mathit{proj1}\ (\mathit{fmap}\
  (\mathit{applyDRec}\ y)\ \mathit{xs})))
\]
We use the functor composition law to fuse the two 
mappings of \(\mathit{proj1}\) and \(\mathit{applyDRec}\ y\).
Then, observing that this results in the mapping of a function that is
definitionally equal \(\absu{\lambda}{x}{x}\) (by the computation law of
\(\mathit{proj1}\), Figure~\ref{fig:sigma}), we use the functor identity law to
remove the mapping completely.
The resulting type is convertible with the expected type \(P\ (\mathit{inDRec}\
\mathit{xs})\) (since \(|\mathit{inDRec}| =_{\beta\eta}\ |\mathit{S.inD}|\)).
With the conversion complete, in \(\mathit{indD}\) we equip Scott encodings with
the standard induction principle by invoking Lepigre-Raffalli induction on
two copies of the converted proof algebra.

\subsubsection{Computational and extensional character.}
\label{sec:lr-gen-char}

With the standard induction principle derived for Scott encodings, we can now
show that Scott-encoded datatypes enjoy the same characterization, up to
propositional equality, as do Parigot-encoded datatypes.
Concerning the efficiency of solutions to recursion schemes, we have already
seen that Scott encodings offer a superior simulation of case distinction.
We now consider primitive recursion and iteration.
For the module listed in Figures~\ref{fig:lr-props} and \ref{fig:lr-init}, the
generic Scott encoding is imported without qualification, and
``\(\mathit{LR.}\)'' qualifies the definitions imported from generic
Lepigre-Raffalli encoding.

\begin{figure}
  \centering
  \small
\begin{verbatim}
import functor .
import cast .
import recType .
import utils .

module lepigre-raffalli/generic/propos
  (F: ★ ➔ ★) (fmap: Fmap ·F)
  {fmapId: FmapId ·F fmap} {fmapCompose: FmapCompose ·F fmap} .

import functorThms ·F fmap -fmapId -fmapCompose .
import scott/generic/encoding ·F -monoFunctor .
import lepigre-raffalli/generic/encoding as LR ·F -monoFunctor .
import lepigre-raffalli/generic/induction ·F fmap -fmapId -fmapCompose .

import data-char/primrec-typing ·F .
import data-char/primrec ·F fmap -fmapId -fmapCompose ·D inD .

recD ◂ PrimRec ·D
= Λ X. λ a. indD ·(λ x: D. X) a .

recDBeta ◂ PrimRecBeta recD
= Λ X. Λ a. Λ xs. β .

recDEta ◂ PrimRecEta recD = <..>
\end{verbatim}
  \caption{Characterization of \(\mathit{recD}\) (\texttt{lepigre-raffalli/generic/props.ced})}
  \label{fig:lr-props}
\end{figure}

\paragraphb{Primitive recursion.}
The solution \(\mathit{recD}\) in Figure~\ref{fig:lr-props} for the combinator
for primitive recursion is a non-dependent instance of standard induction.
As we saw for primitive recursion on Scott naturals in Section~\ref{sec:lr-nat},
the computation law for generic primitive recursion, proved by
\(\mathit{recDBeta}\), does not hold by reduction in the operational semantics
alone, but \emph{does} hold up to joinability using a constant number of
\(\beta\)-reduction steps. 
We illustrate for arbitrary (untyped) terms \(t\) and \(t'\).

\[
  \begin{array}{cl}
    & \mathit{recD}\ t\ (\mathit{inD}\ t')
    \\ \rightsquigarrow^*_\beta
    & \mathit{indLRD}\ (\mathit{fromPrfAlgRec}\ t)\ (\mathit{inD}\ t')\ (\mathit{fromPrfAlgRec}\ t)
    \\ \rightsquigarrow^*_\beta
    & \mathit{fromPrfAlgRec}\ t\ t'\ (\mathit{fromPrfAlgRec}\ t)
    \\ \rightsquigarrow^*_\beta
    & t\ (\mathit{fmap}\ (\mathit{applyDRec}\ (\mathit{fromPrfAlgRec}\ t))\ t')
    \\ \rightsquigarrow^*_\beta
    & t\ (\mathit{fmap}\ (\absu{\lambda}{x}{\mathit{mksigma}\ x\ (x\
      (\mathit{fromPrfAlgRec}\ t)\ (\mathit{fromPrfAlgRec}\ t))})\ t')
    \\ \leftrsquigarrow^*_\beta
    & t\ (\mathit{fmap}\ (\absu{\lambda}{x}{\mathit{mksigma}\ x\ (\mathit{recD}\
      t)})\ t')
    \\ \leftrsquigarrow^*_\beta
    & t\ (\mathit{fmap}\ (\mathit{fork}\ \mathit{id}\ (\mathit{recD}\ a))\ t')
  \end{array}
\]
The extensionality law \(\mathit{recDEta}\) (proof omitted) follows from
induction.

\begin{figure}[h]
  \centering
  \small
\begin{verbatim}
import data-char/iter-typing ·F .
import data-char/iter ·F fmap ·D inD .

lrFromAlg ◂ ∀ X: ★. Alg ·X ➔ LR.PrfAlgLR ·(λ x: D. X)
= Λ X. λ a. Λ Y. λ xs. λ y.
  a (fmap ·(ι x: D. LR.DRec ·(λ x: D. X) x ·Y) ·X (λ x. x.2 y y) xs) .

foldD ◂ Iter ·D
= Λ X. λ a. λ x. LR.indLRD ·(λ x: D. X) (lrFromAlg a) x (lrFromAlg a) .

foldDBeta ◂ IterBeta foldD
= Λ X. Λ a. Λ xs. β .

algHomLemma
◂ ∀ X: ★. ∀ a: Alg ·X. ∀ h: D ➔ X. AlgHom ·X a h ➔ AlgRecHom ·X (fromAlg a) h
= <..>

foldDEta ◂ IterEta foldD = <..>
\end{verbatim}
  \caption{Characterization of \(\mathit{foldD}\) (\texttt{lepigre-raffalli/generic/props.ced})}
  \label{fig:lr-init}
\end{figure}

\paragraphb{Iteration.}
While primitive recursion can be used to simulate iteration, we saw in
Section~\ref{sec:parigot-props} that this results in a definition of
\(\mathit{foldD}\) that obeys the expected computation law only up to the
functor laws.
We now show in Figure~\ref{fig:lr-init} that with Lepigre-Raffalli recursion, we
can do better and obtain a solution obeying the computation law by definitional
equality alone.
The first definition, \(\mathit{lrFromAlg}\), converts a function of type
\(\mathit{Alg} \cdot X\) (used in iteration) to a function for use in
Lepigre-Raffalli recursion.
It maps over the \(F\)-collection of dual-role predecessors, applying
each to two copies of the handle \(y\) specifying the next step of recursion.
For the solution \(\mathit{foldD}\) for iteration, we use Lepigre-Raffalli
recursion on two copies of the converted \(a : \mathit{Alg} \cdot X\).

As was the case for \(\mathit{recD}\), with \(\mathit{foldD}\) the left-hand and
right-hand sides of the computation law for iteration are joinable using a
constant number of full \(\beta\)-reductions.
This means that the proof \(\mathit{foldDBeta}\) holds by definitional equality
alone.
The proof of the extensionality law, \(\mathit{foldDEta}\), follows as a
consequence of \(\mathit{recDEta}\) and a lemma that any function \(h : D \to
X\) which satisfies the computation law for iteration with respect to some \(a :
\mathit{Alg} \cdot X\) also satisfies the computation law for primitive
recursion with respect to \(\mathit{fromAlg}\ a\)
(Figure~\ref{fig:data-char-primrec}).

\section{Scott encoding vs. Parigot encoding.}
Satisfaction of the computation and extensionality laws of iteration, the
destructor, case distinction, and primitive recursion by both the Scott and
Parigot encoding establishes that both are adequate representations of inductive
datatypes in Cedille.
However, there are compelling reasons for preferring the Scott encoding.
First, and as discussed earlier, the Parigot encoding suffers from significant
space overhead: Parigot naturals are represented in exponential space compared
to the linear-space Scott naturals.
Second, efficiency of the destructor for Scott encodings does not depend on the
choice of evaluation strategy, and the computation law for iteration is satisfied
by definitional equality.
Finally, not all monotone type schemes in Cedille are functors.
For such a type scheme \(F\), we cannot use \(F\) as a datatype signature for
the generic Parigot encoding.
However, we \emph{can} use \(F\) as a signature for the generic Scott encoding, and
although we cannot obtain the (generic) standard induction principle for the
resulting datatype, we still may still use Lepigre-Raffalli induction.

With our final example, we demonstrate that this last concern is not
hypothetical: the set of monotone type schemes is a \emph{strict} superset of
the set of functorial type schemes.
Consider a datatype for infinitely branching trees, which in Haskell would be
defined as:
\begin{verbatim}
data ITree = Leaf | Node (Nat -> ITree)
\end{verbatim}
The signature of \(\mathit{ITree}\) is positive.
However, because the recursive occurrence of \(\mathit{ITree}\) in the node
constructor's argument type occurs within an arrow, we cannot prove that the
functor identity law as formulated in Figure~\ref{fig:functor} holds.

To see why this is the case, assume we have \(f : \mathit{Nat} \to S\) and \(g :
S \to T\).
Lifting \(g\) over the type scheme \(\abs{\lambda}{X}{\star}{\mathit{Nat} \to
  X}\) and applying the result to \(f\), we obtain the expression:
\[\absu{\lambda}{x}{g\ (f\ x) : \mathit{Nat} \to T}\]
To prove the functor identity law, we must show that the expression above is
propositionally equal to \(f\) while assuming only that \(g\) behaves
extensionally like the identity function on terms of type \(S\).
Since Cedille's equality is intensional, we cannot conclude from this last
assumption that \(g\) is itself equal to \(\absu{\lambda}{x}{x}\).

In fact, in the presence of the \(\delta\) axiom we can \emph{prove} that the
covariant mapping for function types, and thus the mapping for the signature of
infinitary trees, does not satisfy the functor laws.
Recall that \(\delta\ \mhyph\ t\) (Figure~\ref{fig:cdle-equality}) can be
checked against any type if \(t\) proves an absurd equation, and that the
Cedille implementation  uses the B\"ohm-out algorithm to determine when an
equation is absurd.
The counter-example proceeds by picking closed instances of \(f\) and \(g\) such
that \(g\) behaves extensionally as the identity function, but nonetheless \(f\)
and \(\absu{\lambda}{x}{g\ (f\ x)}\) are B\"ohm-separable.

\paragraphb{Not every monotone scheme is functorial.}
\begin{figure}
  \centering
  \[
    \begin{array}{c}
      \infer{
       \Gamma \vdash \mathit{Sum} \cdot S \cdot T \tpsynth \star
      }{
      \Gamma \vdash S \tpsynth \star
      \quad \Gamma \vdash T \tpsynth \star
      }
      \\ \\
      \begin{array}{cc}
        \infer{
         \Gamma \vdash \mathit{in1} \cdot S \cdot T\ s \tpsynth \mathit{Sum}
        \cdot S \cdot T
        }{
        \Gamma \vdash \mathit{Sum} \cdot S \cdot T \tpsynth \star
        \quad \Gamma \vdash s \tpcheck S
        }
        &
        \infer{
         \Gamma \vdash \mathit{in2} \cdot S \cdot T\ t \tpsynth \mathit{Sum}
        \cdot S \cdot T
        }{
        \Gamma \vdash \mathit{Sum} \cdot S \cdot T \tpsynth \star
        \quad \Gamma \vdash t \tpcheck T
        }
      \end{array}
      \\ \\
      \infer{
       \Gamma \vdash \mathit{indsum}\ s \cdot P\ t_1\ t_2 \tpsynth P\ s
      }{
      \begin{array}{c}
      \Gamma \vdash s \tpsynth \mathit{Sum} \cdot S \cdot T
        \quad \Gamma \vdash P \tpsynth \mathit{Sum} \cdot S \cdot T \to \star
        \\ \Gamma \vdash t_1 \tpcheck \abs{\Pi}{x}{S}{P\ (\mathit{in1}\ x)}
        \quad \Gamma \vdash t_2 \tpcheck \abs{\Pi}{x}{T}{P\ (\mathit{in2}\ x)}
      \end{array}
      }
      \\ \\
      \begin{array}{lcl}
        |\mathit{indsum}\ (\mathit{in1}\ s)\ t_1\ t_2|
        & =_{\beta\eta}
        & |t_1\ s|
        \\ |\mathit{indsum}\ (\mathit{in2}\ t)\ t_1\ t_2|
        & =_{\beta\eta}
        & |t_2\ t|
      \end{array}
    \end{array}
  \]
  \caption{\(\mathit{Sum}\), axiomatically (\texttt{utils/sum.ced})}
  \label{fig:sum}
\end{figure}

\begin{figure}
  \centering
  \small
\begin{verbatim}
module signatures/itree .

import functor .
import cast .
import mono .
import utils .

import scott/concrete/nat .

ITreeF ◂ ★ ➔ ★ = λ X: ★. Sum ·Unit ·(Nat ➔ X) .

itreeFmap ◂ Fmap ·ITreeF
= Λ X. Λ Y. λ f. λ t.
  indsum t ·(λ _: ITreeF ·X. ITreeF ·Y) (λ u. in1 u) (λ x. in2 (λ n. f (x n))) .

monoITreeF ◂ Mono ·ITreeF
= Λ X. Λ Y. λ c.
  intrCast
    -(itreeFmap (elimCast -c))
    -(λ t. indsum t ·(λ x: ITreeF ·X. { itreeFmap (elimCast -c) x ≃ x })
             (λ u. β) (λ x. β)) .

t1 ◂ ITreeF ·Nat = in2 (λ x. x) .
t2 ◂ ITreeF ·Nat = itreeFmap (caseNat zero suc) t1 .

itreeFmapIdAbsurd ◂ FmapId ·ITreeF itreeFmap ➔ ∀ X: ★. X
= λ fid. Λ X.
  [pf ◂ { t2 ≃ t1 } = fid (caseNat zero suc) reflectNat t1]
- δ - pf .
\end{verbatim}
  \caption{Counter-example: a monotone type scheme which is not a functor
    (\texttt{signatures/itree.ced})}
  \label{fig:itree}
\end{figure}

In Figure~\ref{fig:sum}, we give an axiomatic summary of derivable sum
(coproduct) types with induction in Cedille.
The constructors are \(\mathit{in1}\) and \(\mathit{in2}\), and the induction
principle is \(\mathit{indsum}\) and follows the expected computation laws.
In Figure~\ref{fig:itree} we define \(\mathit{ITreeF}\), the signature for
infinitely branching trees, and \(\mathit{itreeFmap}\), its corresponding
mapping operation (here \(\mathit{Unit}\) is the single-element type).
Following this, we prove with \(\mathit{monoITreeF}\) that this type scheme is
monotonic.
The function from \(\mathit{ITreeF} \cdot X\) to \(\mathit{ITreeF} \cdot Y\)
that realizes the type inclusion is \(\mathit{itreeFmap}\ (\mathit{elimCast}\
\mhyph c)\), and the proof that it behaves as the identity
function follows by induction on \(\mathit{Sum}\).
Note that for the node case in particular, we know
that the function we are mapping (\(\mathit{elimCast}\ \mhyph c\)) is
\emph{definitionally} equal to the identity function.

For the counterexample \(\mathit{itreeFmapIdAbsurd}\), we
consider two terms of type \(\mathit{ITreeF} \cdot Nat\): the first, \(t_1\), is
defined using the second coproduct injection on the identity function for
\(\mathit{Nat}\), and the second, \(t_2\), is the result of mapping
\(\mathit{caseNat}\ \mathit{zero}\ \mathit{suc}\) over \(t_1\).
From \(\mathit{reflectNat}\) (Section~\ref{sec:scott-nat-comp}), we know that
\(\mathit{caseNat}\ \mathit{zero}\ \mathit{suc}\) behaves as the identity
function on \(\mathit{Nat}\).
If \(\mathit{itreeFmap}\) satisfied the functor identity law, we would thus
obtain a proof that \(t_1\) and \(t_2\) are propositionally equal.
However, the erasures of \(t_1\) and \(t_2\) are closed untyped terms which are
\(\beta\eta\)-inequivalent, so we use \(\delta\) to derive a contradiction.

This establishes that our generic derivation of Scott-encoded data, together with
Lepigre-Raffalli-style recursion and induction, allow for programming with a
strictly larger set of inductive datatypes than does our generic Parigot
encoding.
Though the generic formulation of the standard induction principle is not
derivable without assuming functoriality of the datatype signature, we observe
that in some cases that Lepigre-Raffalli induction can be used to give an
ordinary (datatype-specific) induction principle.
For example, for \(\mathit{ITree}\) we can derive: 
{\small%
\begin{verbatim}
indITree
◂ ∀ P: ITree ➔ ★. P leaf ➔
  (Π f: Nat ➔ ITree. (Π n: Nat. P (f n)) ➔ P (node f)) ➔ Π x: ITree. P x
= <..>
\end{verbatim}%
}%
\noindent where \(\mathit{leaf}\) and \(\mathit{node}\) are the constructors for
\(\mathit{ITree}\) (see \texttt{lepigre-raffalli/examples/itree.ced} in the code
repository).

\section{Related Work}
\label{sec:related}
\paragraphb{Monotone inductive types.}
\citet{matthes02} employs
Tarski's fixpoint theorem to motivate the construction of a typed lambda calculus
with monotone recursive types.
The gap between this order-theoretic result and type theory is bridged using
category theory, with evidence that a type scheme is monotonic corresponding to
the morphism-mapping rule of a functor. 
Matthes shows that as long as the reduction rule eliminating an
\textit{unroll} of a \textit{roll} incorporates the monotonicity witness in a
certain way, strong normalization of System~F is preserved by extension
with monotone isorecursive types.
Otherwise, he shows a counterexample to normalization.

In contrast, we establish that type inclusions (zero-cost casts) induce a
preorder \emph{within} the type theory of Cedille, and carry out a modification
of Tarski's order-theoretic result directly within it.
Evidence of monotonicity is given by an operation lifting type inclusions, not
arbitrary functions, over a type scheme.
As mentioned in the introduction, deriving monotone recursive types within the
type theory of Cedille has the benefit of guaranteeing that they enjoy precisely
the same meta-theoretic properties as enjoyed by Cedille itself -- no additional
work is required.

\paragraphb{Impredicative encodings and datatype recursion schemes.}
Our use of casts in deriving recursive types guarantees that the
\textit{rolling} and \textit{unrolling} operations take constant time,
permitting the definition of efficient data accessors for inductive datatypes
defined with them.
However, when using recursive types to encode datatypes one usually desires
efficient solutions to datatype recursion schemes, and the
derivation in Section~\ref{sec:tarskitrans} does not on its own provide this.

Independently, \citet{Me91_Predicative-Universes-Primitive-Recursion} and
\citet{Ge92_Inductive-Coinductive-Types-Iteration-Recursion} developed the
category-theoretic notion of recursive \(F\)-algebras to give the semantics of
the primitive recursion scheme for inductive datatypes, and
\citet{Ge92_Inductive-Coinductive-Types-Iteration-Recursion} and
\citet{matthes02} use this notion in extending a typed lambda calculus with
typing and computation laws for the primitive recursion scheme for datatypes.  
The process of using a particular recursion scheme to directly obtain an
impredicative encoding is folklore knowledge \citep[c.f.][Section
3.6]{AMU05_Iteration-and-Coiteration-for-HO-and-Nested-Datatypes}.
\citet{Ge14_Church-Scott-Encoding} used this process to examine the close
connection between the (co)case-distinction scheme and the Scott encoding, and
the primitive (co)recursion scheme and the Parigot encoding, for (co)inductive
types in a theory with primitive positive recursive types.
Using derived monotone recursive types in Cedille, we follow the same approach
but for encodings of datatypes with induction principles.
This allows us to establish within Cedille that the solution to the primitive
recursion scheme is unique (i.e., that we obtain \emph{initial} recursive
\(F\)-algebras).

\paragraphb{Recursor for Scott-encoded data.}
The non-dependent impredicative encoding we used to equip Scott naturals with
primitive recursion in Section~\ref{sec:lr-nat} is based on a result by
\cite{lepigre+19}.
We thus dub it the \emph{Lepigre-Raffalli} encoding, though they report that the
encoding is in fact due to Parigot.
In earlier work, \citet{parigot88} demonstrated the lambda term that realizes
the primitive recursion scheme for Scott naturals, but the typing of this term
involved reasoning outside the logical framework being used.
The type system in which \cite{lepigre+19} carry out this construction has
built-in notions of least and greatest type fixpoints and a sophisticated form
of subtyping that utilizes ordinals and well-founded circular typing derivations based on
cyclic proof theory
\citep{San02_A-Calculus-of-Circular-Proofs-and-its-Categorical-Semantics}.
Roughly, the correspondence between their type system and that of Cedille's is
so: both theories are Curry-style, enabling a rich subtyping relation which in Cedille
is internalized as \(\mathit{Cast}\); and in defining recursor for
Scott naturals, we replace the circular subtyping derivation with an internally
realized proof of the fact that our derived recursive types are least fixpoints
of type schemes.

Our two-fold generalization of the Lepigre-Raffalli encoding (making it both
generic \emph{and} dependent) is novel.
To the best of our knowledge, the observation that the Lepigre-Raffalli-style
recursion scheme associated to this encoding can be understood as introducing
recursion into the computation laws for the case-distinction scheme is also
novel.
This characterization informs both our minor modification to the encoding in
Section~\ref{sec:lr-nat} for natural numbers (we quantify over types for
both base and step cases of recursion, instead of quantifying over only the
latter as done by \cite{lepigre+19}) and the generic formulation.
Presently, this connection between Lepigre-Raffalli recursion and case
distinction serves a mostly pedagogical role; we leave as future
work the task of providing a more semantic (e.g., category-theoretic) account of
Lepigre-Raffalli recursion.

\paragraphb{Lambda encodings in Cedille.} Work prior to ours describes the
generic derivation of induction for lambda-encoded data in Cedille. This was
first accomplished by \cite{firsov18} for the Church and Mendler encodings,
which do not require recursive types as derived in this paper.
The approach used for the Mendler encoding was then further refined by
\cite{firsov18b} to enable
efficient data accessors, resulting in the first-ever example of a lambda
encoding in type theory with derivable induction, constant-time destructor, and
whose representation requires only linear space. To the best of our knowledge,
this paper establishes that the Scott encoding is the second-ever example of a
lambda encoding enjoying these same properties.
\cite{firsov18} and \cite{firsov18b} also provide a computational
characterization for their encodings, limited to Lambek's lemma and the
computation law for Mendler-style iteration.
For the Parigot and Scott encodings we have presented, we have shown Lambek's
lemma and both the computation \emph{and} extensionality laws for the
case-distinction, iteration, and primitive recursion schemes.

\section{Conclusion and Future Work}
\label{sec:conclusion}
We have shown how to derive monotone recursive types with constant-time
\emph{roll} and \emph{unroll} operations within the type theory of 
Cedille by applying Tarski's least fixpoint theorem to a preorder on types
induced by an internalized notion of type inclusion.
By deriving recursive types within a theory, rather than extending it, we do not
need to rework existing meta-theoretic results to ensure logical consistency or
provide a normalization guarantee.
As applications, we used the derived monotone recursive types to derive two
recursive representations of data in lambda calculus, the Parigot and Scott
encoding, \emph{generically} in a signature $F$.
For both encodings, we derived induction and gave a thorough characterization of
the solutions they admit for case distinction, iteration, and primitive
recursion.
In particular, we showed that with the Scott encoding all three of these schemes
can be efficiently simulated.
This derivation, which builds on a result described by \cite{lepigre+19},
crucially uses the fact that recursive types in Cedille provide \emph{least}
fixpoints of type schemes.

In the authors' opinion, we have demonstrated that lambda encodings in Cedille
provide an adequate basis for programming with inductive datatypes.
Building from this basis to a convenient surface language requires the
generation of monotonicity witnesses from datatype declarations.
Our experience coding such proofs in Cedille leads us to believe they can be
readily mechanized for a large class of positive datatypes,
including infinitary trees and the usual formulation of rose trees.
However, more care is needed for checking monotonicity of nested datatypes
\citep{BM98_Nested-Datatypes,
  AMU05_Iteration-and-Coiteration-for-HO-and-Nested-Datatypes}.

Finally, we believe our developments have raised two interesting questions
for future investigation.
The first of these is the development of a categorical semantics for
Lepigre-Raffalli recursion, which would allow a complete the characterization of
Scott-encoded datatypes whose signatures are positive but non-functorial.
Second, in the derivation of primitive recursion for Scott encodings, leastness
of the fixpoint formed from our derived recursive type operator plays a similar
role to the cyclic subtyping rules of \cite{lepigre+19}.
If this correspondence generalizes, some subset of their type system might be
translatable to Cedille, opening the way to a surface language with a rich
notion of subtyping in the presence of recursive types that is based on
internalized type inclusions.

\section*{Financial Aid}
  We gratefully acknowledge NSF
  support under award 1524519, and DoD support under award
  FA9550-16-1-0082 (MURI program).

\section*{Competing Interests}
The authors declare none.

\bibliography{biblio}

\end{document}